\def\FULL{full}
\newcommand{\iffull}[1]{\ifthenelse{\equal {\FULL}{full}}{#1}{}}
\newcommand{\ifconf}[1]{\ifthenelse{\equal {\FULL}{full}}{}{#1}}
\newlength{\parcor}
\newlength{\halftw}
  \newtheorem{theorem}{Theorem}
  \newtheorem{lemma}{Lemma}
  \newtheorem{claim}{Claim}
  \newtheorem{corollary}{Corollary}
  \newtheorem{example}{Example}
\newtheorem{assumption}{Assumption}
\newtheorem{observation}{Observation}
\newcommand{\ignore}[1]{}
\newcommand{\ess}{\ensuremath{\mathcal{S}}}
\newcommand{\su}{\ensuremath{\mathrm{s}}}
\newcommand{\de}{\ensuremath{\mathrm{\pi}}}
\newcommand{\R}{\ensuremath{\mathcal{R}}}
\newcommand{\Sm}{\ensuremath{\mathcal{S}}}
\newcommand{\La}{\ensuremath{\mathcal{L}}}
\newcommand{\C}{\ensuremath{\mathcal{C}}}
\newcommand{\opt}{\ensuremath{\mathtt{opt}}}
\newcommand{\OPT}{\ensuremath{\mathtt{OPT}}}
\newcommand{\NP}{\ensuremath{\text{NP}}}
\newcommand{\IP}{\ensuremath{\text{P}}}
\newcommand{\IZ}{\mathbb{Z}}
\newcommand{\cov}[1]{\ensuremath{\mathtt{Cov}(#1)}}
\newcommand{\bb}{\ensuremath{\bar{b}}}
\newcommand{\bs}{\ensuremath{\bar{s}}}
\newcommand{\bx}{\ensuremath{\bar{x}}}
\newcommand{\val}{\ensuremath{\mathrm{val}}}
\def\ni{\noindent}
\def\xint{x^{\tt int}}
\begin{document}

\title{On Column-restricted and Priority Covering Integer Programs \thanks{
    Supported by NSERC grant no. 288340 and by an Early Research
    Award. Emails: {\tt deeparnab@gmail.com, elyot@uwaterloo.ca, jochen@uwaterloo.ca}}}

\author{
  Deeparnab Chakrabarty
  \and
  Elyot Grant 
  \and
 Jochen K{\"o}nemann 
}

\ifconf{
  \institute{Department of Combinatorics and Optimization \\
    University of Waterloo, Waterloo, ON, Canada N2L 3G1}}
\iffull{
  \date{University of Waterloo, Waterloo, ON, Canada N2L 3G1}
}

\maketitle

\begin{abstract}
  In a column-restricted covering integer program (CCIP), all the
  non-zero entries of any column of the constraint matrix are
  equal. Such programs capture capacitated versions of covering
  problems. In this paper, we study the approximability of CCIPs, in
  particular, their relation to the integrality gaps of the underlying
  0,1-CIP.

  \hspace{5mm} If the underlying 0,1-CIP has an integrality gap
  $O(\gamma)$, and assuming that the integrality gap of the {\em
    priority version} of the 0,1-CIP is $O(\omega)$, we give a factor
  $O(\gamma + \omega)$ approximation algorithm for the CCIP. Priority
  versions of 0,1-CIPs (PCIPs) naturally capture {\em quality of
    service} type constraints in a covering problem.

  \hspace{5mm} We investigate priority versions of the line (PLC) and
  the (rooted) tree cover (PTC) problems.  Apart from being natural
  objects to study, these problems fall in a class of fundamental
  geometric covering problems. We bound the integrality of certain
  classes of this PCIP by a constant.  Algorithmically, we give a
  polytime exact algorithm for PLC, show that the PTC problem is
  APX-hard, and give a factor $2$-approximation algorithm for it.

\end{abstract}


\section{Introduction}
In a {\em 0,1-covering integer program} (0,1-CIP, in short), we are
given a constraint matrix $A \in \{0,1\}^{m \times n}$, demands $b \in
\IZ^m_+$, non-negative costs $c \in \IZ_+^n$, and upper bounds $d \in
\IZ^n_+$, and the goal is to solve the following integer linear
program (which we denote by \cov{A,b,c,d}).

$$ \min \{ c^Tx \,:\, Ax \geq b, 0 \leq x \leq d, x \mbox{ integer}\}. $$

Problems that can be expressed as 0,1-CIPs are essentially equivalent
to set multi-cover problems, where sets correspond to columns and
elements correspond to rows.  This directly implies that 0,1-CIPs are
rather well understood in terms of approximability: the class admits
efficient $O(\log n)$ approximation algorithms and this is best
possible unless $\NP=\IP$.  Nevertheless, in many cases one can get
better approximations by exploiting the structure of matrix $A$.  For
example, it is well known that whenever $A$ is {\em totally
  unimodular} (TU)(e.g., see \cite{Sc03}), the canonical LP relaxation
of a 0,1-CIP is integral; hence, the existence of efficient algorithms
for solving linear programs immediately yields fast exact algorithms
for such 0,1-CIPs as well.

While a number of general techniques have been developed for obtaining
improved approximation algorithms for structured $0,1$-CIPs, not much is known
for structured non-$0,1$ CIP instances. In this paper, we attempt to
mitigate this problem, by studying the class of {\em column-restricted
  covering integer programs} (CCIPs), where all the non-zero entries 
  of any column of the constraint matrix are equal. Such CIPs arise
  naturally out of $0,1$-CIPs, and the main focus of this paper 
  is to understand how the structure of the underlying 0,1-CIP can be used 
  to derive improved approximation algorithms for CCIPs.
  \\

\noindent
 {\bf Column-Restricted Covering IPs (CCIPs):}  
Given a 0,1-covering problem \cov{A,b,c,d} and a supply vector
$s\in \IZ^n_+$, the corresponding CCIP is obtained as follows. Let
$A[s]$ be the matrix obtained by replacing all the $1$'s in the $j$th
column by $s_j$; that is, $A[s]_{ij} = A_{ij}s_j$ for all $1\le i\le
m, 1\le j\le n$. The column-restricted covering problem is given 
by the following integer program.

\begin{equation}\tag{\cov{A[s],b,c,d}}
 \min \{c^Tx \,:\, A[s]x \geq b, 0 \leq x \leq d, x \mbox{ integer}\}.
\end{equation}
\noindent
CCIPs naturally capture {\em capacitated} versions of 0,1-covering 
problems. To illustrate this we use the following 0,1-covering problem
called the tree covering problem.
The input is a tree
 $T=(V,E)$ rooted at a vertex $r\in V$, a set of {\em segments} ${\cal
    S} \subseteq \{(u,v): u \textrm{ is a child of}~ v\}$,
  non-negative costs $c_j$ for all $j \in \ess$, and demands $b_e
  \in \IZ_+$ for all $e \in E$. An edge $e$ is contained in a segment
  $j=(u,v)$ if $e$ lies on the unique $u,v$-path in $T$.  The goal is
  to find a minimum-cost subset $C$ of segments such that each edge $e
  \in E$ is contained in at least $b_e$ segments of $C$.  When $T$
  is just a line, we call the above problem, the {\em line cover} (LC)
  problem.  In this example, the constraint matrix $A$ has a row for
  each edge of the tree and a column for each segment in $\ess$. 
  It is not too hard to show that this matrix is TU and thus these can be solved exactly
  in polynomial time.

In the above tree cover problem, suppose each segment
$j\in \ess$ also has a capacity supply $s_j$ associated with it, and
call an edge $e$ covered by a collection of segments $C$ iff the total
supply of the segments containing $e$ exceeds the demand of $e$. 
The problem of finding the minimum cost subset of segments covering 
every edge is precisely the column-restricted tree cover problem. The column-restricted 
line cover problem encodes the minimum knapsack problem and is thus 
NP-hard. \\
 
\noindent
For general CIPs, the best known approximation algorithm, due to
Kolliopoulos and Young~\cite{KY05}, has a performance guarantee of
$O(1+ \log \alpha)$, where $\alpha$, called the {\em dilation} of the instance,
denotes the maximum number of non-zero entries in any column of
the constraint matrix. Nothing better is known for the special case of CCIPs
unless one aims for {\em bicriteria} results where solutions 
violate the upper bound constraints $x \leq d$ (see Section \ref{sec:rel-work} for more details). 

In this paper, our main aim is to understand how the approximability
of a given CCIP instance is determined by the structure of the
underlying $0,1$-CIP. In particular, if a $0,1$-CIP has a constant
integrality gap, under what circumstances can one get constant factor
approximation for the corresponding CCIP?  We make
some steps toward finding an answer to this question.

%

In our main result, we show that there is a constant factor
approximation algorithm for CCIP if {\em two} induced $0,1$-CIPs have
constant integrality gap.  The first is the underlying original
0,1-CIP.  The second is a {\em priority} version of the 0,1-CIP (PCIP, in
short), whose constraint matrix is derived from that of the 0,1-CIP
as follows. \\

\noindent
{\bf Priority versions of Covering IPs (PCIPs):}
Given a 0,1-covering problem \cov{A,b,c,d}, a priority supply
vector $s \in \IZ_+^n$, and a priority demand vector $\pi \in \IZ^m_+$,
the corresponding PCIP is as follows. Define $A[s,\pi]$ to be the
following 0,1 matrix
\begin{equation}\label{eq:trcA}
  A[s,\pi]_{ij} = \left\{\begin{array}{l@{\quad:\quad}l}
      1 & A_{ij}=1 \mbox{ and } s_j \geq \pi_i \\
      0 & \mbox{otherwise,} \end{array}\right.
\end{equation}
Thus, a column $j$ covers row $i$, only if its priority supply is
higher than the priority demand of row $i$.  The priority covering
problem is now as follows.
\begin{equation}\tag{\cov{A[s,\pi],\mathds{1},c}}
  \min \{c^Tx \,:\, A[s,\pi]x \geq \mathds{1}, 0 \leq x \leq d, x
  \mbox{ integer}\}.
\end{equation}

We believe that priority covering problems are interesting in their
own right, and they arise quite naturally in covering applications
where one wants to model {\em quality of service} (QoS) or priority
restrictions. For instance,  in the tree cover problem
defined above,
suppose each segment $j$ has a {\em quality of
  service} (QoS) or priority supply $s_j$ associated with it and suppose each edge
$e$ has a QoS or priority demand $\pi_e$ associated with it. We say that a segment
$j$ covers $e$ iff $j$ contains $e$ {\em and} the priority supply of $j$
exceeds the priority demand of $e$. The goal is to find a minimum cost
subset of segments that covers every edge. This is the priority tree cover problem.

Besides being a natural covering problem to study, we show that the
priority tree cover problem is a special case of a classical geometric
covering problem: that of finding a minimum cost cover of points by
axis-parallel rectangles in $3$ dimensions.  Finding a constant factor
approximation algorithm for this problem, even when the rectangles
have uniform cost, is a long standing open problem. 

We show that although the tree cover is polynomial time solvable, the priority
tree cover problem is APX-hard. 
We complement this with a factor $2$ approximation for
the problem. Furthermore, we present constant upper bounds for
the integrality gap of this PCIP in a number of special cases, implying
constant upper bounds on the corresponding CCIPs in these special cases.
We refer the reader to Section \ref{sec:tech} for a formal statement of our results,
which we give after summarizing works related to our paper.

\subsection{Related work}\label{sec:rel-work}

There is a rich and long line of work
(\cite{Do82,Ho82,RV93,Sr99,Sr06}) on approximation algorithms for
CIPs, of which we state the most relevant to our work.  Assuming no
upper bounds on the variables, Srinivasan \cite{Sr99} gave a $O(1 +
\log \alpha)$-approximation to the problem (where $\alpha$ is the
dilation as before).  Later on, Kolliopoulos and Young \cite{KY05}
obtained the same approximation factor, respecting the upper
bounds. However, these algorithms didn't give any better results when
special structure of the constraint matrix was known.  On the hardness
side, Trevisan~\cite{Tr01} showed that it is NP-hard to obtain a
$(\log \alpha - O(\log\log \alpha))$-approximation algorithm even for
0,1-CIPs.

The most relevant work to this paper is that of Kolliopoulos
\cite{Ko03}. The author studies CCIPs which satisfy a 
rather strong assumption, called the {\em no bottleneck
  assumption}, that the supply of any column is smaller than the
demand of any row. Kolliopoulos \cite{Ko03} shows that 
 if one is allowed to
violate the upper bounds by a multiplicative constant, then the
integrality gap of the CCIP is within a constant factor of that of the
original 0,1-CIP\footnote{Such a result is implicit in the paper;
  the author only states a $O(\log \alpha)$ integrality gap.}. As the
author notes such a violation is necessary; otherwise the CCIP has
unbounded integrality gap. If one is not allowed to violated upper
bounds, nothing better than the result of \cite{KY05} is known for
the special case of CCIPs. 

 Our work on CCIPs parallels a large body of work on column-restricted
 {\em packing} integer programs (CPIPs). Assuming the {\em
   no-bottleneck assumption}, Kolliopoulos and Stein \cite{KS04} show
 that CPIPs can be approximated asymptotically as well as the
 corresponding 0,1-PIPs. Chekuri et al.~\cite{CMS07} subsequently
 improve the constants in the result from \cite{KS04}. These results
 imply constant factor approximations for the column-restricted tree
 {\em packing} problem under the no-bottleneck assumption. Without the
 no-bottleneck assumption, however, only polylogarithmic approximation
 is known for the problem \cite{CEK09}.

The only work on priority versions of covering problems that we are
aware of is due to Charikar, Naor and Schieber~\cite{CNS04} who
studied the priority Steiner tree and forest problems in the context of QoS
management in a network multicasting application.  Charikar et
al. present a $O(\log n)$-approximation algorithm for the problem, and
Chuzhoy et al.~\cite{CG+08} later show that no efficient $o(\log\log
n)$ approximation algorithm can exist unless $\NP \subseteq
\text{DTIME}(n^{\log\log\log n})$ ($n$ is the number of vertices).

To the best of our knowledge, the column-restricted or priority
versions of the line and tree cover problem have not been studied. The
best known approximation algorithm known for both is the $O(\log n)$
factor implied by the results of \cite{KY05} stated above. However,
upon completion of our work, Nitish Korula \cite{Ko09} pointed out to
us that a $4$-approximation for column-restricted line cover is
implicit in a result of Bar-Noy et al. \cite{BarNoy}.  We remark that
their algorithm is not LP-based, although our general result on CCIPs
is.

\subsection{Technical Contributions and Formal Statement of Results}\label{sec:tech}
\noindent
Given a 0,1-CIP \cov{A,b,c,d}, we obtain its {\em canonical LP
  relaxation} by removing the integrality constraint.  The {\em
  integrality gap} of the CIP is defined as the supremum of the ratio
of optimal IP value to optimal LP value, taken over all non-negative 
integral vectors $b,c$, and $d$. The integrality gap of an IP
captures how much the integrality constraint affects the optimum, and
is an indicator of the {\em strength} of a linear programming
formulation.  \\

\noindent
{\bf CCIPs: }
Suppose the CCIP is \cov{A[s],b,c,d}.
We make the following two assumptions about the integrality gaps 
of the 0,1 covering programs, both the original 0,1-CIP and the 
priority version of the 0,1-CIP.

\begin{assumption}\label{as:1}
  The integrality gap of the original 0,1-CIP is $\gamma \ge
  1$. Specifically, for any non-negative integral vectors
  $b,c$, and $d$, if the canonical LP
  relaxation to the CIP has a fractional solution $x$, then one can
  find in polynomial time an integral feasible solution to the CIP of
  cost at most $\gamma \cdot c^T x$. We stress here that the entries
  of $b,c,d$ could be $0$ as well as $\infty$.
\end{assumption}

\begin{assumption}\label{as:2}
  The integrality gap of the PCIP is $\omega \ge 1$.  Specifically,
  for any non-negative integral vectors $s,\pi,c$, if the canonical LP
  relaxation to the PCIP has a fractional solution $x$, then one can
  find in polynomial time, an integral feasible solution to the PCIP
  of cost at most $\omega \cdot c^T x$.
\end{assumption}

We give an LP-based approximation algorithm for solving CCIPs.  Since
the canonical LP relaxation of a CCIP can have unbounded integrality
gap, we strengthen it by adding a set of valid constraints called the
{\em knapsack cover constraints}.  We show that the integrality gap of
this strengthened LP is $O(\gamma + \omega)$, and can be used to give
a polynomial time approximation algorithm.

\begin{theorem}\label{thm:1}
  Under Assumptions \ref{as:1} and \ref{as:2}, there is a
  $(24\gamma+8\omega)$-approximation algorithm for column-restricted
  CIPs. 
\end{theorem}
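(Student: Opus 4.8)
The plan is to round a fractional solution $x^*$ to the strengthened LP (the canonical CCIP relaxation augmented with knapsack-cover constraints) in two conceptually separate phases, and to charge each phase to one of the two integrality gaps $\gamma$ and $\omega$. First I would set up the standard machinery of knapsack-cover (KC) inequalities: for every row $i$ and every candidate set $F$ of columns that by themselves fail to satisfy row $i$, impose the inequality $\sum_{j\notin F} A[s]_{ij}^{(b_i - \sum_{j\in F}A[s]_{ij})}\, x_j \ge b_i - \sum_{j\in F} A[s]_{ij}$ where entries are truncated at the residual demand; the key point (as in Carr--Fleischer--Leung--Phillips / Kolliopoulos--Young) is that although there are exponentially many such constraints, one can either separate approximately or simply work with the single constraint induced by the ``large'' columns the algorithm has already committed to, so the LP is usable in polynomial time.

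The core of the argument is a partition of the columns, separately for each row, into \emph{big} and \emph{small} relative to that row's demand: call column $j$ big for row $i$ if $s_j \ge b_i$ (it alone covers the row) and small otherwise. This induces a natural decomposition of the covering requirement of row $i$ into a ``priority-type'' part handled by big columns and a ``scaled $0,1$-type'' part handled by small columns. I would argue that every row $i$ is, up to losing a constant factor, either mostly satisfied by big columns or mostly satisfied by small columns in the fractional solution, and treat the two cases with two different rounding subroutines:

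\emph{(i) Big-column/priority phase.} For rows that rely on big columns, observe that a big column covering row $i$ is exactly a column $j$ with $A_{ij}=1$ and $s_j \ge b_i$ — this is precisely the priority-covering relation $A[s,\pi]$ with $\pi_i := b_i$. Hence the restriction of $x^*$ (suitably scaled up by a constant to restore feasibility after discarding small-column contributions) is a feasible fractional solution to an instance of the PCIP, and Assumption~\ref{as:2} yields an integral cover of these rows at cost $O(\omega)$ times the LP cost.

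\emph{(ii) Small-column/$0,1$ phase.} For rows that rely on small columns, I would bucket columns by supply into geometric classes $s_j \in [2^t, 2^{t+1})$ and, within the rows assigned to a given class (or a bounded window of classes, to keep demands and supplies within a constant factor), round the supplies so that the residual system behaves like a genuine $0,1$-CIP: after the KC inequalities have killed the ``tail'' of the demand, in each bucket the problem of covering the residual demand with small columns of roughly equal supply reduces (losing a constant) to a $0,1$ set-multicover instance on the same incidence structure, whose integrality gap is $\gamma$ by Assumption~\ref{as:1}. Summing over the $O(1)$ relevant bucket-windows — here is where one must be careful that a column is used in only a bounded number of subproblems, which the KC constraints and the big/small split guarantee — gives an integral cover of these rows at cost $O(\gamma)$ times the LP cost. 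Taking the union of the two partial covers yields a feasible integral solution to the CCIP of cost $O(\gamma+\omega)$ times the LP optimum; tracking the constants through the scalings (a factor for restoring feasibility after dropping each side, a factor~$2$ from geometric bucketing, etc.) gives the stated $24\gamma + 8\omega$.

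The main obstacle I anticipate is the bookkeeping that makes the small-column phase genuinely reduce to a \emph{single} $0,1$-CIP invocation rather than $\Theta(\log)$ of them: naively bucketing by powers of two and applying Assumption~\ref{as:1} per bucket would cost an extra logarithmic factor. The crux is to use the knapsack-cover constraints to argue that, for each row, only a constant number of supply-buckets can each contribute more than a constant fraction of its demand — so the fractional mass can be redistributed so that each row is ``charged'' to $O(1)$ buckets and each column appears in $O(1)$ residual $0,1$-instances — and then to combine these into one $0,1$-CIP (or $O(1)$ many) whose LP solution is a constant-scaling of $x^*$. Getting this reduction airtight while keeping the constants as small as claimed is the delicate part; everything else is standard LP rounding plus black-box application of Assumptions~\ref{as:1} and~\ref{as:2}.
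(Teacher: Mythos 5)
Your high-level plan matches the paper's: strengthen with KC inequalities, split each row's coverage into an ``$i$-large'' (priority/PCIP) part and an ``$i$-small'' (grouping-and-scaling/0,1-CIP) part, invoke Assumption~\ref{as:2} on the former and Assumption~\ref{as:1} on the latter, and combine. You also correctly observe that the PCIP relation appears by setting $\pi_i$ to (roughly) $b_i$, and that the small-column phase is where the real work is.

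However, the place you flag as ``the delicate part'' is where your sketch actually diverges from what makes the argument work, and your proposed fix would not close the gap. First, the worry about a $\Theta(\log)$ factor from per-bucket invocations of Assumption~\ref{as:1} is a red herring for the \emph{cost}: every column lives in exactly one supply class $\C^{(t)}$, and the class-$t$ rounding is charged only to $\sum_{j\in\C^{(t)}}c_j y_j$, so summing over $t$ costs a single factor $O(\gamma)$, never $O(\gamma\log)$. The genuine danger is to \emph{feasibility}: after scaling the class-$t$ matrix down by $\bs^{(t)}$ one must round the residual demand $\bb_i^{(t)}/\bs^{(t)}$ down to an integer before Assumption~\ref{as:1} applies, and for classes that contribute only a sliver of row $i$'s demand the floor drops to $0$, discarding that contribution entirely. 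Prior work used the no-bottleneck assumption to control this loss. Your suggestion — that KC inequalities force only $O(1)$ classes per row to carry non-negligible mass — is not what the paper proves and is not in general true; a row's fractional coverage can be spread over many classes. What the paper actually does (Claim inside Lemma~\ref{lem:small-rows}) is exploit the geometric decay of $\bs^{(t)}$: with $t_i$ the first $i$-small class, $\sum_{t\ge t_i}\bs^{(t)}\le 2\bs^{(t_i)}=\bb_i$, so the total demand ``killed'' by floors across the bad set $S_i=\{t:3\bb_i^{(t)}<\bs^{(t)}\}$ is at most $\bb_i/3$, leaving the remaining classes (where the floor loses at most half) to still supply $\ge\bb_i$. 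This is a bound on aggregate loss, not a bound on the number of participating buckets, and it requires the multiplicative slack (the factor $3$, the factor $6y$, and $\alpha=1/24$) that produces the $24\gamma$. Without this geometric-sum argument, the small-row phase of your sketch does not go through. You also omit the bootstrapping step (Lemma~\ref{lem:res}) that reintroduces the columns in $F$ at value $d_j$ and converts a $\beta$-approximate residual solution into a $\max\{1/\alpha,\beta\}$-approximation to the original CCIP; this is where $\alpha$ and hence the $24$ is actually fixed.
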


Knapsack cover constraints to strengthen LP relaxations were
introduced in \cite{B75,HJP75,Wo75}; Carr et al. \cite{CF+00} were the
first to employ them in the design approximation algorithms. The paper
of Kolliopoulos and Young \cite{KY05} also use these to get their
result on general CIPs.

The main technique in the design of algorithms for column-restricted
problems is {\em grouping-and-scaling} developed by Kolliopoulos and
Stein \cite{KS01,KS04} for packing problems, and later used by
Kolliopoulos \cite{Ko03} in the covering context.  In this technique,
the {\em columns} of the matrix are divided into groups of `close-by'
supply values; in a single group, the supply values are then scaled to
be the same; for a single group, the integrality gap of the original
0,1-CIP is invoked to get an integral solution for that group; the
final solution is a `union' of the solutions over all groups.

There are two issues in applying the technique to the new strengthened
LP relaxation of our problem.  Firstly, although the original
constraint matrix is column-restricted, the new constraint matrix with
the knapsack cover constraints is not.  Secondly, unless additional
assumptions are made, the current grouping-and-scaling analysis
doesn't give a handle on the degree of violation of the upper bound
constraints. This is the reason why Kolliopoulos \cite{Ko03} needs the
strong no-bottleneck assumption.

We get around the first difficulty by grouping the {\em rows} as well,
into those that get most of their coverage from columns not affected
by the knapsack constraints, and the remainder. On the first group of
rows, we apply a subtle modification to the vanilla
grouping-and-scaling analysis and obtain a $O(\gamma)$
approximate feasible solution satisfying these rows; we then show that
one can treat the remainder of the rows as a PCIP and get a
$O(\omega)$ approximate feasible solution satisfying them, using
Assumption 2. Combining the two gives the $O(\gamma + \omega)$ factor.
The full details are given in Section 2.

We stress here that apart from the integrality gap
assumptions on the 0,1-CIPs, we do not make any other assumption
(like the no-bottleneck assumption). In fact, we can use the modified
analysis of the grouping-and-scaling technique to get a similar result
as \cite{Ko03} for approximating CCIPs violating the upper-bound
constraints, under a {\em weaker} assumption than the no-bottleneck
assumption. The no-bottleneck assumption states that the
supply of {\em any} column is less than the demand of {\em any} row.
In particular, even though a column has entry $0$ on a certain row,
its supply needs to be less than the demand of that row. We show that
if we weaken the no-bottleneck assumption to assuming that the supply
of a column $j$ is less than the demand of any row $i$ only if
$A[s]_{ij}$ is positive, a similar result can be obtained via our modified analysis.

\begin{theorem}\label{thm:2}
  Under assumption \ref{as:1} and assuming $A_{ij} s_j \le b_i$, for
  all $i,j$, given a fractional solution $x$ to the canonical LP
  relaxation of \cov{A[s],b,c,d}, one can find an integral solution
  $\xint$ whose cost $c\cdot \xint \le 10\gamma (c\cdot x)$ and $\xint
  \le 10d$.
\end{theorem}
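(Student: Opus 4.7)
The plan is to adapt the grouping-and-scaling technique of Kolliopoulos \cite{Ko03}, reorganising the accounting so that the weakened hypothesis $A_{ij} s_j \le b_i$ -- which constrains only pairs on the support of $A$ -- is exactly what is needed to control the rounding losses.

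I would first partition the columns by dyadic supply magnitudes, $G_k := \{j : 2^{k-1} < s_j \le 2^k\}$, and replace each supply by the common value $\bar s_k := 2^{k-1}$ within its group. Because $s_j \le 2 \bar s_k$ on $G_k$, the vector $2x$ remains feasible in the scaled instance: $\sum_j A_{ij} \bar s_{k(j)}(2 x_j) \ge b_i$ for every row $i$. Fix a scaling constant $\alpha = 10$. For each group $k$, set $\beta_i^{(k)} := \alpha \bar s_k \sum_{j \in G_k} A_{ij} x_j$ and form a 0,1-CIP on the columns $G_k$ with row demands $D_i^{(k)} := \lfloor \beta_i^{(k)}/\bar s_k \rfloor$, costs $c|_{G_k}$, and upper bounds $\alpha d|_{G_k}$. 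The vector $\alpha x|_{G_k}$ is a feasible fractional solution to this 0,1-CIP since $\sum_{j \in G_k} A_{ij}(\alpha x_j) = \beta_i^{(k)}/\bar s_k \ge D_i^{(k)}$, so Assumption~\ref{as:1} yields an integer $y^{(k)} \le \alpha d|_{G_k}$ with cost $c\cdot y^{(k)} \le \alpha\gamma\, c\cdot x|_{G_k}$.

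Setting $\xint := \sum_k y^{(k)}$, the fact that the $G_k$ partition the columns immediately gives $\xint \le \alpha d = 10 d$ and $c\cdot \xint \le \alpha\gamma\, c\cdot x = 10\gamma\, c\cdot x$. For row-feasibility, the integer supply at row $i$ satisfies
\[
\sum_j A_{ij} s_j \xint_j \;\ge\; \sum_k \bar s_k \sum_{j \in G_k} A_{ij} y^{(k)}_j \;\ge\; \sum_k \bar s_k D_i^{(k)} \;\ge\; \sum_k \beta_i^{(k)} - \sum_{k \in K_i^*} \bar s_k,
\]
where $K_i^* := \{k : A_{ij}=1 \text{ for some } j \in G_k\}$. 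A direct calculation using $\bar s_{k(j)} \ge s_j/2$ gives $\sum_k \beta_i^{(k)} \ge \alpha b_i/2 = 5 b_i$, so the remaining task is to bound $\sum_{k \in K_i^*} \bar s_k$.

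This is the crux, and is where the weakened no-bottleneck hypothesis enters. For any $k \in K_i^*$, pick $j \in G_k$ with $A_{ij}=1$; then $\bar s_k \le s_j \le b_i$ by hypothesis, so $K_i^* \subseteq \{k : 2^{k-1} \le b_i\}$ and a geometric sum gives $\sum_{k \in K_i^*} \bar s_k \le 2 b_i$. The row coverage is therefore at least $5 b_i - 2 b_i = 3 b_i \ge b_i$, completing feasibility. Under the textbook no-bottleneck assumption this step is immediate because $\bar s_k \le b_i$ would hold for every $k$; the insight here is that groups containing no column with $A_{ij}=1$ contribute nothing to $\beta_i^{(k)}$ and may simply be excluded from $K_i^*$, allowing the geometric-series estimate to survive under the weakened hypothesis.
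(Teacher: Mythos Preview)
Your argument is correct and follows the same grouping-and-scaling skeleton as the paper, with the same decisive observation: only groups that actually meet row $i$ contribute to the floor loss, and on those groups the weakened hypothesis $A_{ij}s_j\le b_i$ caps the supplies so that the geometric series is bounded by a constant times $b_i$.

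Where you differ is in the bookkeeping of the scaling-back step. The paper, when reconstituting coverage for row $i$ from class $t$, multiplies not by the common scaled supply $2^{-(t+1)}s_{\max}$ but by the row-specific quantity $m^t_i=\min\{s_j:A_{ij}=1,\ j\in\C^{(t)}\}$, and then controls the loss via the set $S_i=\{t:5b^t_i<m^t_i\}$ together with $\lfloor a\rfloor\ge a/2$ for $a\ge1$ and the claim $\sum_t m^t_i\le 3m_i\le 3b_i$. You instead keep the common value $\bar s_k$, use the cruder but more direct bound $\bar s_k\lfloor\beta_i^{(k)}/\bar s_k\rfloor\ge\beta_i^{(k)}-\bar s_k$, and restrict the resulting loss sum to $K_i^*$. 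Both routes lead to a $5b_i-O(b_i)\ge b_i$ estimate with the same constant $10$; yours is a bit shorter and avoids introducing the auxiliary $m^t_i$, while the paper's version makes more explicit exactly where the departure from the earlier no-bottleneck analyses occurs. One tiny nit: your inequality $\bar s_k\le s_j$ should be strict (since $G_k$ is defined with $2^{k-1}<s_j$), but this does not affect the conclusion $\bar s_k<b_i$ and the geometric bound $\sum_{k\in K_i^*}\bar s_k<2b_i$.
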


\paragraph{Priority Covering Problems.}
In the following, we use PLC and PTC to refer to the priority versions
of the line cover and tree cover problems, respectively. 
Recall that the constraint matrices for line and tree cover problems
are totally unimodular, and the integrality of the corresponding 
0,1-covering problems is therefore $1$ in both case. It is
interesting to note that the 0,1-coefficient matrices for PLC and 
PTC are not totally unimodular in general. 
The following integrality gap bound is obtained via a primal-dual
algorithm.

\begin{theorem}\label{thm:plc-gap}
  The canonical LP for priority line cover has an integrality gap of at least $3/2$ and 
  at most $2$.
\end{theorem}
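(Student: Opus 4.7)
The plan is to prove the two bounds separately. For the upper bound of $2$, I would design a primal-dual algorithm based on the dual of the canonical LP. The dual assigns a variable $y_e \geq 0$ to each edge $e$, maximizes $\sum_e y_e$, and has one constraint per segment: $\sum_{e:\, e \in j,\, \pi_e \leq s_j} y_e \leq c_j$ for each segment $j$.

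The algorithm maintains a set $F$ of chosen segments and dual values $y$, both initially empty/zero. In each iteration, while some edge is not priority-covered by $F$, it selects such an edge (say the leftmost uncovered one, call it $e^*$), raises $y_{e^*}$ until some segment $j^*$ with $e^* \in j^*$ and $s_{j^*} \geq \pi_{e^*}$ becomes tight, and adds $j^*$ to $F$. Once $F$ is feasible, a reverse-delete pass examines segments in reverse order of insertion and discards any whose removal leaves $F$ feasible, yielding a minimal subset $F^*$. Because each $j \in F^*$ is tight, the cost decomposes as
\[
\sum_{j \in F^*} c_j \;=\; \sum_{j \in F^*} \sum_{e \,:\, e \in j,\, \pi_e \leq s_j} y_e \;=\; \sum_e y_e \cdot |F^*(e)|,
\]
where $F^*(e)$ is the set of segments of $F^*$ that priority-cover $e$. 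Hence it suffices to prove the structural lemma: $|F^*(e)| \leq 2$ for every edge $e$ with $y_e > 0$, for then weak LP duality gives $\mathrm{cost}(F^*) \leq 2\sum_e y_e \leq 2\cdot \mathrm{OPT}_{\mathrm{LP}}$.

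Establishing the structural lemma is the main obstacle. The argument starts from the observation that at the moment $y_e$ was raised, no previously inserted segment can priority-cover $e$ (else $e$ would already have been covered and its dual would have stopped growing at zero). Thus $F^*(e)$ consists of the segment $j^*$ that first covered $e$, plus possibly segments inserted later whose intervals happen to extend across $e$'s position with supply at least $\pi_e$. The hard part is ruling out more than one such ``later'' segment surviving reverse-delete: among any two later segments priority-covering $e$, the interval structure on the line places one between the other and $j^*$, and the witness-edge property of reverse-delete (each kept segment has an edge uniquely covered by it in $F^*$) forces a contradiction through a careful case analysis on where the two segments' witnesses lie relative to $e$ and to each other. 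This interval-based witness argument is what ultimately drives the factor $2$.

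For the lower bound of $3/2$, I would exhibit an explicit small PLC instance whose canonical LP has optimum strictly less than $2/3$ of the integral optimum. The construction uses priorities to simulate a non-interval covering pattern: a single segment with low priority supply can ``skip over'' a high-priority edge, which enables a triangle-like family of segments in which each segment priority-covers a distinct pair of a triple of edges but not the third. Embedding and chaining such gadgets so the half-integral LP solution pays $2$ while any integral solution is forced to pay $3$ yields the required $3/2$ gap.
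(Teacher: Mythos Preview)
Your lower-bound sketch is in the right spirit and close to the paper's construction: priorities let a long segment ``skip'' a high-demand edge, which yields a triangle $\{j_1,j_2,j_3\}$ on three edges, and chaining such gadgets along the line pushes the gap to $3/2$ (the paper does exactly this with alternating demands $1,2$ and alternating supplies $2,1$).

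The upper-bound argument, however, has a genuine gap. The edge-selection rule is not a cosmetic choice: the paper raises the dual on the \emph{largest} (highest-priority) uncovered edge, not the leftmost one, and this is precisely what makes the structural lemma $|F^*(e)|\le 2$ true. With your leftmost rule the lemma is false. Take edges $e_1,\dots,e_6$ with priorities $1,3,1,2,1,1$ and segments
\[
j_1=[1,2],\,s_{j_1}=3,\,c_{j_1}=1;\quad
j_2=[1,4],\,s_{j_2}=2,\,c_{j_2}=2;\quad
j_3=[1,6],\,s_{j_3}=1,\,c_{j_3}=3.
\]
Your algorithm processes $e_1$ (sets $y_{e_1}=1$, adds $j_1$), then $e_3$ (sets $y_{e_3}=1$, adds $j_2$), then $e_5$ (sets $y_{e_5}=1$, adds $j_3$). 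Reverse delete keeps all three, since $j_1,j_2,j_3$ are uniquely needed for $e_2,e_4,e_6$ respectively. Now $e_1$ has $y_{e_1}>0$ and is priority-covered by all three segments, so $|F^*(e_1)|=3$. Extending the pattern to $k$ nested segments gives $|F^*(e_1)|=k$, and the cost-to-dual ratio of your algorithm grows like $(k+1)/2$, so the dual certificate cannot establish factor~$2$.

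The reason the paper's ``largest uncovered edge'' rule works is that any tight segment added in that iteration covers \emph{every} still-uncovered edge it contains (they all have smaller demand), so once an edge $e$ receives positive dual, later iterations are processing edges of strictly smaller priority, and any later-added segment that happens to cover $e$ can be shown redundant in reverse delete except for at most one on each side of $e$. Your leftmost rule lacks this monotonicity: later iterations can process edges of \emph{higher} priority than $e$, forcing the addition of higher-supply segments that legitimately cover $e$ and cannot be pruned. To repair the proof you need to change the selection rule to highest-demand (and, as in the paper, it is convenient to add both the leftmost- and rightmost-extending tight segments in each step).
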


In the case of tree cover, we obtain constant upper bounds on
the integrality gap for the case $c=\mathds{1}$, that is, for the
minimum cardinality version of the problem.  We believe that the PCIP
for the tree cover problem with general costs also has a constant
integrality gap.  On the negative side, we can show an integrality gap
of at least $\frac{e}{e-1}$.

\begin{theorem}\label{thm:ptc-gap}
  The canonical LP for {\em unweighted} PTC has an integrality gap of
  at most $6$.
\end{theorem}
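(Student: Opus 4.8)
The plan is to give a rounding algorithm for the canonical LP relaxation of unweighted PTC that loses only a constant factor. Let $T=(V,E)$ be rooted at $r$, with segments $\ess$, priority supplies $s_j$ for $j \in \ess$ and priority demands $\pi_e$ for $e \in E$; a segment $j$ covers an edge $e$ iff $e$ lies on the path defining $j$ \emph{and} $s_j \ge \pi_e$. Let $x$ be an optimal fractional solution to $\cov{A[s,\pi],\mathds{1},c}$ with $c = \mathds{1}$. The key structural observation I would exploit is that each segment $j=(u,v)$ occupies a contiguous root-to-leaf sub-path of $T$, so for a \emph{fixed} priority threshold $t$, the sub-instance consisting only of edges $e$ with $\pi_e = t$ and segments $j$ with $s_j \ge t$ is an ordinary (uncapacitated, TU) tree cover instance — its integrality gap is $1$. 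The difficulty is that a single segment helps edges of many different priorities at once, so we cannot simply solve each threshold independently and add up costs without paying a $\log$-type penalty; the real work is to show a constant suffices.

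First I would bucket the edges by priority and, more importantly, use the tree structure to argue that the relevant thresholds interact in a laminar way. For an edge $e$, let $\ess_e = \{ j : j \text{ covers } e\}$; feasibility says $\sum_{j \in \ess_e} x_j \ge 1$. I would split the analysis at each edge into a ``long-segment'' part and a ``short-segment'' part relative to how far the covering segment extends above $e$ toward the root, mirroring the split used in the proof of Theorem \ref{thm:1} (rows that get most of their coverage from one type of column versus the other). Concretely, say edge $e$ is \emph{up-covered} if at least half of its LP mass $\sum_{j \in \ess_e} x_j$ comes from segments $j$ whose top endpoint is a strict ancestor of $e$'s parent, and \emph{locally covered} otherwise. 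For locally covered edges, the covering segments all have their top endpoint equal to $e$ or its parent, which severely limits how many distinct edges a single segment can be ``locally'' responsible for along any root-leaf path; here a direct LP-rounding or greedy argument on the scaled solution $2x$ should give an $O(1)$-cost integral cover. For up-covered edges, I would process the tree top-down, and when a segment is ``long'' for $e$ it is long for a whole contiguous block of ancestors-below-its-top, so the contiguity lets one charge via a clever potential/interval argument rather than per-edge.

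The main obstacle — and the step I expect to absorb most of the effort — is handling the up-covered edges while respecting that priorities are \emph{not} monotone along a root-leaf path: a deep edge can have a high demand that a segment passing over a shallow low-demand edge does not satisfy. I would deal with this by a second-level decomposition: group the demand values into geometric scales $[2^\ell, 2^{\ell+1})$ and observe that, because $x$ is feasible, for each edge the LP mass restricted to segments whose supply lies in the single scale containing $\pi_e$ need not be large, but the mass from segments of \emph{strictly higher} scale, summed over scales, telescopes. Combining (i) the TU integrality of each fixed-scale sub-instance, (ii) the factor-$2$ loss from the up/local split, and (iii) a factor absorbed in reconciling the geometric scales, yields a constant; bookkeeping the constants to land at $6$ is then a routine optimization of the split thresholds. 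I would also separately exhibit a small instance (a short path with two or three priority levels and fractional values $1/k$) giving the claimed $\tfrac{e}{e-1}$ lower bound to show the gap is genuinely bounded away from $1$, though only the upper bound of $6$ is asserted in the theorem.
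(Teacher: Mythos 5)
Your split into \emph{locally covered} edges (half the LP mass from segments topping out at $e$'s parent) and \emph{up-covered} edges is a reasonable start, and the locally-covered half actually closes cleanly in the unweighted case: each segment is the ``topping-out'' segment for at most one edge, so the number of locally covered edges is at most $2\sum_j x_j$, and one can afford to buy one segment per such edge. The problem is the up-covered half, which you yourself flag as absorbing most of the effort but never actually resolve. The claimed telescoping (``the mass from segments of strictly higher scale, summed over scales, telescopes'') has no basis in the LP: feasibility gives $\sum_{j \ni e, s_j \ge \pi_e} x_j \ge 1$ at each edge separately, with no relation across priority scales, so summing the per-scale TU sub-instances naively costs you a $\log(\text{number of scales})$ factor — precisely the penalty you said at the outset you need to avoid. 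Nothing in the sketch supplies the mechanism that collapses this to a constant, so the proposal as written does not establish the bound.

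The paper's proof takes a different route that sidesteps this entirely. It never buckets by priority. Instead it fixes a \emph{universal} decomposition of the tree's edge set into $p$ disjoint root-to-leaf-style paths $E_1,\dots,E_p$ ($p$ = number of leaves), and for each path $E_i$ it manufactures a feasible fractional PLC solution $x^{(i)}$ from $x$: segments whose first or last covered edge lies on $E_i$ (``local'' segments) are copied verbatim, while segments passing through $E_i$ into a child path $E_q$ (``$iq$-global'') are sorted by supply and a prefix of total weight at most $1$ is retained (truncating the last one). Local segments are local for at most two paths, contributing $\le 2\sum_j x_j$; each parent-child pair contributes $\le 1$ to the global total, and there are at most $p \le \sum_j x_j$ such pairs (each leaf edge forces unit LP mass), giving $\sum_i \sum_j x^{(i)}_j \le 3\sum_j x_j$. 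Applying the PLC integrality gap of $2$ (Theorem~\ref{thm:plc-gap}) on each path then yields the factor $6$. Note the constant $3$ comes from $2+1$, and the final factor from $2\cdot 3$ — it is not obtained by ``optimizing split thresholds'' as you suggest, but from this path decomposition and the already-proved PLC bound. If you want to salvage your approach, you would need to replace the telescoping step with an actual argument; as it stands the gap is real.
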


We obtain the upper bound by taking a given PTC instance and a
fractional solution to its canonical LP, and decomposing it into a
collection of PLC instances with corresponding fractional solutions,
with the following two properties. First, the total cost of the
fractional solutions of the PLC instances is within a constant of the
cost of the fractional solution of the PTC instance. Second, union of
integral solutions to the PLC instances gives an integral solution to
the PTC instance. The upper bound follows from Theorem
\ref{thm:plc-gap}.  Using Theorem \ref{thm:1}, we get the following as
an immediate corollary.

\begin{corollary}
  There are $O(1)$-approximation algorithms for column-restricted line
  cover and the cardinality version of the column-restricted tree
  cover.
\end{corollary}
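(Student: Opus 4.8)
The plan is to obtain both statements by instantiating Theorem~\ref{thm:1}, so all that is needed is to pin down the two integrality-gap parameters $\gamma$ and $\omega$ for each problem. Recall that the constraint matrix of the line cover problem is an interval (consecutive-ones) matrix and that of the tree cover problem is totally unimodular (as noted in Section~\ref{sec:tech}); in either case the canonical LP of the underlying $0,1$-CIP is integral, so Assumption~\ref{as:1} holds with $\gamma=1$, the required ``rounding'' algorithm being nothing more than solving the LP and reading off an integral optimal basic solution. For the priority parameter, the priority version of line cover is exactly PLC and the priority version of tree cover is exactly PTC, so Theorems~\ref{thm:plc-gap} and~\ref{thm:ptc-gap} furnish the constructive bounds $\omega\le 2$ and (for unit costs) $\omega\le 6$.

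For column-restricted line cover the instance $\cov{A[s],b,c,d}$ carries arbitrary costs, and Theorem~\ref{thm:plc-gap} bounds the integrality gap of the associated PCIP (with arbitrary costs) by $2$; thus Assumptions~\ref{as:1} and~\ref{as:2} hold with $\gamma=1$ and $\omega=2$, and Theorem~\ref{thm:1} delivers a $(24+16)=40$-approximation. For the cardinality version of column-restricted tree cover we have $c=\mathds{1}$. The single point that needs attention is that Theorem~\ref{thm:ptc-gap} bounds the gap only of \emph{unweighted} PTC, whereas Assumption~\ref{as:2}, as invoked inside Theorem~\ref{thm:1}, asks for a bound with general costs. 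I would dispatch this by observing that the reduction underlying Theorem~\ref{thm:1} --- the row/column grouping and supply/priority rescaling sketched in Section~\ref{sec:tech} --- never touches the cost vector, so when it is handed a unit-cost CCIP, every priority subproblem it produces is a unit-cost PTC instance; hence it suffices to invoke Assumption~\ref{as:2} only on unit-cost instances, where $\omega\le 6$ by Theorem~\ref{thm:ptc-gap}. Feeding $\gamma=1$ and $\omega=6$ into Theorem~\ref{thm:1} yields a $(24+48)=72$-approximation.

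Accordingly, the only non-mechanical step is the bookkeeping in the previous paragraph --- verifying that unit costs are preserved throughout the machinery of Theorem~\ref{thm:1}, so that the weaker unit-cost hypothesis of Theorem~\ref{thm:ptc-gap} genuinely applies; everything else is substitution of constants. Polynomial running time is immediate, since the $\gamma$- and $\omega$-rounding procedures are polynomial time (LP solving, and the primal-dual and decomposition algorithms behind Theorems~\ref{thm:plc-gap} and~\ref{thm:ptc-gap}), and the reduction in Theorem~\ref{thm:1} is itself polynomial and constructive.
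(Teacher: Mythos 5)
Your proposal is correct and follows essentially the same route as the paper: the corollary is stated as an immediate consequence of Theorem~\ref{thm:1} together with $\gamma=1$ (from total unimodularity of the line/tree cover matrix) and $\omega\le 2$ (Theorem~\ref{thm:plc-gap}) or $\omega\le 6$ (Theorem~\ref{thm:ptc-gap}). The one point you flag and resolve --- that Assumption~\ref{as:2} nominally requires a bound for arbitrary cost vectors while Theorem~\ref{thm:ptc-gap} is stated only for unit costs, but that the PCIP subinstance produced inside the proof of Theorem~\ref{thm:1} (see the construction of $\cov{A^\R[\bar{s},\bar{b}],\mathds{1},c}$ in Lemma~\ref{lem:large-rows}) inherits the original cost vector unchanged, so the unit-cost hypothesis is preserved --- is exactly the right observation, and it supplies a small piece of bookkeeping that the paper leaves implicit. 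Your explicit constants (40 and 72) are consistent with the $24\gamma+8\omega$ bound.
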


We also obtain the following combinatorial results.

\begin{theorem}\label{thm:plc-exact}
  There is a polynomial-time exact algorithm for PLC.
\end{theorem}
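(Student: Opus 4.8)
The plan is to exploit the geometry of the line. Picture the edges of the line as $e_1,\dots,e_n$ from left to right, and associate with $e_i$ the point $(i,\pi_{e_i})$ in the plane; a segment $j$ with span $[a_j,b_j]$ and priority supply $s_j$ becomes the bottom-anchored box $[a_j,b_j]\times[0,s_j]$, so that $j$ covers $e_i$ exactly when the point of $e_i$ lies in that box. Thus PLC asks for a minimum-cost family of bottom-anchored boxes covering a given set of points. The structural fact that drives the algorithm is: if $e_{i^{*}}$ has \emph{maximum} priority demand among the edges in a contiguous range $[l,r]$, then any segment $j$ covering $e_{i^{*}}$ has $s_j\ge \pi_{e_{i^{*}}}\ge \pi_{e_x}$ for all $x$, so—being bottom-anchored—it covers \emph{every} edge $e_x$ with $x\in[a_j,b_j]$. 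Hence the segment used for the "peak" automatically covers a full sub-interval of edges.

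First I would set up a divide-and-conquer dynamic program over contiguous intervals of edges. For $1\le l\le r\le n$ let $\mathrm{OPT}(l,r)$ be the minimum cost of a set of segments covering every edge in $[l,r]$ (segments may extend outside $[l,r]$). With $i^{*}$ a maximum-demand edge of $[l,r]$, the intended recursion is
$$\mathrm{OPT}(l,r)=\min_{j\,:\,j\text{ covers }e_{i^{*}}}\ \big[\,c_j+\mathrm{OPT}(l,\,a_j-1)+\mathrm{OPT}(b_j+1,\,r)\,\big],$$
with the convention that an empty interval costs $0$; there are $O(n^2)$ intervals and each requires one minimization over segments, so the running time is polynomial. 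The $\le$ direction is immediate: pick a valid $j$ for $e_{i^{*}}$, which covers all of $[a_j,b_j]\cap[l,r]$, and recurse on the two flanking intervals.

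The difficulty—and the step I expect to be the main obstacle—is the $\ge$ direction, i.e.\ proving the two flanking subproblems can be handled independently. They cannot in general: a segment running \emph{below the peak} (priority supply strictly below $\pi_{e_{i^{*}}}$) can stretch from the left flank to the right flank and help cover low-demand edges on both sides, so the naive recursion double-counts it. The fix is to enrich the state to $\mathrm{OPT}(l,r,\lambda)$, the minimum cost to cover all edges $e_x$ with $x\in[l,r]$ and $\pi_{e_x}>\lambda$ when coverage up to priority level $\lambda$ is already supplied for free throughout $[l,r]$; in the recursion one additionally guesses the below-peak segment reaching furthest to the left and the one reaching furthest to the right—after discarding dominated segments these are the only ones that can straddle the split—charges each once, and passes a correspondingly raised floor $\lambda'\ge s_j>\lambda$ into the affected subintervals. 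Since every guessed floor is strictly larger than the current one and there are at most $n$ distinct priority-supply values, the augmented table still has only polynomially many entries. The real work is the exchange argument showing that remembering only these extremal straddling segments (equivalently, a single scalar floor per subinterval) loses no optimality—that an optimal solution can be massaged so that the below-peak coverage it hands to a flank is "rectangular". Once that is in place, correctness of the recursion, and hence Theorem \ref{thm:plc-exact}, follows.
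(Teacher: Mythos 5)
There is a genuine gap, and you have in fact flagged it yourself: the entire burden of the proof is the independence of the subproblems after splitting, and your proposal explicitly defers that step (``the real work is the exchange argument\dots''). Worse, the sketched fix does not appear repairable as stated. Guessing only the two extremal below-peak straddling segments and passing a single scalar floor $\lambda'$ cannot account for solutions that use a nested \emph{tower} of pairwise non-dominated straddlers, each essential on both flanks. Concretely, take edges $e_1,\dots,e_7$ with demands $3,4,5,10,5,4,3$, unit-cost segments with (span, supply) equal to $([1,7],3)$, $([2,6],4)$, $([3,5],5)$, and a segment covering only the peak $e_4$. Any split around the chosen peak segment is straddled by all three wide segments; each is needed on both sides, and the optimum uses all of them exactly once. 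Your scheme charges only the extremal straddler at the top level, so the inner straddlers ($[2,6]$ and $[3,5]$) are paid again in each flank, and the recursion overestimates the optimum. Since such towers can have arbitrary depth, no constant number of guesses per level (nor a ``rectangular'' floor, which in general applies only to the prefix or suffix of a flank reached by the straddler, not to the whole flank) yields a correct polynomial-size state without precisely the structural lemma you left unproven.

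The paper's proof resolves exactly this issue by a different decomposition. It first shows (using a harmless ``segment-completeness'' closure under sub-segments) that some optimal solution is \emph{valley-minimal}: every segment is needed for its first and last covered edge, and no higher-supply segment of the solution intersects a valley in which a segment is needed. This yields a unique segment $j$ covering the leftmost edge, and an observation that every other segment of the solution is either strictly contained in one of the ``mountain'' gaps between consecutive needed valleys of $j$, or lies entirely to the right of $j$; subproblems are defined only over segments fully contained in the corresponding interval, so no segment can be shared between subproblems and additivity is automatic. Which valleys of $j$ are needed is then found by a shortest-path computation in an auxiliary digraph whose arc costs are the optima of the gap sub-instances, combined with dynamic programming over intervals. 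In particular, the nested-straddler configurations that break your peak-splitting recursion are handled there because the decomposition is driven by the lowest/leftmost segment and its valleys rather than by the highest-demand edge. To salvage your approach you would need to prove an exchange lemma at least as strong as the paper's valley-minimality plus its containment observation, which is essentially the whole content of the paper's argument.
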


\begin{theorem}\label{thm:ptc-hard}
  PTC is APX-hard, even when all the costs are unit.
\end{theorem}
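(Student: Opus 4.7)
\medskip

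\noindent\textbf{Proof Proposal for Theorem \ref{thm:ptc-hard}.}

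The plan is to establish APX-hardness through an approximation-preserving reduction from a problem known to be APX-hard even in a highly structured setting. The natural candidates are Minimum Vertex Cover on bounded-degree (say cubic) graphs, or equivalently Minimum Set Cover with bounded set size and bounded element frequency. I would favor cubic Vertex Cover because its instances are sparse and the multiplicative blowup in the construction can be kept under control, so that an $\alpha$-approximation for PTC would yield a $(1+\beta(\alpha-1))$-approximation for cubic VC for some explicit constant $\beta$; this combined with the APX-hardness of cubic VC gives APX-hardness of PTC.

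The construction would associate to each vertex $v_i\in V$ a distinct priority level $p_i$ and a unit-cost ``vertex segment'' $\sigma_i$, and to each edge $e=(v_i,v_j)\in E$ a small ``edge gadget'' containing one or more tree edges with carefully chosen priority demands. The goal is that each edge gadget for $e=(v_i,v_j)$ is coverable exactly by $\sigma_i$ or $\sigma_j$ (together perhaps with some cheap ``forced'' segments whose total cost is a fixed additive constant independent of the vertex cover chosen). To achieve this, I would arrange the tree as a long root-to-leaf spine with short branches hanging off, so that each vertex segment $\sigma_i$ occupies a specific interval along the spine and each edge gadget $f_e$ is placed on a branch reachable only from the intervals corresponding to its two endpoints. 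Priority supplies and demands are then tuned so that non-endpoint segments fail the priority test at $f_e$. To map this construction back, a feasible PTC solution of cost $k$ would yield a vertex cover of size $k$ minus the number of forced segments, and conversely any vertex cover would induce a feasible PTC solution of matching size. Since cubic VC is APX-hard and the reduction is approximation preserving with only an additive and linear blowup, PTC is APX-hard even in the unit-cost case.

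The main obstacle is designing the edge gadget so that \emph{only} the two ``correct'' vertex segments cover the gadget's demand edge. In a tree the set of segments containing a given edge $f_e=(x,y)$ is exactly the family of vertical paths whose bottom endpoint is a descendant of $y$ and whose top endpoint is an ancestor of $x$. Any other segment automatically misses $f_e$, so geometry handles part of the job; the rest must be handled by priorities, ensuring that no spurious segment whose spine coverage includes $f_e$ has supply $\geq \pi_{f_e}$. Ruling out interference from segments associated with other vertices while keeping the priority labels polynomially bounded is the delicate part of the argument. A secondary technical point is keeping the total cost of ``forced'' gadget segments proportional to $|E|$ rather than, say, to $|E|\cdot|V|$, since otherwise the optimum in the PTC instance becomes dominated by additive terms and the reduction ceases to preserve the approximation ratio; this would be controlled by sharing gadget components across many edges whenever possible, or by a careful normalization.
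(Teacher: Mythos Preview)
Your choice of source problem --- Vertex Cover in bounded-degree graphs --- is the same as the paper's, and the overall accounting idea (optimum PTC $=$ fixed additive term $+$ size of a minimum vertex cover) is also what ends up working. But the specific gadget you sketch has a structural obstruction that you flag but do not resolve, and I do not see how to resolve it along the lines you describe.

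The difficulty is this. In the rooted tree, the segments containing a fixed tree edge $f=(x,\mathrm{parent}(x))$ are exactly the ancestor--descendant paths whose bottom endpoint lies in the subtree below $x$ and whose top endpoint is an ancestor of $\mathrm{parent}(x)$. If your vertex segments $\sigma_i$ are intervals on the spine, none of them contains a branch edge at all; if instead they dip into branches, then for each graph edge $(i,j)$ you need a tree edge $f_{ij}$ contained in both $\sigma_i$ and $\sigma_j$ and in no other $\sigma_k$. Getting containment right for all pairs simultaneously is already a strong combinatorial constraint. Worse, the single priority coordinate only lets you carve out an \emph{upward-closed} subset of the segments containing $f_{ij}$: ``cover $f_{ij}$'' means ``supply $\geq \pi_{f_{ij}}$''. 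So among the segments through $f_{ij}$ you can only select a top slice by supply, never an arbitrary pair $\{\sigma_i,\sigma_j\}$. For one edge you could make $\sigma_i,\sigma_j$ the two highest; for another edge incident to $i$ you would need a different pair to be the two highest, and these requirements conflict. This is exactly the ``ruling out interference'' issue you mention, and it is not a matter of tuning constants --- the one-segment-per-vertex scheme is blocked by the monotone nature of priority covering.

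The paper sidesteps this by shifting the binary choice from vertices to \emph{edges}. The tree is a ``broom'': a handle path of $m$ edges (one per graph edge, with demand equal to the edge's index) and one bristle path per vertex. For each graph edge $i=(v,w)$ there are two unit-cost segments $s^i_v$ and $s^i_w$, each of supply $i$, each running from handle edge $e_i$ down into the bristle of its endpoint; by construction $s^i_v$ and $s^i_w$ are the \emph{only} segments covering the handle edge $e_i$, so every solution picks exactly one of them, which is an orientation of edge $i$. A vertex $v$ whose bristle is not fully covered by the chosen edge-segments (i.e.\ a non-sink in the orientation) must buy a single cleanup segment $t_v$. Non-sinks of an orientation are precisely a vertex cover, and conversely any vertex cover yields an orientation with that non-sink set. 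Hence $\mathrm{OPT}_{\mathrm{PTC}}=m+\mathrm{OPT}_{\mathrm{VC}}$, and APX-hardness follows. The key idea you are missing is to associate the two-way choice with graph edges rather than graph vertices; once you do that, a single priority coordinate suffices and no ``interference'' problem arises.
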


\begin{theorem}\label{thm:ptc-apx}
  There is an efficient $2$-approximation algorithm for 
  PTC. 
\end{theorem}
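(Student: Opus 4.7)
Plan: My plan is to design a primal-dual algorithm with a reverse-deletion post-processing pass, directly paralleling the scheme that yields the $2$-approximation for PLC in Theorem~\ref{thm:plc-gap}. Starting from $y = 0$, I uniformly raise the dual variable $y_e$ of every currently uncovered edge $e$; as soon as some segment $j$ has a tight dual constraint $\sum_{e \,:\, j \text{ covers } e} y_e = c_j$, I add $j$ to the current cover $S$ and freeze the edges it covers. Once $S$ is feasible, I scan the segments of $S$ in reverse order of addition and delete each one whose removal preserves feasibility. Every segment kept in $S$ then has a reverse-deletion witness edge $w(j)$ that only $j$ covers in the pruned solution.

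Since each segment in $S$ is added exactly when its dual constraint is tight, we obtain $c(S) = \sum_{j \in S} \sum_{e \,:\, j \text{ covers } e} y_e = \sum_e y_e \cdot |\{j \in S : j \text{ covers } e\}|$, and weak LP duality gives $\sum_e y_e \leq \OPT$. A $2$-approximation therefore reduces to the structural claim that in the final $S$ every edge is covered by at most two selected segments. On the line this is immediate from the total order on interval endpoints, which is exactly how the corresponding fact is used for PLC. On a tree the analogous bound is subtler and is the main obstacle.

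To establish the structural claim I would argue by contradiction. Suppose three segments $j_1, j_2, j_3 \in S$ all cover a common edge $e$. Their top endpoints all lie on the root-to-$e$ path, so they are linearly ordered by depth; I would WLOG take $j_1$ to have the shallowest top and $j_3$ the deepest. Performing a case analysis on where each witness $w(j_i)$ lies --- strictly above $e$, on the shared portion of the segments below $e$, or in a subtree branching off below $e$ --- and combining this with the ordering of the priorities $s_{j_1}, s_{j_2}, s_{j_3}$, I would show that $j_2$'s witness is covered by either $j_1$ or $j_3$, contradicting $j_2$'s survival under reverse deletion. The hardest subcase appears to be the ``fanning'' configuration in which the bottoms of $j_1, j_2, j_3$ lie in three different subtrees at $e$'s lower vertex, since then the three witnesses can be forced into disjoint regions and the direct dominance argument breaks. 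Handling this case will likely require either modifying the dual-growth rule to stop raising $y_e$ once $e$ has already been ``paid for'' twice by tight segments, or falling back to a local-ratio scheme in which at each step one picks a deepest uncovered edge $e^*$ of maximum priority, subtracts a uniform charge $\delta$ from the costs of all segments covering $e^*$ until the cheapest one becomes free, adds it to the cover, and recurses on the residual instance --- with the factor $2$ then emerging from showing that the algorithm commits at most two segments through any single $e^*$.
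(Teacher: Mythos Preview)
Your approach is genuinely different from the paper's, and the central structural claim you rely on is false on trees. Consider a root $r$ with a single child $c$, and $c$ with three leaf children $l_1,l_2,l_3$; give edge $(r,c)$ demand $1$ and each $(c,l_i)$ demand $2$. For $i=1,2,3$ let $j_i$ be the segment from $l_i$ to $r$ with supply $2$ and cost $1$. Each $j_i$ is the unique segment covering $(c,l_i)$, so all three survive reverse deletion, yet all three cover $(r,c)$. Thus the assertion ``every edge is covered by at most two selected segments'' fails, and with it the inequality $c(S)\le 2\sum_e y_e$ that your analysis needs. You yourself identify this fanning configuration as the obstacle, but your two proposed escapes (tweaking the dual-growth rule, or switching to local ratio on a deepest maximum-priority edge) are only sketched and neither is shown to recover a factor~$2$. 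Note also that a primal--dual proof of the form you outline would establish an integrality gap of $2$ for weighted PTC, which the paper explicitly leaves open; that is a hint that this route is at least not straightforward.

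The paper's proof takes a completely different path that sidesteps the LP altogether. It first uses the exact dynamic-programming algorithm for PLC (Theorem~\ref{thm:plc-exact}) as a black box: for every ancestor--descendant pair $(t,b)$ it computes the optimum priority line cover of the path $P_{tb}$, with cost $c'_{tb}$. It then builds an auxiliary $0,1$-tree-cover instance on $T$ with one segment $(t,b)$ of cost $c'_{tb}$ for each such pair, and solves it exactly (the matrix is TU). The factor $2$ comes from a decomposition lemma: for any optimum PTC solution $S^*$ one can partition $E$ into root-to-descendant paths $E_1,\dots,E_p$ so that each segment of $S^*$ is ``responsible'' for edges in at most two of the $E_i$; hence the optimal costs $c'_{t_ib_i}$ of the corresponding PLC sub-instances sum to at most $2\,c(S^*)$. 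The key idea you are missing is this reduction to exact PLC plus a two-charging argument at the level of whole paths, rather than a per-edge dual charging.
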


The algorithm for PLC is a non-trivial dynamic programming approach
that makes use of various structural observations about the optimal
solution.  The approximation algorithm for PTC is obtained via a
similar decomposition used to prove Theorem \ref{thm:ptc-gap}.

We end by noting some interesting connections between the priority
tree covering problem and set covering problems in computational
geometry. The {\em rectangle cover} problem in $3$-dimensions is the
following: given a collection of points $P$ in $\mathbb{R}^3$, 
and a collection $C$ of axis-parallel rectangles with
costs, find a minimum cost collection of rectangles that covers every
point. We believe studying the PTC problem
could give new insights into the rectangle cover problem.

\begin{theorem}\label{thm:ptc-geom}
The priority tree covering problem is a special case of the rectangle cover problem in $3$-dimensions.
\end{theorem}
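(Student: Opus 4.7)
The plan is to exhibit a direct, cost-preserving reduction from PTC to 3-dimensional rectangle cover, using a classical DFS numbering of the rooted tree $T$ to encode the first two coordinates and using the priority value itself for the third. The guiding observation is that the ancestor/descendant relation in a rooted tree corresponds, under a DFS numbering, to interval containment on the real line; hence a segment $(u,v)$, with $u$ a descendant of $v$, corresponds naturally to a 2-dimensional axis-parallel rectangle that picks out precisely the edges on the $u$-to-$v$ path, and the priority requirement $s_j \ge \pi_e$ can be enforced by truncating the third coordinate.

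Concretely, I would first perform a DFS on $T$ from the root $r$ and assign each vertex $w$ a discovery time $l_w$ and a finish time $r_w$, giving the standard property that $a$ is an ancestor of $b$ in $T$ iff $l_a \le l_b$ and $r_a \ge r_b$. Identifying each edge $e \in E$ with its lower endpoint $p_e$, I would map $e$ to the point $P_e := (l_{p_e}, r_{p_e}, \pi_e) \in \mathbb{R}^3$ and each segment $j = (u,v) \in \ess$ of supply $s_j$ and cost $c_j$ to the axis-parallel box $B_j := [l_v, l_u] \times [r_u, r_v] \times [0, s_j]$, also of cost $c_j$. The core step is then verifying that $P_e \in B_j$ iff segment $j$ covers edge $e$ in PTC: the first two coordinate conditions $l_v \le l_{p_e} \le l_u$ and $r_u \le r_{p_e} \le r_v$ jointly encode that $v$ is an ancestor of $p_e$ and that $p_e$ is an ancestor of $u$, which is precisely the statement that $e$ lies on the $u$-to-$v$ path in $T$, while the third coordinate condition is precisely $\pi_e \le s_j$. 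This gives a cost-preserving bijection between feasible solutions of the two instances, so their optima coincide.

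The main obstacle is minor and essentially bookkeeping at the boundary: strictly speaking an edge $e$ lies on the $u$-to-$v$ path only when $p_e \ne v$, which corresponds to a strict rather than weak containment on one side. This is easily resolved by a $\pm 1$ shift of the DFS timestamps (for instance, replacing the $x$-range $[l_v, l_u]$ by $[l_v + 1, l_u]$) or by using half-open intervals, and does not affect the structure of the reduction. Since the mapping from PTC instances to rectangle cover instances is clearly computable in polynomial time, this exhibits PTC as a special case of 3-dimensional rectangle cover.
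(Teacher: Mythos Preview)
Your proposal is correct and takes a genuinely different, arguably more direct, route than the paper. The paper proceeds in two stages: it first introduces an auxiliary problem called \emph{2-PLC} (priority line cover with two-dimensional priority vectors), reduces PTC to 2-PLC via \emph{two} DFS traversals of $T$ that use opposite child-orderings at every vertex (one traversal fixes the position of each edge on the line, the other supplies the first priority coordinate), and then observes that 2-PLC embeds into 3-dimensional rectangle cover in the obvious way. Your single-DFS approach using discovery and finish times bypasses the intermediate 2-PLC problem entirely, exploiting instead the textbook fact that the ancestor--descendant relation is exactly nesting of the DFS intervals $[l_w,r_w]$; this yields bounded boxes $[l_v+1,l_u]\times[r_u,r_v]\times[0,s_j]$ directly, whereas the paper's construction produces boxes that are semi-infinite in two of the three directions. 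The paper's detour has the modest benefit of isolating 2-PLC (and more generally $p$-PLC $\hookrightarrow$ $(p{+}1)$-dimensional rectangle cover) as an object of possible independent interest; your argument is shorter and relies only on a standard DFS property. The boundary issue you flag ($p_e\neq v$) is indeed the only subtlety, and the $+1$ shift handles it cleanly since DFS timestamps are integers and intervals of distinct vertices are strictly nested or disjoint.
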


\ifconf{Due to space restrictions, we omit many proofs. A full version of the paper 
is available \cite{CGK-full}.}

\section{General Framework for Column Restricted CIPs}
\label{sec:ccip}

In this section we prove Theorem \ref{thm:1}.
Our goal is to round a solution to a LP relaxation of 
\cov{A[s],b,c,d} into an approximate integral solution.  We strengthen the following
canonical LP relaxation of the CCIP
$$\min \{c^Tx ~:~ A[s]x \ge b, 0\le x\le d, x\ge 0\}$$
by adding valid {\em knapsack cover} constraints. 
In the following we use \C\ for the set of columns and \R\
for the set of rows of $A$.

\subsection{Strengthening the canonical LP Relaxation}

Let $F\subset \C$ be a subset of the columns in the column restricted
CIP \cov{A[s],b,c,d}.  For all rows $i\in \R$, define
$ b^F_i=\max\{0,b_i - \sum_{j \in F} A[s]_{ij}d_j\} $
to be the residual demand of row $i$ w.r.t. $F$. Define matrix $A^F[s]$ by letting
\begin{equation}\label{eq:defAf}
A^F[s]_{ij}=\left\{\begin{array}{l@{\quad :\quad}l}
    \min\{A[s]_{ij},b^F_i\} & j \in \C\setminus F \\
    0 & j \in F, \end{array}\right. \end{equation}
for all $i \in \C$ and for all $j \in \R$. 
The following {\em Knapsack-Cover} (KC) inequality 
$$ \sum_{j\in\C} A^F[s]_{ij}x_j \geq b^F_i $$
is valid for the set of all integer solutions $x$ for
$\cov{A[s],b,c,d}$.  Adding the set of all KC inequalities yields the
following stronger LP formulation CIP. We note that the LP is not column-restricted, in
that, different values appear on the same column of the new constraint matrix.

\begin{align}
\opt_P := \min \quad & \sum_{j \in \C} c_jx_j \tag{P} \label{lp} \\
\mbox{s.t.} \quad  & \sum_{j \in \C} A^F[s]_{ij} x_j \geq b^F_i \quad & \forall F\subseteq \C, \forall i \in \R \label{kc} \\
& 0 \leq x_j \leq d_j & \forall j \in \C \notag
\end{align}
\ni It is not known whether \eqref{lp} can be solved in polynomial
time.  For $\alpha \in (0,1)$, call a vector $x^*$ $\alpha$-relaxed if
its cost is at most $\opt_P$, and if it satisfies \eqref{kc} for $F=\{
j \in \C \,:\, x^*_j \geq \alpha d_j \}$. An $\alpha$-relaxed solution
to \eqref{lp} can be computed efficiently for any $\alpha$. To see
this note that one can check whether a candidate solution satisfies
\eqref{kc} for a set $F$; we are done if it does, and otherwise we
have found an inequality of \eqref{lp} that is violated, and we can
make progress via the ellipsoid method. Details can be found in
\cite{CF+00} and \cite{KY05}.

We fix an $\alpha \in (0,1)$, specifying its precise value later.
Compute an $\alpha$-relaxed
solution,  $x^*$, for \eqref{lp}, and let $F=\{ j \in \C \,:\, x^*_j \geq \alpha d_j \}$. Define $\bar{x}$ as,
$\bar{x}_j = x^*_j$ if $ j\in \C\setminus F$, and $\bar{x}_j = 0$, otherwise.
Since $x^*$ is an $\alpha$-relaxed solution, we get that $\bar{x}$ is
a feasible fractional solution to the {\em residual} CIP,
\cov{A^F[s],b^F,c,\alpha d}. In the next subsection, our goal will be
to obtain an {\em integral} feasible solution to the covering problem
\cov{A^F[s],b^F,c,d} using $\bar{x}$. The next lemma shows how this
implies an approximation to our original CIP.

\def\xint{x^{\tt int}}
\begin{lemma} \label{lem:res}
 	If there exists an integral feasible solution, $\xint$, to \cov{A^F[s],b^F,c,d}  with $c^T\xint \le \beta\cdot c^T\bar{x}$, then there
	exists a $\max\{1/\alpha, \beta\}$-factor approximation to \cov{A[s],b,c,d}.
\end{lemma}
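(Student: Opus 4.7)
}
The plan is to stitch $\xint$ together with the ``saturated'' columns in $F$ to build a feasible integer solution $y$ to the original CCIP \cov{A[s],b,c,d}, and then to argue that its cost compares well with $c^T x^*$, which is at most $\opt_P$ and hence at most the optimum of the original CIP. Concretely, I would define
$$
y_j \;=\; \begin{cases} d_j & j \in F, \\ \xint_j & j \in \C\setminus F. \end{cases}
$$
The upper-bound constraint $y\le d$ is immediate, so the only nontrivial piece is feasibility and the cost bound.

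For feasibility, I would fix a row $i$ and split $\sum_j A[s]_{ij} y_j$ into the parts coming from $F$ and from $\C\setminus F$. From the definition of $b^F_i$, the $F$-part satisfies $\sum_{j\in F} A[s]_{ij} d_j \ge b_i - b^F_i$. For $j\in \C\setminus F$, the definition \eqref{eq:defAf} gives $A^F[s]_{ij} \le A[s]_{ij}$, so the remaining part is at least $\sum_{j\in \C\setminus F} A^F[s]_{ij} \xint_j \ge b^F_i$ by feasibility of $\xint$ for \cov{A^F[s],b^F,c,d}. Adding the two pieces yields $\sum_j A[s]_{ij}y_j \ge b_i$, as required.

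For the cost, I would use the two defining properties of $F$ separately. On columns in $F$, the $\alpha$-relaxed definition gives $x^*_j \ge \alpha d_j$, so $\sum_{j\in F} c_j d_j \le \tfrac{1}{\alpha}\sum_{j\in F} c_j x^*_j$. On columns outside $F$, the hypothesis on $\xint$ together with $\bar x_j = x^*_j$ for $j\notin F$ (and $\bar x_j = 0$ otherwise) gives $\sum_{j\notin F} c_j \xint_j \le c^T \xint \le \beta\,c^T\bar x = \beta \sum_{j\notin F} c_j x^*_j$. Combining,
$$
c^T y \;\le\; \tfrac{1}{\alpha}\sum_{j\in F} c_j x^*_j \;+\; \beta\sum_{j\notin F} c_j x^*_j \;\le\; \max\!\left\{\tfrac{1}{\alpha},\,\beta\right\} c^T x^* \;\le\; \max\!\left\{\tfrac{1}{\alpha},\,\beta\right\} \opt_P.
$$
Since $\opt_P$ is a lower bound on the optimum of \cov{A[s],b,c,d}, this establishes the claimed $\max\{1/\alpha,\beta\}$-approximation.

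I do not expect any serious obstacle here: the only thing to be careful about is the feasibility step, which relies on precisely the right bookkeeping between $b^F_i$ and the truncation in $A^F[s]_{ij}$. The lemma then reduces the rest of the paper's task to producing an integral solution for the residual problem \cov{A^F[s],b^F,c,d} of cost $O(\gamma+\omega)$ times $c^T\bar x$, which is what Section 2 will focus on.
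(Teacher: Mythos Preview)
Your proposal is correct and essentially identical to the paper's own proof: the paper defines the same combined solution (setting $z_j=d_j$ for $j\in F$ and $z_j=\xint_j$ otherwise), verifies feasibility via the same split using $A^F[s]_{ij}\le A[s]_{ij}$ and the definition of $b^F_i$, and bounds the cost exactly as you do. Your handling of the cost on $\C\setminus F$ is in fact a touch more careful than the paper's, since you explicitly pass through $\sum_{j\notin F} c_j\xint_j \le c^T\xint$ rather than silently assuming $\xint_j=0$ for $j\in F$.
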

\iffull{
\begin{proof}
Define 
\begin{equation}\label{eq:zdef}
  z_j = \left\{\begin{array}{l@{\quad :\quad}l}
    d_j & j \in F \\
    \xint_j & j \in \C\setminus F, \end{array}\right. 
\end{equation}
Observe that $z\le d$. 
$z$ is a feasible integral solution to  \cov{A[s],b,c,d} since for any $i\in \R$,
\begin{align*}
\sum_{j\in \C} A[s]_{ij}z_j = \sum_{j\in F} A[s]_{ij}d_j + \sum_{j\in\C\setminus F} A[s]_{ij}\xint_j 
                                   \ge (b_i - b^F_i) + \sum_{j\in\C\setminus F} A^F[s]_{ij}\xint_j 
                                  \ge b_i 
\end{align*}
where the first inequality follows from the definition of $b^F_i$ and since $A[s]_{ij} \ge A^F[s]_{ij}$, the second inequality follows since 
$\xint$ is a feasible solution to  \cov{A^F[s],b^F,c,d}. 

\ni
Furthermore, 
$$c^Tz = \sum_{j\in F}c_j d_j + \sum_{j\in \C\setminus F} c_j \xint_j \le \frac{1}{\alpha} \sum_{j\in F}c_j x^*_j + \beta \sum_{j\in \C\setminus F} c_j x^*_j 
\le \max\{\frac{1}{\alpha},\beta\} \opt_P$$ 
where the first inequality follows from the definition of $F$ and the second from the assumption in the theorem statement.
\end{proof}
}
\subsection{Solving the Residual Problem}
In this section we use a feasible fractional solution $\bar{x}$ of \cov{A^F[s],b^F,c,\alpha d}, to obtain an {\em integral} feasible solution $\xint$ to  the covering problem \cov{A^F[s],b^F,c,d}, with $c^T\xint \le \beta c^T\bar{x}$ for $\beta = 24\gamma + 8\omega$. Fix $\alpha = 1/24$. \\

\noindent
{\bf \em Converting to Powers of $2$. } 
For ease of exposition, we first modify the input to the residual problem \cov{A^F[s],b^F,c,d} so that all entries of are powers of $2$. For every $i \in \R$, let $\bb_i$ denote the smallest power of $2$ larger than $b^F_i$.
For every column $j\in\C$, let $\bs_j$ denote the largest power of $2$ smaller than $s_j$.

\begin{lemma} \label{lem:yfeas}
  $y=4\bx$ is feasible for \cov{A^F[\bs],\bb,c,4\alpha d}.
\end{lemma}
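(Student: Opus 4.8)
The plan is to verify the single family of covering constraints defining feasibility for \cov{A^F[\bs],\bb,c,4\alpha d}, namely that for every row $i \in \R$ we have $\sum_{j \in \C} A^F[\bs]_{ij}\, y_j \ge \bb_i$, together with the box constraints $0 \le y_j \le 4\alpha d_j$. The box constraints are immediate: $\bx \le \alpha d$ since $\bx$ is feasible for the residual problem \cov{A^F[s],b^F,c,\alpha d}, so $y = 4\bx \le 4\alpha d$. The work is entirely in the covering constraints, and the point is to track how the two rounding operations (shrinking supplies $s_j \to \bs_j$, which only loses a factor of $2$ since $\bs_j > s_j/2$, and inflating demands $b^F_i \to \bb_i$, which only gains a factor of $2$ since $\bb_i < 2 b^F_i$) interact with the truncation built into the definition \eqref{eq:defAf} of the matrix $A^F[\cdot]$.

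First I would fix a row $i$ and recall that feasibility of $\bx$ gives $\sum_{j \in \C} A^F[s]_{ij}\,\bx_j \ge b^F_i$, where for $j \in \C \setminus F$ the entry is $A^F[s]_{ij} = \min\{A[s]_{ij},\, b^F_i\} = \min\{A_{ij} s_j,\, b^F_i\}$ and for $j \in F$ the entry (and $\bx_j$) is zero. The key pointwise claim is that for each column $j$,
\[
  4 \cdot A^F[\bs]_{ij} \;\ge\; A^F[s]_{ij}.
\]
To see this, note $\bs_j \ge s_j/2$ for columns with $A_{ij}=1$, so $A_{ij}\bs_j \ge \tfrac12 A_{ij}s_j = \tfrac12 A[s]_{ij}$. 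Also $\bb_i \ge b^F_i$. Hence $A^F[\bs]_{ij} = \min\{A_{ij}\bs_j,\, \bb_i\} \ge \min\{\tfrac12 A[s]_{ij},\, b^F_i\} \ge \tfrac12 \min\{A[s]_{ij},\, b^F_i\} = \tfrac12 A^F[s]_{ij}$, and multiplying by $4$ gives even more room than needed (the slack of $2$ is harmless here, though it is presumably there so that the same $y$ works against the larger demand $\bb_i$ rather than $b^F_i$). Summing the pointwise inequality over $j$ and using feasibility of $\bx$,
\[
  \sum_{j \in \C} A^F[\bs]_{ij}\, y_j \;=\; \sum_{j \in \C} 4\, A^F[\bs]_{ij}\, \bx_j \;\ge\; \sum_{j \in \C} A^F[s]_{ij}\, \bx_j \;\ge\; b^F_i.
\]

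The remaining gap is that we need the right-hand side to be $\bb_i$, not $b^F_i$, and $\bb_i$ can be nearly twice $b^F_i$; a factor of $4$ on the left against a factor of $2$ on the right should close this, but one has to be slightly careful because of the truncation cap. The clean way is to redo the pointwise comparison against the \emph{capped} target directly: for $j \in \C\setminus F$ with $A_{ij}=1$ we have $A^F[\bs]_{ij} = \min\{A_{ij}\bs_j,\bb_i\}$, and I would argue $4\min\{A_{ij}\bs_j,\bb_i\} \ge 2\min\{A[s]_{ij}, b^F_i\}$ using $\bs_j \ge s_j/2$ and $\bb_i \ge b^F_i$ (case split on whether the min on the right is achieved by the supply term or the demand term). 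Then $\sum_j 4 A^F[\bs]_{ij}\bx_j \ge 2 \sum_j A^F[s]_{ij}\bx_j \ge 2 b^F_i \ge \bb_i$, where the last step is exactly the definition of $\bb_i$ as the smallest power of $2$ at least $b^F_i$ (so $\bb_i \le 2 b^F_i$). The main obstacle — really the only thing to get right — is this interaction of the cap with the two roundings; once the case analysis in the pointwise inequality is done correctly, the lemma follows by summation and the box constraints are trivial.
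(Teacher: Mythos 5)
Your argument is correct and is essentially the paper's proof: you isolate the pointwise bound $2A^F[\bs]_{ij} \ge A^F[s]_{ij}$ (via $\bs_j \ge s_j/2$ and the elementary fact $\min\{a/2,b\}\ge \tfrac12\min\{a,b\}$), sum against $\bx$, and close with $2b^F_i \ge \bb_i$; the box constraints are as you say immediate. One small point to align with the paper: the text right after the lemma fixes the convention $A^F[\bs]_{ij}=\min\{A[\bs]_{ij},\,b^F_i\}$, i.e.\ the truncation cap remains $b^F_i$ rather than $\bb_i$ as written in your case analysis — this only makes your left-hand side smaller than what you used, and the same chain $\min\{A_{ij}\bs_j,b^F_i\} \ge \tfrac12\min\{A_{ij}s_j,b^F_i\}$ still gives $4A^F[\bs]_{ij}\bx_j$ summing to at least $2b^F_i \ge \bb_i$, so the conclusion stands unchanged.
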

\iffull{
\begin{proof}
  Focus on row $i \in \R$. We have
  $$ 
  \sum_{j \in \C} A^F[\bs]_{ij} y_j \geq 2\cdot \sum_{j \in \C} A^F[s]_{ij} \bx_j 
  \geq 2 b^F_i \geq \bb_i,
  $$
  where the first inequality uses the fact that $s_j \leq 2\bs_j$ for
  all $j \in \C$, the second inequality uses the fact that \bx\ is
  feasible for
  \cov{A^F[s],b^F,c,\alpha d}, and the third follows from the definition of $\bb_i$. 
\end{proof}
}
\noindent
{\bf \em Partitioning the rows. } 
We call $\bb_i$ the residual demand of row $i$. For a row $i$, a column $j\in \C$ is {\em $i$-large} if the supply of $j$
is at least the residual demand of row $i$; it is {\em $i$-small} otherwise. Formally,
\begin{eqnarray*}
  \La_i & = & \{ j \in \C\,:\, A_{ij}=1, \bs_j \geq \bb_i \} ~~ \mbox { is the set of $i$-large columns} \\
  \Sm_i & = & \{ j \in \C\,:\, A_{ij}=1, \bs_j < \bb_i \} ~~ \mbox { is the set of $i$-small columns} 
\end{eqnarray*}
Recall the definition from \eqref{eq:defAf}, $A^F[\bs]_{ij} = \min(A[\bs]_{ij}, b^F_i)$.
Therefore, $A^F[\bs]_{ij} = A_{ij}b^F_i$ for all $j\in \La_i$ since $\bs_j \ge \bb_i \ge b^F_i$; 
and $A^F[\bs]_{ij} = A_{ij}\bs_j$ for all $j\in \Sm_i$, since being powers of $2$, 
$\bs_j < \bb_i$ implies, $\bs_j \le \bb_i/2 \le b^F_i$. 

We now partition the rows into large and small depending on which columns
most of their coverage comes from. Formally, call a row $i \in \R$ {\em large} if
$$ \sum_{j \in \Sm_i} A^F[\bs]_{ij}y_j \leq \sum_{j \in \La_i}A^F[\bs]_{ij}y_j, $$
and small otherwise. Note that Lemma \ref{lem:yfeas} together with the
fact that each column in row $i$'s support is either small or large implies,
\begin{align*}
\mbox{For a large row $i$,} ~~ \sum_{j\in \La_i} A^F[\bs]_{ij}y_j \ge \bb_i/2 ,~~
\mbox{For a small row $i$,} ~~ \sum_{j\in \Sm_i} A^F[\bs]_{ij}y_j \ge \bb_i/2 
\end{align*}
Let $\R_L$ and $\R_S$ be the set of large and small rows. 

\def\xints{x^{{\tt int},\Sm}}
\def\xintl{x^{{\tt int},\La}}
In the following, we address small and large rows separately. We
compute a pair of integral solutions $\xints$ and $\xintl$ that are
feasible for the small and large rows, respectively. We then obtain
$\xint$ by letting
\begin{equation}\label{eq:hxdef}
  \xint_j = \max\{\xints_j,\xintl_j\},
\end{equation}
for all $j \in \C$. 

\subsubsection{Small rows.} 
For these rows we use the grouping-and-scaling technique a la
\cite{CMS07,Ko03,KS01,KS04}.  However, as mentioned in the
introduction, we use a modified analysis that bypasses the
no-bottleneck assumptions made by earlier works.

\begin{lemma}\label{lem:small-rows}
We can find an integral solution $\xints$ such that \\
\indent
a) $\xints_j \le d_j$ for all $j$, \\ \indent
b) $\sum_{j\in \C}c_j\xints_j \le 24\gamma \sum_{j\in \C}c_j\bx_j$, and \\ \indent
c) for every  small row $i\in \R_S$, $\sum_{j\in \C} A^F[s]_{ij}\xints_j \ge b^F_i$.
\end{lemma}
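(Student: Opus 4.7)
The plan is to apply the grouping-and-scaling technique to $y = 4\bx$, splitting by supply and reducing to a 0,1-CIP within each group. Partition $\C \setminus F$ into groups $\C_k = \{j : \bs_j = 2^k\}$; every column in $\C_k$ has supply exactly $2^k$, so for a small row $i$ with $2^k < \bb_i$ we have $A^F[\bs]_{ij} = 2^k \cdot A_{ij}$. Writing $y_{i,k} = \sum_{j \in \C_k,\, A_{ij}=1} y_j$, the defining inequality of a small row translates to
$$\sum_{k \,:\, 2^k < \bb_i} 2^k\, y_{i,k} \ \ge\ \bb_i/2.$$

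For each group $k$, I will build an auxiliary 0,1-CIP on the columns $\C_k$, with the original costs $c_j$ and upper bounds $d_j$, whose integer row demands $b^k_i$ are obtained by splitting $\bb_i/2^{k+1}$ across groups in proportion to $y_{i,k}$ and then rounding up. Chosen correctly, this ensures both that a small constant multiple of $y$ restricted to $\C_k$ is fractionally feasible, and that $\sum_k 2^k b^k_i \ge b^F_i$ for every small row $i$ once the per-group integer solutions are combined. Applying Assumption \ref{as:1} in each group yields in polynomial time an integer vector $z^k \le d$ of cost at most $\gamma$ times that scaled fractional cost. Since each column lies in a unique group, setting $\xints_j = z^k_j$ for the $k$ with $j \in \C_k$ (and $\xints_j = 0$ on $F$) produces an integer solution with $\xints \le d$, coverage $\sum_{j} A^F[\bs]_{ij} \xints_j \ge b^F_i$ on every small row, and $c^T \xints \le O(\gamma)\cdot c^T \bx$; pinning down the multiplicative constants is intended to deliver the claimed $24\gamma$.

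The main obstacle will be the groups where $y_{i,k} \in (0,1)$: any natural integer demand $b^k_i$ there rounds up to $1$, yet the per-group 0,1-CIP must remain fractionally feasible so that Assumption \ref{as:1} can be invoked. The slack for this comes from the choice $\alpha = 1/24$, which guarantees $y_j = 4\bx_j < 4\alpha d_j = d_j/6$; a further scale-up of $y$ by a small constant stays within the upper bounds $d_j$ and lets the scaled fractional solution absorb the rounding loss incurred by these "light" groups. This is the essential place where the analysis departs from the earlier grouping-and-scaling arguments of \cite{KS04,Ko03}, which invoke the no-bottleneck assumption precisely in order to avoid the small-$y_{i,k}$ regime.
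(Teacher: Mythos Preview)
Your overall plan---group columns by supply, reduce each group to a $0,1$-CIP, invoke Assumption~\ref{as:1}, and concatenate---is exactly the paper's strategy. The gap is in how you set the per-group integer demands $b^k_i$. You propose to round \emph{up} and then absorb the rounding error by scaling $y$ by a constant. This cannot work: the only room you have is $y_j = 4\bx_j \le d_j/6$, so you can scale by at most~$6$ while respecting $d$. But a group $k$ with $2^k < \bb_i$ may have $y_{i,k}$ arbitrarily small (yet positive); rounding its demand up to~$1$ requires $6\,y_{i,k} \ge 1$, which fails. No universal constant fixes this, so Assumption~\ref{as:1} cannot be invoked on such groups.

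The paper resolves this by rounding \emph{down}: it sets the group-$t$ demand to $\lfloor 3\bb^{(t)}_i/\bs^{(t)} \rfloor$, which is automatically $\le 6\sum_{j\in\C^{(t)}}A_{ij}y_j$, so $6y$ is fractionally feasible in every group and Assumption~\ref{as:1} applies. The price is that the floor may be~$0$ in some groups (precisely your ``light'' groups), and those groups contribute nothing to coverage. The missing idea is that this loss is controlled because the supplies $\bs^{(t)}$ form a geometric series: the set $S_i=\{t\ge t_i: 3\bb^{(t)}_i<\bs^{(t)}\}$ satisfies $\sum_{t\in S_i}\bb^{(t)}_i < \tfrac{1}{3}\sum_{t\ge t_i}\bs^{(t)} \le \tfrac{2}{3}\bs^{(t_i)} = \tfrac{1}{3}\bb_i$, so the remaining groups carry at least $\tfrac{2}{3}\bb_i$ of the demand, and on those groups $\lfloor a\rfloor \ge a/2$ recovers enough to reach $\bb_i \ge b^F_i$. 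This geometric-series step is what replaces the no-bottleneck assumption and is the analytical core you are missing.
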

\begin{proof} 
\iffull{
The complete proof is slightly technical and hence we start with a sketch.}
\ifconf{(Sketch)} 
Since the rows are small, for any row $i$, we can zero out the entries 
that are larger than $\bb_i$, and still $2y$ will be a feasible solution. 
Note that, now in each row, the entries are $< \bb_i$, and thus are at most
$\bb_i/2$ (everything being powers of $2$). We stress that it could be that $\bb_i$ 
of some row is less than the entry in some other row, that is, we don't have the 
no-bottleneck assumption. However, when a particular row $i$ is fixed, $\bb_i$ is 
at least any entry of the matrix in the $i$th row. 
Our modified analysis of grouping and scaling then makes the proof go through.

We {\em group} the columns into classes that have $s_j$ as the same power of $2$,
and for each row $i$ we let $\bb_i^{(t)}$ be the contribution of the class $t$ columns 
towards the demand of row $i$. The columns of class $t$, the small rows, and the 
demands $\bb_i^{(t)}$ form a CIP where all non-zero entries of the matrix are the same
power of $2$. We scale both the constraint matrix and $\bb^{(t)}_i$  down by that 
power of $2$ to get a 0,1-CIP, and using assumption 1, we get an integral solution
to this 0,1-CIP. Our final integral solution is obtained by concatenating all these integral 
solutions over all classes. 

Till now the algorithm is the standard grouping-and-scaling algorithm.
The difference  lies in our analysis  in proving that this integral solution is feasible for the original 
CCIP. Originally the no-bottleneck assumption was used to prove this. However, we
show since the column values in different classes 
are geometrically decreasing, the weaker assumption of $\bb_i$ being
at least any entry in the $i$th row is enough to make the analysis 
go through. \ifconf{This completes the sketch of the proof.} \iffull{We now get into the full proof.\\

\vspace*{\parcor}
\paragraph{Step 1: Grouping the columns.}

Let $\bs_{min}$ and $\bs_{max}$ be the smallest and largest supply among 
the columns in $\C\setminus F$. Since all $\bs_j$ are powers of $2$, we introduce the shorthand, $\bs^{(t)}$ for the supply $\bs_{\max}/2^t$.
We say that a column $j$ is in {\em class} $t\geq 0$,
if $\bs_j=\bs^{(t)}$, and we let 
$$ \C^{(t)} := \{ j \in \C\setminus F \,:\, \bs_j = \bs^{(t)} \} $$
be the set of class $t$ supplies.  \\

\vspace*{\parcor}
\paragraph{Step 2: Disregarding $i$-large columns of a small row $i$.}

Fix a small row $i \in \R_S$. We now identify the columns $j$ that are $i$-small.
To do so, define $t_i := \log (\bs_{max}/\bb_i) + 1$. Observe that any column $j$ in class $\C^{(t)}$ 
for $t\ge t_i$ are $i$-small. This is because $\bs_j = s_{max}/2^t \le s_{max}/2^{t_i} = \bb_i/2 < \bb_i$.
Define 
$$ \bb^{(t)}_i = \left\{\begin{array}{l@{\quad : \quad}l}
    2\sum_{j \in \C^{(t)}} A^F[\bs]_{ij}y_j  & t \geq t_i \\
    0 & \mbox{otherwise} \end{array}\right. $$
as the contribution of the class $t$, $i$-small columns to the demand of row $i$, multiplied by $2$.
Note that by definition of small rows, these columns contribute to more than $1/2$ of the demand, and so
\begin{equation}\label{eq:bbisum}
  \sum_{t \geq t_i} \bb^{(t)}_i \geq \bb_i.
\end{equation}
Henceforth, we will consider only the contributions of the small $i$-columns of a small row $i$.\\

\vspace*{\parcor}
\paragraph{Step 3: Scaling and getting the integral solution.}

Fix a class $t$ of columns and scale down by $\bs^{(t)}$ to get a $\{0,1\}$-constraint matrix.
(Recall entries of the columns in a class $t$ are all $\bs^{(t)}$.)
This will enable us to apply assumption $1$ and get a integral solution corresponding to these columns.
The final integral solution will be the concatenation of the integral solutions over the various classes.

The constants in the next claim are carefully chosen for the calculations to work out later.

\begin{claim} \label{claim:s1}
  For any $t\ge 0$ and for all $i \in \R_S$, $6\cdot \sum_{j \in \C^{(t)}} A_{ij}y_j \geq \lfloor 3\bb^{(t)}_i/\bs^{(t)} \rfloor$. 
\end{claim}
\begin{proof}
  The claim is trivially true for rows $i$ with $t_i > t$ as $\bb^{(t)}_i=0$ in this
  case. Consider a row $i$ with $t_i \leq t$.
  Since any column $j\in \C^{(t)}$ is $i$-small, we get $A^F[\bs]_{ij} = A_{ij}\bs_j = A_{ij}\bs^{(t)}$. 
  Using the definition of $\bb_i$, we
  obtain
  $$ 6 \cdot \sum_{j \in \C^{(t)}} A_{ij}\bs^{(t)} y_j
     = 3 \bb^{(t)}_i. $$
  Dividing both sides by $\bs^{(t)}$ and taking the floor on the right-hand side yields the claim.
\end{proof}

Since $\alpha=1/24$ and $\bx$ is a feasible solution to 
\cov{A^F[s],b^F,c,d/24}, we get that $6y_j = 24 \cdot \bx_j \leq
d_j$ for all $j \in \C\setminus F$. Thus, the above
claim shows that $6y$ is a feasible fractional solution for
\cov{A^{(t)},\lfloor 3\bb^{(t)}/\bs^{(t)}\rfloor,c^{(t)},d^{(t)}}, where 
$A^{(t)}$ is the submatrix of $A$ defined by the columns in $\C^{(t)}$, and 
$c^{(t)}$ and $d^{(t)}$ are the sub-vectors of $c$ and $d$, respectively, that are induced by
$\C^{(t)}$. Using Assumption \ref{as:1}, we therefore conclude that there is an integral vector $x^{{\tt int},\Sm,t}$
such that 

\begin{eqnarray}
  x^{{\tt int},\Sm,t}_j & \leq & d_j  \quad \mbox{for all $j \in \C^{(t)}$, and} \label{y-bd}\\
  \sum_{j \in \C^{(t)}} A^{(t)}_{ij} x^{{\tt int},\Sm,t}_j & \geq & \left\lfloor \frac{3\bb^{(t)}_i}{\bs^{(t)}}\right\rfloor  \quad \mbox{for all $i \in \R_S$, and} \label{hx-feas} \\
 \sum_{j\in \C^{(t)}} c_j  x^{{\tt int},\Sm,t}_j  & \leq & 6\gamma \cdot \sum_{j\in \C^{(t)}} c_j y_j
\end{eqnarray}
We obtain integral solution $\xints$ by letting $\xints_j=x^{{\tt int},\Sm,t}_j$ if $j \in \C^{(t)}$.
Thus $\xints_j \le d_j$ for all $j\in \C$, and we get, 
\begin{equation}\label{eq:cost-sm}
  \sum_{j \in \C} c_j \xints_j = \sum_{t \geq 0} \sum_{j \in \C^{(t)}} c_jx^{{\tt int},\Sm,t}_j   
  \leq 6\gamma \cdot \sum_{t \geq 0}\sum_{j \in \C^{(t)}} c_j y_j =  24\gamma \cdot \sum_{j \in \C} c_j \bx_j.
\end{equation}

\ni
Thus we have established parts (a) and (b) of the lemma.
It remains to show that $\xints$ is feasible for the set of small rows. \\

\vspace*{\parcor}
\paragraph{Step 4: Putting them all together: scaling back.}

Once again, fix a small row $i\in \R_S$. The following inequality takes only contribution of the $i$-small columns.
We later show this suffices.
\begin{align}\label{eq:feas}
\sum_{j\in \C}A^F[s]_{ij}\xints_j \ge ~~~~ \sum_{j\in \C: ~j \mbox{ is $i$-small}}A_{ij}s_j\xints_j ~~~~~ ~~~~~~~~~~~~~~~~~~~~~\notag \\ 
= ~~~ \sum_{t \geq t_i} \sum_{j \in \C^{(t)}} A^{(t)}_{ij}s_j \xints_j \geq \sum_{t \geq t_i} \sum_{j \in \C^{(t)}} A^{(t)}_{ij}\bs^{(t)} x^{{\tt int},\Sm,t}_j
\end{align}
The first inequality follows since $A^F[s]_{ij} = A_{ij}s_j$ for
$i$-small columns, the equality follows from the definition
of $t_i$, and the final inequality uses the fact that $s_j \geq
\bs^{(t)}$ for $j \in \C^{(t)}$. The following claim along with
\eqref{eq:feas} proves feasibility of row $i$.  This is the part where
our analysis slightly differs from the standard grouping-and-scaling
analysis.

\begin{claim}
For any small row $i\in \R_S$, 
$$ \sum_{t \geq t_i} \sum_{j \in \C^{(t)}} A^{(t)}_{ij}\bs^{(t)} x^{{\tt int},\Sm,t}_j \ge b^F_i.$$
\end{claim}
\begin{proof}
In this proof, the choice of the constant $3$ on the right-hand side
of the inequality in Claim \ref {claim:s1} will become clear. Let 
$ S_i = \{ t \geq t_i \,:\, 3\bb^{(t)}_i < \bs^{(t)} \}$
be the set of $i$-small classes $t$ whose fractional supply
$\bb^{(t)}_i$ is small compared to its integral supply $\bs^{(t)}$. 
We now show that for any small row $i$, the columns in the classes not in $S_i$ 
suffice to satisfy its demand. 
Note that
\begin{align}
  \sum_{t \not\in S_i, t \geq t_i} \bb^{(t)}_i = \sum_{t \geq t_i} \bb^{(t)}_i - \sum_{t \in S_i} \bb^{(t)}_i 
    \geq \sum_{t \geq t_i} \bb^{(t)}_i - \frac13 \sum_{t \in S_i} \bs^{(t)} \label{eq:1}
\end{align}
which follows from the definition of $S_i$. Furthermore, from \eqref{eq:bbisum} we know that for a small row,
$\sum_{t \geq t_i} \bb^{(t)}_i \geq \bb_i$. Also, since $\bs^{(t)}$ form a geometric series, we get that 
$\sum_{t \in S_i} \bs^{(t)} \le \sum_{t\ge t_i} \bs^{(t)} \le 2\bs^{(t_i)}$. Putting this in \eqref{eq:1} we get
\begin{align}
 \sum_{t \not\in S_i, t \geq t_i} \bb^{(t)}_i 
    \geq \bb_i - \frac13 \sum_{t \geq t_i} \bs^{(t)} 
    \geq \bb_i - \frac23 \bs^{(t_i)}  = \frac23 \bb_i,\label{eq:si} 
\end{align}
where the final equality follows from the definition of $t_i$ which implies that $\bs^{(t_i)}=\bb_i/2$. 

Moreover, for $t \not\in S_i$, we know that $\lfloor
3\bb^t_i/\bs^{(t)} \rfloor \geq \frac32 \bb^t_i/\bs^{(t)}$ since
$\lfloor a \rfloor \ge a/2$ if $a>1$.  Therefore, using inequality
\eqref{hx-feas} in \eqref{eq:feas}, we get
\begin{eqnarray*}
\sum_{j\in \C}A^F[s]_{ij}\xints_j \ge  \sum_{t \geq t_i} \sum_{j \in \C^{(t)}} A^{(t)}_{ij}\bs^{(t)} x^{{\tt int},\Sm,t}_j & \geq & \sum_{t \not\in S_i, t \geq t_i} \bs^{(t)} \left\lfloor \frac{3\bb^{(t)}_i}{\bs^{(t)}} \right\rfloor\\
      & \geq & \frac32 \sum_{t \not\in S_i, t \geq t_i} \bb^{(t)}_i \\
      & \geq & \bb_i \ge b^F_i,
\end{eqnarray*}
where the second-last inequality uses \eqref{eq:si}, and the last uses the definition of $\bb_i$. This completes the proof of the lemma.
\end{proof} }
\end{proof}

\subsubsection{Large rows.}
The large rows can be showed to be a PCIP problem and thus Assumption 2 can be invoked to get
an analogous lemma to Lemma \ref{lem:small-rows}.
\begin{lemma}\label{lem:large-rows}
We can find an integral solution $\xintl$ such that \\
\indent
a) $\xintl_j \le 1$ for all $j$, \\ \indent
b) $\sum_{j\in \C}c_j\xints_j \le 8\omega \sum_{j\in \C}c_j\bx_j$, and \\ \indent
c) for every  large row $i\in \R_L$, $\sum_{j\in \C} A^F[s]_{ij}\xints_j \ge b^F_i$.
\end{lemma}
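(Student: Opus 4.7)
The approach is to extract from the large-row inequalities a priority covering system to which Assumption~\ref{as:2} directly applies. First, I unpack what it means for $i \in \R_L$ to be large, assuming $b^F_i > 0$ (if $b^F_i = 0$ the row is trivially covered). Using $y = 4\bx$, the fact that entries in $\La_i \cap F$ vanish, and that each $j \in \La_i \setminus F$ satisfies $A^F[\bs]_{ij} = b^F_i$ (because $\bs_j \ge \bb_i \ge b^F_i$), the large-row inequality $\sum_{j \in \La_i} A^F[\bs]_{ij}\, y_j \ge \bb_i/2$ rearranges to $\sum_{j \in \La_i \setminus F} 8\bx_j \ge 1$. The key observation is that $\La_i = \{j : A[\bs,\bb]_{ij} = 1\}$, where $A[\bs,\bb]$ is the priority matrix with priority supply $\bs$ and priority demand $\bb$. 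Thus $8\bx$ satisfies the canonical LP of a PCIP instance over the large rows.

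Next, I cap at $1$: set $\tilde{x}_j = \min\{8\bx_j, 1\}$. Feasibility is preserved row by row (if some $j^* \in \La_i \setminus F$ has $8\bx_{j^*} \ge 1$ then $\tilde{x}_{j^*} = 1$ alone covers row $i$; otherwise capping is a no-op on $\La_i \setminus F$ and the previous bound is untouched). Instantiating the PCIP with upper bounds $d_j = 1$ for $j \in \C \setminus F$ and $d_j = 0$ for $j \in F$ — consistent since $\bx_j = 0$ on $F$ — the vector $\tilde{x}$ is a feasible LP solution of cost at most $8\, c^T \bx$. Assumption~\ref{as:2} then produces an integral $\xintl$ supported on $\C \setminus F$ with $\xintl_j \in \{0,1\}$, cost $c^T \xintl \le 8\omega\, c^T \bx$, and $\sum_j A[\bs,\bb]_{ij}\, \xintl_j \ge 1$ for every $i \in \R_L$. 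Parts (a) and (b) follow immediately.

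For part (c), fix $i \in \R_L$ with $b^F_i > 0$; the integrality of $\xintl$ and the PCIP inequality guarantee some $j^* \in \La_i \setminus F$ with $\xintl_{j^*} = 1$, and then $s_{j^*} \ge \bs_{j^*} \ge \bb_i \ge b^F_i$ gives $A^F[s]_{ij^*} = b^F_i$, so $\sum_j A^F[s]_{ij}\, \xintl_j \ge b^F_i$. The only real content is the reduction to PCIP in the first paragraph; the one subtlety — preventing $\xintl$ from placing mass on columns in $F$ — is dissolved by zeroing their PCIP upper bounds.
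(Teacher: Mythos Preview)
Your proof is correct and takes essentially the same approach as the paper: observe that $8\bx = 2y$ fractionally solves the PCIP with priority matrix $A[\bs,\bb]$ restricted to the large rows, invoke Assumption~\ref{as:2}, and verify (a)--(c). You are in fact more careful than the paper about keeping $\xintl$ off the columns in $F$; one small quibble is that Assumption~\ref{as:2} as stated quantifies only over $s,\pi,c$ and not over the upper bounds $d$, so instead of imposing $d_j=0$ on $F$ you should set the priority supply to $0$ on those columns (which the assumption does permit, since $s$ is arbitrary) and then discard any $F$-columns from the integral output without loss.
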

\iffull{
\begin{proof}
Let $i \in \R_L$ be a large row, and recall that $\La_i$ is the set of $i$-large columns in \C. We have
$$ \sum_{j\in \La_i}A^F[s]_{ij}y_j = \sum_{j \in \La_i} A_{ij} \bb_i y_j \geq \bb_i/2, $$
and hence 
\begin{equation}\label{eq:trc} 
  2\sum_{j \in \La_i} A_{ij} y_j \geq 1. 
\end{equation}

Let $A^\R$ be the minor of $A$ induced by the large rows. Consider the
priority cover problem \cov{A^\R[\bs,\bb],\mathds{1},c}. From the
definition of $\La_i$, it follows $2y$ is a feasible fractional
solution to the priority cover problem.

Using Assumption \ref{as:2}, we conclude that there is an 
integral solution $\xintl$ such that $\sum_{j\in \C}c_j\xintl_j \le 2\omega \sum_{j \in \C} c_jy_j = 8\omega \sum_{j \in \C} c_j \bx_j$,
and  $\sum_{j\in \C}A^\R_{ij}\xintl_{j} \ge 1$, for all large rows $i\in \R_L$.

Fix a large row $i$.  Since $A^F[s]_{ij} = b^F_i$  for all $i$-large columns $\La_i$, we get
$$\sum_{j\in \C}A^F[s]_{ij}\xintl_j \ge \sum_{j\in \La_i} A_{ij}b^F_i\xintl_{j} = b^F_i \sum_{j\in C} A^\R_{ij}\xintl_j \ge b^F_i$$
This completes the proof of the lemma.
\end{proof}

\ni
{\bf Proof of Theorem \ref{thm:1}}
Let $\xints$ and $\xintl$ be as satisfying the conditions of Lemma \ref{lem:small-rows} and \ref{lem:large-rows}, respectively. Define $\xint$ as
$\xint_j = \max\{\xints_j , \xintl_j\}$. We have\\

\noindent
a) $\xint_j \le d_j$ since both $\xints_j \le d_j$ and $\xintl_j \le 1 \le d_j$. \\

\noindent
b) For any row $i$, $\sum_{j\in \C} A^F[s]_{ij}\xint_j \ge b^F_i$ since the inequality is true with $\xint$ replaced by $\xints$ for small rows, and 
$\xint$ by $\xintl$ for large rows. \\

\noindent
c) $\sum_{j\in \C} c_j\xint_j \le \sum_{j\in \C}c_j\xints_j + \sum_{j\in\C} c_j\xintl_j \le (24\gamma+8\omega)\sum_{j\in \C} c_j\bx_j$. \\

\ni
Thus, $\xint$ is a feasible integral solution to $\cov{A^F[s],b^F,c,d}$ with cost bounded as $\sum_{j\in \C} c_j\xint_j \le (24\gamma+8\omega)\sum_{j\in \C} c_j\bx_j$. Noting that $\alpha=1/24$, the proof of the theorem follows from Lemma \ref{lem:res}. $\square$.
}
\ifconf{
Define  $\xint$ as $\xint_j = \max\{\xints_j , \xintl_j\}$ for all $j$; using the previous two lemmas and Lemma \ref{lem:res}, this integral solution proves Theorem \ref{thm:1}.
}

\iffull{
\subsection{CCIPs with violation of upper-bounds: Proof of Theorem \ref{thm:2}}
In this section we prove Theorem \ref{thm:2} that we restate here. In the proof, we will indicate how
we modify the analysis of grouping-and-scaling that allows us to replace the no-bottleneck assumption
with a weaker one.
\begin{theorem}(Theorem \ref{thm:2})
Under assumption \ref{as:1} and assuming $A_{ij} s_j \le b_i$, for all $i,j$, given a fractional solution $x$ to 
the canonical LP relaxation of \cov{A[s],b,c,d}, one can find an integral solution $\xint$ whose cost 
$c\cdot \xint \le 10\gamma (c\cdot x)$ and $\xint \le 10d$.
\end{theorem}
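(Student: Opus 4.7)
\textbf{Proof plan for Theorem \ref{thm:2}.} The central observation is that the weaker no-bottleneck hypothesis $A_{ij}s_j\le b_i$ says exactly that every column $j$ lying in row $i$'s support has $s_j\le b_i$, i.e.\ every column is $i$-small in the language of Section~\ref{sec:ccip}. Because of this, the entire machinery of large-versus-small rows, knapsack-cover strengthening, and Assumption~\ref{as:2} can be discarded: it suffices to apply the grouping-and-scaling template to the plain canonical LP solution $x$, and the modified analysis developed for small rows in Lemma~\ref{lem:small-rows} will carry over essentially verbatim.

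First I would round to powers of $2$, letting $\bar b_i$ be the smallest power of $2$ at least $b_i$ and $\bar s_j$ the largest power of $2$ at most $s_j$, so that $y := 4x$ is feasible for \cov{A[\bar s],\bar b,c,4d}. The hypothesis then upgrades to $\bar s_j\le \bar b_i$ whenever $A_{ij}=1$. Next, partition the columns into classes $\C^{(t)}=\{j:\bar s_j = \bar s^{(t)}\}$ and, for each row $i$ and class $t$, define $\bar b^{(t)}_i = \bar s^{(t)}\sum_{j\in \C^{(t)}}A_{ij}y_j$. Since no column is $i$-large, $\sum_t \bar b^{(t)}_i \ge \bar b_i$ (no factor-of-$2$ loss from the row split is incurred). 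For a constant $K$ to be fixed, I would apply Assumption~\ref{as:1} to the 0,1-CIP on columns $\C^{(t)}$ with demand $\lfloor K\bar b^{(t)}_i/\bar s^{(t)}\rfloor$ and fractional certificate $Ky|_{\C^{(t)}}$, obtaining an integer $z^{(t)}\le Kd$ with $c^Tz^{(t)}\le \gamma K\cdot c^Ty|_{\C^{(t)}}$. Concatenating the $z^{(t)}$ gives $\xint$ with $\xint\le 4Kd$ and $c^T\xint\le 4K\gamma\cdot c^Tx$.

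The only nontrivial step is the feasibility check for the original instance, and this is exactly the step in which the weakened no-bottleneck assumption earns its keep. Mimicking the proof of Lemma~\ref{lem:small-rows}, let $S_i=\{t:0<\bar b^{(t)}_i<\bar s^{(t)}/K\}$ collect the classes lost to rounding. Because $\bar b^{(t)}_i>0$ forces some $j\in\C^{(t)}$ to sit in row~$i$'s support, the assumption gives $\bar s^{(t)}=\bar s_j\le \bar b_i$ for all $t\in S_i$, and since the $\bar s^{(t)}$ form a geometric sequence we obtain $\sum_{t\in S_i}\bar s^{(t)}\le 2\bar b_i$. Combined with $\lfloor a\rfloor\ge a/2$ for $a\ge 1$, this yields
\[
\sum_j A_{ij}s_j\xint_j \;\ge\; \sum_t \bar s^{(t)}\lfloor K\bar b^{(t)}_i/\bar s^{(t)}\rfloor \;\ge\; \tfrac{K}{2}\!\!\sum_{t\notin S_i}\!\!\bar b^{(t)}_i \;\ge\; \tfrac{K}{2}\Bigl(\bar b_i - \tfrac{2\bar b_i}{K}\Bigr) \;\ge\; b_i,
\]
provided $K$ is chosen large enough. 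Tuning $K$ (and where necessary avoiding the crude $y=4x$ step by grouping via $s_j\in[2^t,2^{t+1})$ directly on the unrounded instance) produces the stated $10\gamma$ cost and $10d$ upper-bound violation.

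The main obstacle will be squeezing the constants down to $10$ while keeping the exposition clean: the naive powers-of-$2$ reduction costs a factor of $4$ up front, so to reach $10$ one either has to group the original supplies into geometric buckets without the power-of-$2$ rounding, or tighten the floor estimate by bookkeeping the two largest classes of $S_i$ separately, so that $\sum_{t\in S_i}\bar s^{(t)}$ is bounded by a smaller multiple of $\bar b_i$. Either way, the core argument remains the grouping-and-scaling reduction combined with the geometric-series bound on $\sum_{t\in S_i}\bar s^{(t)}$ enabled by the weak no-bottleneck hypothesis.
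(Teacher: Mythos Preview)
Your proposal is correct and follows essentially the same grouping-and-scaling template as the paper, including the key geometric-series bound on the ``lost'' classes that is enabled by the weak no-bottleneck hypothesis. The paper's proof differs only in that it skips the power-of-$2$ rounding and works directly with classes $\C^{(t)}=\{j:2^{-(t+1)}s_{\max}<s_j\le 2^{-t}s_{\max}\}$; in the scaling-back step it multiplies by the row-dependent quantity $m^t_i:=\min\{s_j:j\in\C^{(t)},\,A_{ij}=1\}$ rather than by the class-level supply $\bar s^{(t)}$, and then bounds $\sum_t m^t_i\le 3m_i\le 3b_i$ via the same geometric decay you exploit. This is exactly the ``group $s_j\in[2^t,2^{t+1})$ directly on the unrounded instance'' alternative you already flagged, and with the multiplier $10$ in place of your $4K$ it yields the stated constants $10\gamma$ and $10d$ without further bookkeeping.
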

\begin{proof}
\def\xint{x^{\tt int}}
\def\tplus{2^{-(t+1)}s_{max}}
\def\t{2^{-t}s_{max}}
\def\Aij{A_{ij}}
Let $x$ be a feasible solution to $A[s]x \ge b, x\ge 0$. We construct an integral solution $\xint$ such that $A[s]\xint \ge b$ and $c^T\xint \le 10\gamma c^Tx$.
Let $s_{max}$ and $s_{min}$ be the largest and smallest $s_j$'s.\\

\noindent
{\bf \em Grouping:}
Let $\C^{(t)} := \{j: \tplus ~ < s_j \le ~~ \t\}$ for $t=0,1,\ldots T$
where $T = \log(\frac{s_{max}}{s_{min}})$.  Let $b^{t}_i :=
\sum_{j \in \C^{(t)}} \Aij s_j x_j$. Note that $\sum_{t=0}^T b^t_i \ge
b_i$. Let $m^t_i := \min_{j\in \C^{(t)}: \Aij\neq 0} s_j\Aij$, that
is, $m^t_i$ is the smallest non-zero entry of the $i$th row of $A$ in
the columns of $\C^{(t)}$.  Note that $m^t_i > \tplus$. Let $m_i$ be
the largest entry of row $i$. The assumption $A_{ij}s_j \le b_i$
implies
$m_i \le b_i$. \\

\ni
\def\s{\hat{s}}
{\bf \em Scaling:}
Let $y^t$ be a vector with $y^t_j = 10x_j$ for $j\in \C^{(t)}$, $0$ elsewhere. Note that $\sum_t c^T y^t = 10c^T x$
and $y^t_i\le 10d_i$ for any $i$.
Let $\s^t$ be a vector with $\s^t_j = \tplus$ for $j\in \C^{(t)}$, $0$ otherwise. Since for all $j\in \C^{(t)}$, $\s^t_j \ge s_j/2$, for all rows $i$ we have 
$$\sum_{j\in \C^{(t)}} \Aij \s^t_jy^t_j \ge 5 \sum_{j\in \C^{(t)}} \Aij s_jx_j = 5b^t_i$$
\ni
Therefore since $m^t_i \ge \tplus$, we get
\begin{align*} \label{eq:ineq1} \sum_{j\in \C^{(t)}} \Aij y^t_j \ge
  \frac{5b^t_i}{\tplus} \ge \frac{5b^t_i}{m^t_i} \ge \left\lfloor
    \frac{5b^t_i}{m^t_i} \right\rfloor
\end{align*}
\ni If we define an integral vector $a^t$ to be $a^t_i := \lfloor
\frac{5b^t_i}{m^t_i} \rfloor$, we see that $Ay^t \ge a^t$.
Using assumption \ref{as:1},  there exists an integral solution $z^t$ such that $Az^t \ge a^t$, and $c^T z^t \le \gamma(c^T y^t)$, and $z^t_i \le 10d_i$. \\

\ni
{\bf \em Scaling back:}  Now fix a row $i$, and look at 
$$\sum_{j\in \C^{(t)}} \Aij s_j z^t_j \ge \sum_{j\in \C^{(t)}} \Aij m^t_i z^t_j = m^t_i \sum_{j\in \C^{(t)}} \Aij z^t_j \ge m^t_i \lfloor  \frac{5b^t_i}{m^t_i} \rfloor$$
where the first inequality follows 
since $m^t_i$ is the minimum entry in the $i$th row in the columns of $\C^{(t)}$. This is where our analysis slightly differs
from the previous analyses of grouping and scaling, where instead of multiplying the RHS by $m^t_i$, the RHS was multiplied
by $2^{-t}s_{max}$. This subtle observation leads us to make a weaker assumption than the no-bottleneck assumption. \\

\def\zint{z^{\bf int}}
\ni
{\bf \em Getting the final integral solution:} \\
Define $\xint := \sum_{t=0}^T z^t$. Note that $c^T \xint = \sum_t c^T z^t \le \gamma \sum_t c^T y^t = 10\gamma (c^T x)$
and $\xint \le 10d$. \\

\ni
Fix a row $i$ and look at the $i$th entry of $A[s]\xint$.
\begin{equation}\label{eq:app1}
\sum_{t=0}^T\sum_{j\in \C^{(t)}}  \Aij s_j z^t_j \ge  \sum_{t=0}^T  \lfloor  \frac{5b^t_i}{m^t_i} \rfloor m^t_i
\end{equation}
\ni
Let $S_i := \{t: 5b^t_i < m^t_i\}$. Note that 
$$\sum_{t\in S_i} b^t_i < \frac{1}{5} \sum_{t\in S_i} m^t_i \le 3m_i/5$$
the second inequality following from Claim \ref{claim:m} below. This gives us
$$\sum_{t\notin S_i} b^t_i > \sum_{t=0}^T b^t_i - 3m_i/5 \ge b_i - 3m_i/5$$

For $t\notin S_i$, we have the floor in the inequality \eqref{eq:app1} at least 1. So we can use the relation $\lfloor x \rfloor \ge x/2$ for $x\ge 1$. Thus, using $m_i \le b_i$,
we have

$$ A\xint \ge \sum_{t\notin S_i} \frac{5b^t_i}{2} \ge 5b_i/2 - 3m_i/2 \ge b_i$$

\begin{claim}\label{claim:m}
$\sum_{t=0}^T m^t_i \le 3m_i$. 
\end{claim}
\begin{proof}
Note that the non-zero $m^t_i$ decreases as $t$ goes from $0$ to $T$. Also, for any $t < t'$, we have $m^t_i > \tplus$ and $m^{t'}_i \le 2^{-t'}s_{max}$. Thus, $m^{t'}_i \le m^t_i \cdot 2^{-(t' - t - 1)}$. Since the largest $m^t_i$ can be at most $m_i$, 
$\sum_{t=0}^T m^t_i \le m_i + m_i + m_i/2 + m_i/4 + .... \le 3m_i$. 
\end{proof}\end{proof}
}
\section{Priority line cover}
\label{sec:plc}

\ifconf{ In this extended abstract, we show that the integrality gap
  of the canonical linear programming relaxation of PLC is at most
  $2$.  Subsequently, we sketch an exact combinatorial algorithm for
  the problem.  } 

\iffull{ We first show that the integrality gap of
  the canonical linear programming relaxation of PLC is at least $3/2$
  and at most $2$. Subsequently, we present an exact combinatorial
  algorithm for the problem.  }

\subsection{Canonical LP relaxation: Integrality gap}
We start with the canonical LP relaxation for PLC and its dual in Figure \ref{fig:PD}.
\begin{figure}[h]
  \begin{minipage}{\halftw} \begin{align}
      \min \Big\{\sum_{j \in \ess} c_jx_j: \quad& x \in R^\ess_+
\tag{Primal}\label{lp:primal} \\
      \sum_{j \in \ess: j\,\textrm{\scriptsize  covers } e} x_j \geq 1, \quad& \forall e
\in E \notag \Big\}
    \end{align} \end{minipage}
  \hfill \vline \hfill
  \begin{minipage}{\halftw} \begin{align}
      \max \Big\{\sum_{e\in E}  y_e: \quad& y \in R^E_+ \tag{Dual}\label{lp:dual} \\
      \sum_{e\in E: j \,\textrm{\scriptsize  covers } e} y_e \leq c_j,
\quad&\forall j\in \ess  \notag \Big\}
    \end{align}\end{minipage}
  \caption{\small The PLC canonical LP relaxation and its
dual.}\label{fig:PD} \end{figure}

\iffull{
The following example shows that the integrality gap of
\eqref{lp:primal} is at least $3/2$.

\begin{example}
  {\em Figure \ref{fig:1} shows a line of odd length $k$; odd numbered
  edges have demand $1$, and even numbered edges have a demand of
  $2$. Paths are shown as lines above the line graph, and are also
  numbered. Odd numbered paths have a supply of $2$, and even numbered
  ones have a supply of $1$. Dashed lines indicate edges spanned but
  not covered.  All paths have cost $1$. Note that a fractional
  solution is obtained by letting $x_p=2/3$ for paths $2$ and $k$, and
  $x_p=1/3$ otherwise.  The cost of this solution is $(k+3)/3$, while
  the best integral solutions takes all odd-numbered paths, and has
  cost $(k+1)/2$. As $k$ tends to $\infty$, the ratio between the
  integral and fractional optimum tends to $3/2$. As an aside, we
  found the above integrality gap instance by translating a known
  integrality-gap instance of the tree-augmentation problem in
  caterpillar graphs; see \cite{CK+08}.}

 \begin{figure}[h]
    \begin{center}
      \includegraphics[scale=0.8]{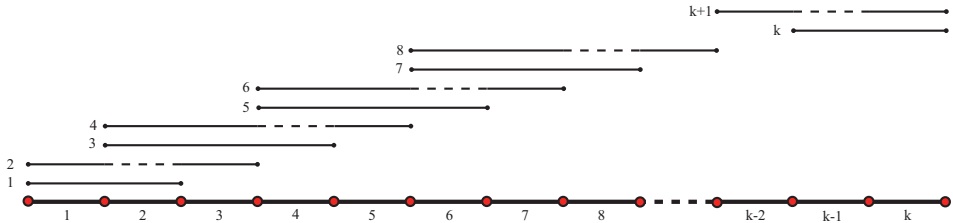}
      \label{fig:1}  \caption{Integrality Gap for PLC}
    \end{center}
  \end{figure}
\end{example}

\noindent
We now show that the integrality gap of the canonical LP for PLC is bounded by $2$. 
We describe a simple primal-dual algorithm that constructs a
feasible line cover solution and a feasible dual solution, and the
cost of the former is at most twice the value of the dual solution. 

The algorithm maintains a set of segments $Q$.  Call an edge $e$ {\em
  unsatisfied} if no segment in $Q$ covers $e$. Let $U$ be the set of
unsatisfied edges.  Initially $Q$ is the empty set and $U = E$.  We
grow duals $y_e$ on certain edges, as specified below. We let $E_+$
denote the edges with positive $y_e$; we call such edges, {\em
  positive} edges. Initially $E_+$ is empty.  Call a segment $j$ {\em
  tight} if $\sum_{e\in j: j~covers~e} y_e = c_j$.
We use the terminology an edge $e$ is larger than $f$, if $\de_e \ge \de_f$. \\

\vspace{2mm}
\hspace{-6mm}
\begin{boxedminipage}{\textwidth}
{\bf Primal-Dual Algorithm} 
\begin{enumerate}
\item While $U$ is not empty do
\begin{itemize}
	\item Breaking ties arbitrarily, pick the largest edge $e$ in $U$.
	\item Increase $y_e$ till some segment becomes tight. Note
          that each such segment must contain $e$. Let $j_l(e)$ and
          $j_r(e)$ be the tight segments that have the smallest
          left-end-point and the largest right-end-point,
          respectively. Since $e$ is chosen to be the largest
          uncovered edge, any unsatisfied edge contained in 
          the two segments $j_l(e)$
          or $j_r(e)$ is also covered.  We say $e$
          is responsible for $j_l(e)$ and $j_r(e)$.
	
	Add $j_l(e),j_r(e)$ to $Q$. Add $e$ to $E_+$.
	Remove all the unsatisfied edges contained in either $j_l(e)$ or $j_r(e)$ from $U$.
\end{itemize}

\item {\bf Reverse Delete:} Scan the segments $j$ in $Q$ in the
  reverse order in which they were added, and delete $j$ if its
  deletion doesn't lead to uncovered edges.
\end{enumerate}
\end{boxedminipage}

\vspace{4mm}
\noindent
It is clear that the final set $Q$ is feasible. It is also clear that
$y$ forms a feasible dual.  The factor $2$-approximation follows from
the following lemma by a standard relaxed complementary slackness
argument, and this finishes the proof of Theorem \ref{thm:plc-gap}.
\begin{lemma}\label{lem:atmost2}
  Any edge $e \in E_+$ is covered by at most two segments in $Q$.
\end{lemma}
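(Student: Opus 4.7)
The plan is to proceed by contradiction: suppose three distinct segments $j_1, j_2, j_3$ in the final $Q$ all cover $e \in E_+$. Since each such segment's interval contains $e$, let $J_\ell$ and $J_r$ denote the segments of smallest left-endpoint and largest right-endpoint among $\{j_1, j_2, j_3\}$, and pick $j \in \{j_1, j_2, j_3\} \setminus \{J_\ell, J_r\}$, which is nonempty since $|\{J_\ell, J_r\}| \leq 2$. By construction $[\ell_j, r_j] \subseteq [\ell_{J_\ell}, r_{J_r}]$, and this latter interval equals $[\ell_{J_\ell}, r_{J_\ell}] \cup [\ell_{J_r}, r_{J_r}]$ since both $J_\ell$ and $J_r$ contain $e$.

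The next step is to show that $j$ cannot have survived the reverse-delete phase, contradicting $j \in Q$. The critical preliminary observation is that since $e \in E_+$ was still in $U$ at its processing time $t_e$, no segment in $Q$ covered $e$ prior to $t_e$; hence every segment covering $e$ in the final $Q$, in particular each of $j, J_\ell, J_r$, was added at or after $t_e$. As $J_\ell$ and $J_r$ belong to the final $Q$, they are never removed by reverse delete and so remain in the working set at the moment $j$ is examined. For $j$ to survive its reverse-delete step, some witness edge $e^*$ must be uniquely covered by $j$ within that working set; in particular $e^* \in [\ell_j, r_j] \subseteq [\ell_{J_\ell}, r_{J_r}]$.

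I would conclude with a case analysis on $\de_{e^*}$. If $\de_{e^*} \leq \de_e$, then one of $J_\ell, J_r$ covers $e^*$: both have priority supply at least $\de_e \geq \de_{e^*}$, and their combined interval contains $e^*$, contradicting unique coverage by $j$. If instead $\de_{e^*} > \de_e$, then $e^*$ cannot have been in $U$ at time $t_e$ (since $e$ was chosen as the largest edge in $U$), so $e^*$ was covered by some segment added strictly before $t_e$; being added before $j$, this earlier segment also lies in the reverse-delete working set when $j$ is considered, again contradicting uniqueness. The main obstacle turns out to be only a bookkeeping point — namely, verifying that every segment added before $j$ really is present in the reverse-delete working set at $j$'s consideration — which follows directly from the reverse-order scan of the reverse-delete phase.
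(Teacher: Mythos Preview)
Your proof is correct and follows essentially the same approach as the paper. The paper avoids your Case~1/Case~2 split by directly deducing $\de_{e^*} > \de_e$ from the fact that the witness edge is contained in (but not covered by) one of $J_\ell, J_r$, and then uses the same reverse-delete timing argument you give in Case~2; your explicit observation that all three segments must have been added at or after $t_e$ is a clean way to organize this.
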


\begin{proof}
  Suppose there is an edge $e\in E_+$ covered by three segments
  $j_1,j_2$ and $j_3$.  Observe that one of the segments, say $j_2$,
  must be completely contained in $j_1\cup j_3$.  Since $j_2$ is not
  deleted from $Q$, there must be an edge $f\in j_2$ such that $j_2$
  is the only segment in $Q$ covering $f$. Since $j_1$ and $j_3$ don't
  cover $f$, but one of them, say $j_1$ contains it, this implies
  $\de_f > \sup_{j_1} \ge \de_e$. That is, $f$ is larger than $e$.

  If $f$ is the edge responsible for $j_2$, then since $j_2$ contains
  $e$, $e$ wouldn't be in $E_+$. Since $f$ is larger than $e$, there
  must be a segment $j$ in $Q$ added before $j_2$ that covers $f$. In
  the reverse delete order, $j_2$ is processed before $j$.  This
  contradicts that $j_2$ is the only segment in $Q$ covering $f$.
\end{proof}

\begin{lemma}
  $\sum_{j\in Q} c_j \le 2\sum_{e\in E} y_e$.
\end{lemma}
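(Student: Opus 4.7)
The plan is to use a standard primal-dual accounting argument built on the tightness of chosen segments and on Lemma \ref{lem:atmost2}.

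First I would observe that every segment $j$ added to $Q$ must be tight at the moment it is added, since segments are only added when a dual raise causes them to become tight. Moreover, duals are never decreased during the algorithm, and the reverse delete phase only removes segments from $Q$, so each surviving $j \in Q$ still satisfies the tightness equality $c_j = \sum_{e: j\text{ covers } e} y_e$. This lets me replace the cost of $Q$ by a double sum over (segment, covered edge) pairs.

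Next I would swap the order of summation:
\[
\sum_{j \in Q} c_j \;=\; \sum_{j \in Q} \sum_{e : j \text{ covers } e} y_e \;=\; \sum_{e \in E} y_e \cdot \bigl|\{ j \in Q : j \text{ covers } e\}\bigr|.
\]
Only edges $e \in E_+$ contribute, since $y_e = 0$ otherwise. Applying Lemma \ref{lem:atmost2} bounds the multiplicity $|\{j \in Q : j \text{ covers } e\}|$ by $2$ for every $e \in E_+$, which yields
\[
\sum_{j \in Q} c_j \;\le\; 2 \sum_{e \in E_+} y_e \;=\; 2 \sum_{e \in E} y_e,
\]
as desired.

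The only subtlety worth verifying, and the step that could mislead, is tightness after reverse delete. The reverse delete phase removes some segments from $Q$ but never modifies $y$; since tightness is a property of $(c_j, y)$ alone and the duals $y$ are frozen once the main loop ends, every segment surviving in $Q$ is still tight. Thus no extra argument is needed, and the proof reduces essentially to the double-counting step together with the already-proven Lemma \ref{lem:atmost2}.
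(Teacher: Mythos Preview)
Your proposal is correct and follows essentially the same argument as the paper: use tightness of each $j\in Q$ to write $\sum_{j\in Q} c_j$ as a double sum, swap the order of summation, and bound the per-edge multiplicity by $2$ via Lemma~\ref{lem:atmost2}. Your added remark that reverse delete does not affect tightness (since $y$ is frozen) is a helpful clarification the paper leaves implicit.
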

\begin{proof}
  Since each $s\in Q$ satisfies $\sum_{e\in j: j~covers~e} y_e = c_j$, we get
  $$\sum_{j\in Q} c_j  = \sum_{j\in Q} \sum_{e\in j: j~covers~e} y_e = 
  \sum_{e\in E} y_e \cdot |\{j\in Q: j~covers~e\}| \le 2\sum_{e\in E} y_e$$
\end{proof}
}
\ifconf{
We use the terminology an edge $e$ is larger than $f$, if $\pi_e \ge \pi_f$. 
The algorithm maintains a set of segments $Q$ initially empty.  Call an edge $e$ {\em
  unsatisfied} if no segment in $Q$ covers $e$ and let $U$ be the set of
unsatisfied edges. The algorithm picks the largest edge in $U$ and raises the dual 
value $y_e$ till some segments becomes tight. The segments with the farthest left-end point
and the farthest right-end point are picked in $Q$, and all edges contained in any of them 
are removed from $U$. Note that since we choose the largest in $U$, all such edges are covered.
The algorithm repeats this process till $U$ becomes $\emptyset$, that is,
all edges are covered. The final set of segments is obtained by a reverse delete step, where a segment
is deleted if its deletion doesn't make any edge uncovered.

The algorithm is a factor $2$ approximation algorithm. To show this it suffices 
by a standard argument for analysing primal-dual algorithms, that any edge 
with a positive dual $y_e$ is contained in at most two segments in $Q$. 
These two segments correspond to the left-most and the right-most segments 
that cover $e$; it is not too hard to show if something else covers $e$, then either
$e$ has zero dual, or the third segment is removed in the reverse delete step.
}
\subsection{An Exact Algorithm for PLC}
\iffull{
We first describe the sketch of the algorithm; the full proof starts
from Section \ref{sec:vms}.}
\ifconf{We sketch the exact algorithm for PLC.}
 A segment $j$ covers only a subset of
edges it contains. We call a contiguous interval of edges covered by
$j$, a {\em valley} of $j$. The uncovered edges form {\em
  mountains}. Thus a segment can be thought of as forming a series of
valleys and mountains.

Given a solution $S\subseteq \ess$ to the PLC (or even a PTC)
instance, we say that segment $j\in S$ is {\em needed} for edge $e$ if
$j$ is the unique segment in $S$ that covers $e$. We let $E_{S,j}$
be the set of edges that need segment $j$. 
We say a solution is {\em
  valley-minimal} if it satisfies the following two properties: (a) If
a segment $j$ is needed for edge $e$ that lies in the valley $v$ of
$j$, then no higher supply segment of $S$ intersects this valley $v$,
and (b) every segment $j$ is needed for its last and first edges.  We
show that an optimum solution can be assumed to be valley-minimal, and
thus it suffices to find the minimum cost valley-minimal solution.

The crucial observation follows from properties (a) and (b) above.
The valley-minimality of solution $S$ implies that there is a unique
segment $j \in S$ that covers the first edge of the line. At a very
high level, we may now use $j$ to decompose the given instance into a
set of {\em smaller} instances.  For this we first observe that each
of the remaining segments in $S \setminus \{j\}$ is either fully
contained in the strict interior of segment $j$, or it is disjoint
from $j$, and lies to the right of it.  The set of all segments that
are disjoint from $j$ form a feasible solution for the smaller PLC
instance induced by the portion of the original line instance to the
right of $j$. On the other hand, we show how to reduce the problem of
finding an optimal solution for the part of the line contained in $j$
to a single shortest-path computation in an auxiliary digraph.  Each
of the arcs in this digraph once again corresponds to a smaller
sub-instance of the original PLC instance, and its cost is that of its
optimal solution. The algorithm follows by dynamic programming.
\iffull{
\subsubsection{Valley-Minimal Solutions}\label{sec:vms}
As mentioned above, it helps to think of supplies and demands as {\em
  heights}. In the case of PLC, the demands of the edges in $E$ form a
terrain, and each segment $j \in \ess$ corresponds to a straight line
at height $\su_j$. Segment $j$ then covers edge $e$ if $e$ lies in the
segment's {\em shadow}, that is, the height of $e$ is smaller than the
height of the segment.

\begin{figure}[h]
\begin{center}
 \includegraphics[scale=0.8]{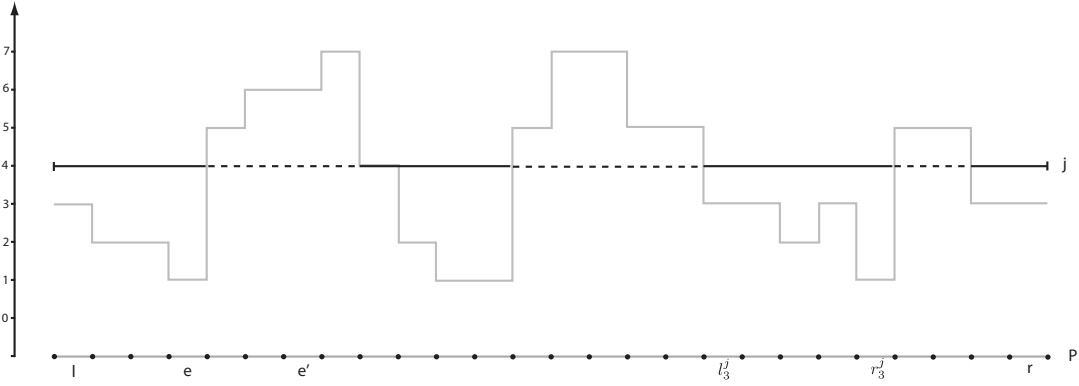}
\end{center}
\caption{\label{fig:terr} The figure shows a segment $j$, and the terrain
  induced by the edges of $E$ that it contains. The terrain partitions
  $j$ into valleys and mountains. Valleys are indicated by solid parts of
  $j$, and mountains are shown as dashed lines.}
\end{figure}

Figure \ref{fig:terr} illustrates this with path $P$ and its edges. The light
gray terrain indicates the demands of the edges. The segment $j$ shown
in the picture covers the edges in $[l,r]$ that lie in its shadow;
e.g., $j$ covers edge $e$ but not $e'$. The terrain partitions
$j$ naturally into {\em valleys} -- contiguous sub-intervals of
$[l,r]$ that are in the shadow of $j$, and {\em mountains} -- those
sub-intervals that are contained in $[l,r]$ and consist entirely of
edges that are not covered by $j$. The parts of $j$ that correspond to
mountains are indicated by dashed lines, and valleys are depicted by
solid lines. In the following, we let $[l^j_k,r^j_k]$ be the interval
corresponding to the $k$th valley of $j$.

In the following, we will assume that the set of segments \ess\ in the
given PLC/PTC instance is {\em segment-complete}; i.e., if \ess\
contains the segment $j$ then it also contains all proper
sub-segments. For example, if a PLC instance contains segment $j$
corresponding to interval $[l^j,r^j]$, then it also contains segments
corresponding to intervals $[l,r]$ for all $l^j \leq l \leq r \leq
r^j$. This assumption is w.l.o.g. as we can always add a {\em dummy}
sub-segment $j'$ for any such interval $[l,r]$ with the same supply
and cost as $j$. Any minimal solution clearly uses at most one of $j$
and $j'$, and if $j'$ is used, then replacing it with $j$ does not
affect feasibility.

Let $S \subset \ess$ be an inclusion-wise minimal solution for the given 
instance, and let $j \in S$ be any one of its segments. We say that $j$ is
{\em needed} for edge $e \in E$ if $j$ covers $e$, and if there is no
other segment in $S$ that covers $e$; let $E_{S,j}$ be the set of edges that 
need $j$, and hence $E_{S,j}\neq \varnothing$ for all $j \in S$. Thus, if
$j$ is needed for $e$, then $e$ is in one of $j$'s valleys; we let $\val^j_e$
be that valley. 

A solution $S \subseteq \ess$ is {\em
  valley-minimal} if 
\begin{itemize}
\item[ {[M1]} ] for all $j \in S$ and for all $e \in E_{S,j}$, no
  segment of higher supply in $S$ covers any of the edges in
  $\val^j_e$, and
  \item[ {[M2]} ]  each segment is needed for its first and 
    last edge.
\end{itemize}

We obtain the following observation.

\begin{lemma}
  Given a feasible instance of PLC/PTC, there exists an optimum feasible
  solution that is valley-minimal.
\end{lemma}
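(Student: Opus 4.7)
The plan is to start from any optimum inclusion-wise minimal solution $S$ (such an $S$ exists because dropping a redundant segment from an optimum only decreases cost) and convert it into a valley-minimal optimum solution using only \emph{trimming}: replacing some $j \in S$ by a sub-segment $j^* \in \ess$, which exists with the same supply $\su_j$ and cost $c_j$ by the segment-completeness assumption. I would first achieve [M2] by iterated trimming, and then show that [M2] together with the inclusion-wise minimality of $S$ automatically forces [M1], with no further modifications needed.

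For Phase 1, while some $j \in S$ has its first or last edge outside $E_{S,j}$, I replace $j$ by the sub-segment whose interval is exactly $[\min E_{S,j}, \max E_{S,j}]$ (non-empty because minimality gives $E_{S,j} \neq \emptyset$). Cost is preserved, and feasibility is preserved because every edge that $j$ loses coverage of is, by definition of $E_{S,j}$, covered by some other segment of $S$. Termination is immediate, since each trim strictly shrinks some segment's interval. The bookkeeping point I need to verify is that trimming $j$ can only \emph{enlarge} the sets $E_{S,j'}$ for the other $j'$, never shrink them, and that the enlarged sets still lie inside each $j'$'s current interval, so the iteration remains well-defined. One must perform this iteratively and not simultaneously: trimming two overlapping segments at the same time can orphan an edge that was doubly covered.

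For Phase 2, suppose for contradiction that there exist $j \in S$, $e \in E_{S,j}$, and $j' \in S$ with $\su_{j'} > \su_j$ such that $j'$ covers some $f \in \val^j_e$. Then $e$ cannot lie in $j'$'s interval, for otherwise $\de_e \leq \su_j < \su_{j'}$ would make $j'$ cover $e$, contradicting $e \in E_{S,j}$. Since $f$ does lie in $j'$'s interval, without loss of generality $f < e$, and the right endpoint $r'$ of $j'$ satisfies $f \leq r' < e$. The valley $\val^j_e$ is a contiguous run of edges containing both $f$ and $e$, so it also contains $r'$; in particular $j$ covers $r'$. But [M2] applied to $j'$ gives $r' \in E_{S,j'}$, meaning only $j'$ covers $r'$ in $S$, contradicting the fact that $j$ also covers $r'$.

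I expect the main obstacle to be the careful feasibility and termination bookkeeping in Phase 1; once that is clean, Phase 2 is a short endpoint-based contradiction. The PTC case mentioned in the lemma runs on the same two-phase template, but Phase 2 needs a mild adaptation, because in a tree $j'$ can branch off $j$'s path at an interior vertex rather than terminating inside the valley; the fix is to trace $j'$ from $f$ toward $e$ along $j$'s path and argue that the ``exit'' edge is either an endpoint of $j'$ (yielding the same [M2] contradiction) or is doubly covered in a way that still contradicts minimality.
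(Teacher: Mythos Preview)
Your Phase~1 is the same as the paper's, but your Phase~2 is genuinely different and, for PLC, cleaner. The paper does \emph{not} argue that [M2] forces [M1]; instead it picks, among optimal solutions, one with the fewest violating triples $(j,j',e)$ and then further trims the higher-supply segment $j'$ (removing its prefix lying in $\val^j_e$) to reduce that count. Your endpoint argument bypasses this second round of surgery entirely: once [M2] holds, any violation of [M1] forces an endpoint of $j'$ to land inside $\val^j_e$, where $j$ already covers it, contradicting [M2]. This is shorter and avoids the bookkeeping of counting triples. (A small wording point: your Phase~2 uses only [M2], not inclusion-wise minimality.)

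For PTC, however, your plan breaks. The implication ``[M2] $\Rightarrow$ [M1]'' is simply false on trees, so no amount of tracing exit edges will rescue it. Concretely: take the path $r\!-\!a_2\!-\!a_1\!-\!a\!-\!b\!-\!c$ with an extra branch $b\!-\!d\!-\!d'$; set demands $\de_{(a,a_1)}=5$, all other edges on the $c$--$r$ path have demand $1$, and $\de_{(d,b)}=1$, $\de_{(d',d)}=5$. With just two segments $j=(c,r)$ of supply $1$ and $j'=(d',a_1)$ of supply $5$ (plus their sub-segments at equal cost), the unique optimum is $\{j,j'\}$; it satisfies [M2], yet $j'$ covers $(b,a)\in\val^j_{(c,b)}$, violating [M1]. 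Here $j'$'s bottom edge $(d',d)$ is off $j$'s path and $j'$'s top edge $(a,a_1)$ lies in $j$'s mountain, so neither endpoint falls in the valley and your contradiction never fires. Note that the paper's own Phase~2 argument is written only for the line (it uses that $j'$ lies entirely in $[1,e)$ or $(e,n]$) and would also fail to handle this branching case, so the PTC clause of the lemma is not actually established in the paper either.
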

\begin{proof}
  First, it is not too hard to see that we can always obtain an optimal 
  solution that satisfies [M2]. If $S$ is an optimum solution with a 
  segment $j$, and $j$ is not needed for its first or last edge $e$,
  then we may clearly replace $j$ by the sub-segment $j-e$. This 
  does not increase the solutions cost, using the segment-completeness.
  
  Assume, for the sake of contradiction that $S$ violates [M1]. 
For a solution $S \subseteq \ess$, say that 
$(j,j',e)$ is a {\em violating triple} if $j,j' \in S$, $j'$ has higher supply than  $j$, $j$ is
needed for $e$, and $j'$ covers some edge in $\val^j_e$. 
  Choose a solution $S$ with the smallest number of violating triples and
  let $(j,j',e)$ be one such triple. Since $j$ is needed for $e$, edge
  $e$ is not contained in $j'$, and hence $j'$ is either fully
  contained in the interval $(e,n]$ or fully contained in the interval
  $[1,e)$.  Using the segment-completeness assumption, we may replace
  $j'$ by the sub-segment $j''$ obtained by removing the prefix
  consisting of edges in $\val^j_e$; remove $j''$ if it is empty. The
  resulting set of segments has cost at most that of $S$, and the
  number of violating triples is smaller; a contradiction.
\end{proof}

In the next subsection, we show how we can compute the minimum cost
valley-minimal solution for PLC instances in polynomial time using
dynamic programming.

\subsubsection{Computing valley-minimal solutions}

Given $1 \leq l \leq r \leq n$, we obtain the sub-instance {\em
  induced} by interval $[l,r]$ by restricting the line $[1,n]$ to this
interval, and by keeping only segments that are fully contained in
$[l,r]$.  Observe that the valley-completeness assumption implies that
any such sub-instance is feasible.  We begin by making a crucial
observation that will allow us to decompose a given PLC instance into
{\em smaller} instances.  Let $S$ be a valley-minimal solution for the
sub-instance induced by $[l,r]$, and note that [M2] implies that $S$
contains a unique segment $j$ that covers the first edge $(l,l+1)$.
Suppose that $E_{S,j} = \{e_1, \ldots, e_k\}$ is the set of edges
within $[l,r]$ that need segment $j$. Abusing notation slightly, we
let $\val^j_i=[l^j_i,r^j_i]$ be the valley of $j$ around edge $e_i$;
thus we clearly have
\begin{equation}\label{eq:vdec}
  E_{S,j} \subseteq \val^j_{1} \cup \ldots \cup \val^j_{k}.
\end{equation}
Note that segment $j$ may have valleys that entirely consist of 
edges that do not need $j$; accordingly, such valleys are not 
part of the list on the right-hand side of \eqref{eq:vdec}. 
Using property [M2], however, we may assume that
 $\val^j_1$ and $\val^j_k$ are the first and last valley, 
respectively, of segment $j$.
We obtain the following observation,  where we let $l^j_{k+1}=r+1$. 

\begin{observation}\label{obs:part}
  We may assume,
  for all $1 \leq i \leq k$, if $j' \in S$ contains $e \in (r^j_i,l^j_{i+1})$,
  then $j'$ is fully contained in $(r^j_i,l^j_{i+1})$. 
\end{observation}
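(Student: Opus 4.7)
The plan is to deduce Observation \ref{obs:part} directly from the valley-minimality conditions [M1] and [M2] on $S$; no further modification of $S$ is required. Suppose for contradiction that $j' \in S$ contains some edge $e \in (r^j_i, l^j_{i+1})$ yet is not fully contained in this open interval. Since $j$ covers the first edge of the sub-instance we have $l^j = l$, and since all segments live in $[l,r]$, it follows that $j'$ has at least one endpoint lying inside $j$'s interval $[l^j, r^j]$ and outside $(r^j_i, l^j_{i+1})$. Concretely, $j'$ contains at least one of the boundary edges $r^j_i$, $l^j_{i+1}$, or (in the case $i = k$) $r^j$; each of these lies in a valley of the form $\val^j_{e_*}$ for some $e_* \in E_{S,j}$, recalling that $e_k$ is the last edge of $j$ by [M2].

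I would then split into two cases based on the supply of $j'$. In the first case $\su_{j'} > \su_j$, the boundary edge in question has demand at most $\su_j < \su_{j'}$, so $j'$ covers it. Since this edge lies inside $\val^j_{e_*}$, this directly violates [M1] applied to $(j, j', e_*)$, a contradiction.

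In the second case $\su_{j'} \leq \su_j$, I would invoke [M2] applied to $j'$, which requires $j'$ to be the unique coverer of its own first and last edges. Choose an endpoint $f$ of $j'$ lying inside $[l^j, r^j]$, which exists by the preceding paragraph. If $f$ is a mountain edge of $j$, then $\de_f > \su_j \geq \su_{j'}$, so $j'$ does not cover $f$, violating the coverage part of [M2]. If instead $f$ is a valley edge of $j$, then $j$ also covers $f$, violating the uniqueness part of [M2]. Either subcase yields a contradiction.

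The main obstacle will be cleanly tracking the several geometric configurations, in particular the distinction between $i < k$, where the mountain sits strictly inside $j$'s interval, and $i = k$, where the ``mountain'' $(r^j_k, l^j_{k+1}) = (r^j, r]$ lies past $j$'s right end and forces $j'$ to straddle $r^j$. Nevertheless, as the argument sketched above shows, in every configuration $j'$ either contains an edge inside some named valley $\val^j_{e_*}$, giving a contradiction via [M1] in the high-supply case, or has an endpoint inside $j$'s interval, giving a contradiction via [M2] in the low-supply case; so the two cases uniformly dispose of all geometric configurations.
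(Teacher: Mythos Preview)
Your argument is correct and takes a genuinely different route from the paper in the low-supply case. The paper only shows that a low-supply $j'$ must be needed for \emph{some} edge outside $[l^j,r^j]$, concludes that its right endpoint lies in $(r^j_k,r]$, and then \emph{modifies} $S$ by replacing $j'$ with $j'\cap(r^j_k,r]$ (using segment-completeness). You instead apply [M2] to the \emph{first} edge $f$ of $j'$, observe that $f\in[l^j,r^j]$, and derive a contradiction directly: either $f$ is a valley edge of $j$ and $j$ also covers it, or $f$ is a mountain edge and $j'$ cannot cover it. This is cleaner because it shows the decomposition structure is already forced by valley-minimality, with no need to further alter $S$; the paper's replacement step is in fact unnecessary.

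Two small points of precision. First, in your opening paragraph you conflate two distinct facts: that $j'$ contains a boundary edge $r^j_i$ or $l^j_{i+1}$ lying in a named valley (used for the high-supply case via [M1]), and that $j'$ has an endpoint inside $[l^j,r^j]$ (used for the low-supply case via [M2]). Both are true, but they are not the same statement; separate them. Second, for the low-supply case you should explicitly take $f$ to be the \emph{first} edge of $j'$ rather than ``an endpoint'': this is what [M2] speaks about, and one checks easily that whenever $j'$ is not contained in $(r^j_i,l^j_{i+1})$ its first edge lies in $[l,r^j]$ (for $i<k$ because $e<l^j_{i+1}\le r^j$, and for $i=k$ because $j'$ must extend left past $r^j_k=r^j$). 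Finally, as in the paper, the argument tacitly assumes $j'\neq j$; you may want to say so.
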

\begin{proof}
  Consider first a segment $j' \in S$ with supply bigger than $s_j$.
  In this case [M1] implies that $j'$ must have an empty intersection with
  the valleys $\val^j_1, \ldots, \val^j_k$, and the observation
  follows.

  On the other hand if segment $j'$ has supply at most $s_j$, then
  since $j'$ must be needed for some edge $e$, $j$ must not contain
  $e$ implying $j'$ must have its right end-point in
  $(r^j_k,r]$. Replacing $j'$ by its intersection with $(r^j_k,r]$
  completes the observation.
\end{proof}

We now let $\OPT_{l,r}$ be a minimum cost valley-minimal feasible
solution for the sub-instance induced by interval $[l,r]$, and we let
$\opt_{l,r}$ be its cost. Clearly, $\OPT_{n,n}$ consists of the
minimum cost segment in \ess\ that covers edge $n$, and $\OPT_{1,n}$
is the optimum solution we want to obtain. Suppose that we know
$\OPT_{l',r'}$ for all $l < l' \le r' < r$. The high level idea is the
following. The algorithm guesses the first segment $j$ in
$\OPT_{l,r}$. Suppose that $r' \leq r$ is the rightmost edge covered
by $j$.  Observation \ref{obs:part} allows us to partition the
remaining segments in $\OPT_{l,r}$ into two parts:
\begin{description}
  \item[Part 1] Segments that contain edges in $(r',r]$. None of these 
    segments can contain any of the edges in $[l,r']$ by the observation.
  \item[Part 2] Segments that contain edges in $[l,r']$. Once again,
    the observation implies that such segments must be fully
    contained in $(l,r')$.
\end{description}
The first part's solution is obtained since it
is a smaller subproblem, the second part is obtained via a
shortest-path computation.  We now elaborate and give the complete
algorithm.

\begin{figure}[t]
\begin{center}
  \includegraphics[scale=0.8]{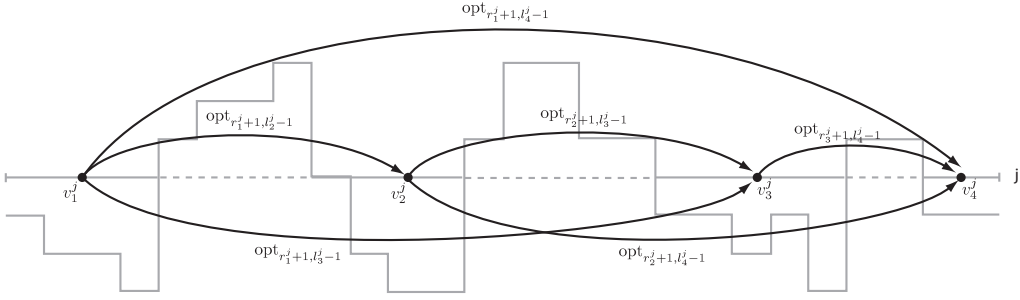}
\end{center}
\caption{\label{fig:gi} The part of digraph $G_l$ corresponding to segment
  $j \in \ess_l$.}
\end{figure}

Let $\ess_l$ be the segments in \ess\ with leftmost endpoint $l$.  We
construct a digraph $G_\ell$ as follows.  Consider a segment $j \in
\ess_l$, and let
$$ [l^j_1,r^j_1], \ldots, [l^j_k,r^j_k] $$
be the set of its valleys. We add a node $v^j_q$ for each valley $1
\leq q \leq k$ of $j$ to $G_\ell$. We also add an arc
$(v^j_q,v^j_{q'})$ for all $1 \leq q < q' \leq k$.
A shortest path corresponding to the solution $\OPT_{l,r}$ will use arc
$(v^j_q,v^j_{q'})$ if 
\begin{itemize}
\item[(i)] $j$ is the leftmost segment in $\OPT_{l,r}$, and
\item[(ii)] $\val^j_q$ and $\val^j_{q'}$ are two consecutive valleys 
  of $j$ that contain edges that need $j$. 
\end{itemize}
Observation \ref{obs:part} then states that $\OPT_{l,r}$ uses segments 
that are entirely contained in $(r^j_{q},l^j_{q'})$ to cover $(r^j_q,l^j_{q'})$.
An optimum set of such segments is given by $\OPT_{r^j_q+1,l^j_{q'}-1}$,
and we therefore give arc $(v^j_q,v^j_{q'})$ cost $\opt_{r^j_q+1,l^j_{q'}-1}$.
Figure \ref{fig:gi} shows the part of $G_\ell$ for the segment $s$ from 
Figure \ref{fig:terr}.

We add a source node $s_l$ and arcs $(s_l,v^j_1)$ of cost
$c_j$ for each of the segments $j \in \ess_l$. A shortest path uses
such an arc if $j$ is the unique segment starting at $l$ in the
corresponding optimum solution. 
We also add a sink node $t_r$ and add an arc $(v^j_k,t_r)$ for all $j\in \ess_l$
of cost $\opt_{r^j_k + 1,r}$ indicating the optimum PLC for the sub-interval $[r^j_k+1,r]$.
Note that if $r^j_k = r$, then this arc is a loop of cost $0$ and can be discarded.

It follows from the above construction that $\opt_{l,r}$ is equal to
the cost of a shortest $s_l,t_r$-path in $G_l$.  Each of the
shortest-path computations can clearly be done in polynomial time, and
hence $\opt_{l,r}$ can be obtained via dynamic programming, in
polynomial time.  This yields the following restatement of 
Theorem \ref{thm:plc-exact}.

\begin{theorem}
  The cost $\opt_{1,n}$ of an optimum solution for a given PLC instance
  can be computed in polynomial time.
\end{theorem}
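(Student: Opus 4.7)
The plan is to prove Theorem \ref{thm:plc-exact} by dynamic programming over sub-intervals of the line, using the valley-minimality structure established earlier. For every $1 \le l \le r \le n$, let $\opt_{l,r}$ denote the minimum cost of a valley-minimal solution for the sub-instance induced by $[l,r]$. The goal is to compute $\opt_{1,n}$, which by the valley-minimality lemma equals the true optimum. The base cases are trivial intervals (single edges), where $\opt_{l,l+1}$ is just the minimum cost segment in $\ess$ covering that edge, restricted to $[l,l+1]$.

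For the inductive step, I would fix $[l,r]$ and assume $\opt_{l',r'}$ is known for all strictly shorter sub-intervals $[l',r'] \subset [l,r]$. By property [M2] a valley-minimal solution $S$ for $[l,r]$ has a unique segment $j$ covering edge $(l,l+1)$, whose leftmost endpoint is therefore $l$; I would enumerate over each candidate $j \in \ess_l$. Given $j$, the valleys of $j$ intersected with $[l,r]$ are $[l^j_1,r^j_1],\ldots,[l^j_k,r^j_k]$, and Observation \ref{obs:part} tells us that every other segment of $S$ is either fully contained in $(r^j_q, l^j_{q'})$ for a pair $q < q'$ of consecutive valleys that contain needed edges, or else fully contained in $(r^j_k, r]$. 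Crucially, within each such sub-interval the restricted instance is itself a smaller PLC instance (using the segment-completeness assumption), and an optimal valley-minimal solution for it has already been computed.

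The entire decomposition of $S \setminus \{j\}$ thus reduces to a choice of an increasing sequence $1 = q_1 < q_2 < \cdots < q_t = k$ of valley indices (those valleys holding edges needing $j$) together with an independent optimal solution on each gap. This is naturally encoded as a shortest-path problem in the digraph $G_l$ described in the sketch: add one vertex $v^j_q$ per valley of $j$, an arc $(v^j_q, v^j_{q'})$ of cost $\opt_{r^j_q+1, l^j_{q'}-1}$ for every $q < q'$, an arc from a source $s_l$ to $v^j_1$ of cost $c_j$, and an arc from $v^j_k$ to a sink $t_r$ of cost $\opt_{r^j_k+1, r}$. Taking the minimum over all $j \in \ess_l$ of the shortest $s_l$-$t_r$ path in $G_l$ yields $\opt_{l,r}$.

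The main obstacle, and the step that needs the most care, is arguing that this shortest-path formulation really captures every valley-minimal solution and only such solutions. For soundness, I would verify that if an arc $(v^j_q,v^j_{q'})$ is used with $q' > q+1$ then the intervening valleys of $j$ contain no needed edges of the resulting solution, so omitting them is consistent with valley-minimality on $j$; the segment $j$ itself may be replaced by its subsegment from $l^j_1$ to $r^j_{q_t}$ to satisfy [M2]. For completeness, given any optimal valley-minimal $S$ one reads off the sequence of valleys of $j$ containing edges in $E_{S,j}$ and uses Observation \ref{obs:part} to conclude that the restriction of $S$ to each gap is a feasible solution of the corresponding sub-instance, whose cost is at least $\opt$ of that sub-instance. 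A polynomial bound on running time then follows because there are $O(n^2)$ sub-intervals, each requires enumerating $O(|\ess|)$ candidates for $j$ and one shortest-path computation on a graph of polynomial size.
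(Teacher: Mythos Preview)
Your proposal is correct and follows essentially the same approach as the paper: dynamic programming over sub-intervals $[l,r]$, enumerating the unique leftmost segment $j \in \ess_l$ guaranteed by [M2], and reducing the remainder to a shortest-path computation in the digraph $G_l$ whose arc costs are optima of strictly smaller sub-instances via Observation~\ref{obs:part}. Your added discussion of soundness and completeness (in particular, replacing $j$ by a sub-segment to enforce [M2]) is a useful elaboration but does not depart from the paper's argument.
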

}
\section{Priority tree cover}
\label{sec:ptc}
\ifconf{
In this extended abstract, we sketch a factor $2$ approximation for the PTC problem, and show how the PTC problem is a special case of the $3$ dimensional 
rectangle cover problem. For the APX hardness and the integrality gap of the unweighted PTC LP, we refer the reader to the full version.
}
\iffull{
We first give a proof of Theorem \ref{thm:ptc-hard}, and show
that rooted PTC is APX-hard, even if all segments have unit
cost. Subsequently, we present a $2$-approximation algorithm for the
problem, by reducing it to an auxiliary instance of the tree
augmentation problem. Then, we prove Theorem \ref{thm:ptc-gap},
and show that the integrality gap of the canonical LP formulation 
of unweighted PTC is bounded by $6$. Finally, we prove the connection
between PTC and the rectangle cover problem.

\subsection{APX-hardness}
We prove APX-hardness of PTC via a reduction from the minimum vertex cover problem in bounded degree graphs. The latter problem is known to be APX-hard \cite{BK98}.
Given a bounded degree graph $G(V,E)$,
with $n$ vertices and $m=O(n)$ edges, let the edges be arbitrarily numbered $\{1,2,\ldots,m\}$. 

The tree in our instance has a broom structure: it has a {\em handle} which is a path of $m$ edges $(e_1,\ldots,e_m)$ given by vertices 
$\{x_0,x_1,\ldots,x_m\}$, and it has $n$ {\em bristles} where each bristle corresponds to a particular vertex $v\in V$ and is a path of length $deg(v)$. The edge $e_i$ in the handle for $1\le i\le m$, corresponds to the edge numbered $i$ in the graph $G$.
The bristle corresponding to vertex $v$ is a path $(f^v_1,f^v_2,\ldots,f^v_{deg(v)})$ given by the vertices $\{x_m,y^v_1,y^v_2,\ldots,y^v_{deg(v)}\}$.  
The root of the tree is $x_0$, the end point of the handle.
Thus the tree has $m+ \sum_v deg(v) = 3m$ edges.

We now describe the priority demands of these tree edges.
The demand of edge $e_i$ is $i$.
Consider the edges in $G$ incident on $v$ in the decreasing order of their numbers. Suppose they are $(i_1 > i_2 > \cdots > i_{deg(v)})$. 
The demands of the edge $f^v_j$ is $i_j$. Thus, for a particular bristle corresponding to a vertex $v$, the demands decrease as we go from $f^v_1$ to $f^v_{deg(v)}$, and these demands correspond to the numbers of edges incident on $v$.

Now we describe the segments. All segments have unit cost.
We have two kinds of segments: edge segments and vertex segments. For every edge $i=(v,w)$ in $E$, there are two edge segments $s^i_v$ and $s^i_w$. Segments $s^i_v$ 
contains all edges $e_i$ to $e_m$ and edges $f^v_1$ to $f^v_j$, 
where edge $i$ is the $j$th edge in the descending order of neighbors of $v$ in $G$. The supply of segment $s^i_v$ is $i$, and thus by construction, we see that $s^i_v$ only spans edge $e_i$ and $f^v_j$. That completes the description of edge segments.
For every vertex $v$, there is a vertex segment $t_v$ that covers all the edges in the bristle corresponding to vertex $v$. 
That completes the description of the PTC instance. 
Look at figure \ref{fig:red} for an illustration of the reduction.

\begin{figure}
  \begin{center}
    \includegraphics[scale=.8]{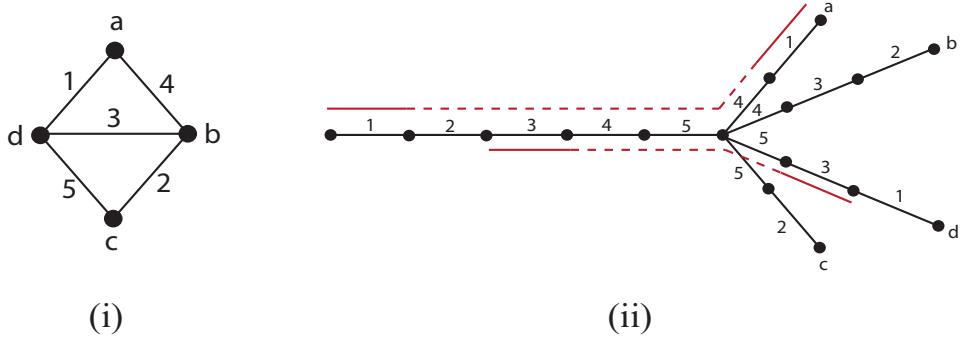}
  \end{center}
  \caption{\label{fig:red} (i) shows an instance of
    the vertex cover problem, and (ii) is the
    corresponding PTC instance. The numbers on the edges are the
    priority demands corresponding to the edge numbers in the
    graph. Figure (ii) also shows two 
    segments: $s^1_a$ and $s^3_d$, having supplies $1$ and $3$
    respectively.  Dashed line means that these segments do not have
    enough supply to cover the edges.}
\end{figure}

The following lemma along with the APX-hardness of the vertex cover
problem in bounded degree graphs, and the fact that in the latter any
vertex cover is of size $\Omega(n)$, leads to the APX-hardness of the
PTC problem.

\begin{lemma}
The optimum PTC of the above instance is $m+k$, where $k$ is the size of the optimum vertex cover of $G$.
\end{lemma}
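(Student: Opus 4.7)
The plan is to prove both inequalities $\opt_{\mathrm{PTC}} \leq m+k$ and $\opt_{\mathrm{PTC}} \geq m+k$ by exploiting the rigid structure of which segments cover which tree edges. The key combinatorial observation to keep in mind is that the only segments covering handle edge $e_i$ are the two edge segments $s^i_v$ and $s^i_w$ (vertex segments only live inside bristles, and every other edge segment $s^{i'}_{\cdot}$ with $i' \neq i$ either has supply $< i$ or does not contain $e_i$). Similarly, the only segments covering bristle edge $f^v_j$ (corresponding to graph edge $i$) are the vertex segment $t_v$ and the edge segment $s^i_v$ (the segment $s^i_w$ at the other endpoint does not enter bristle $v$).

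For the upper bound, I would take an optimum vertex cover $C$ of $G$ with $|C|=k$ and construct a PTC solution of cost $m+k$ as follows. Include $t_v$ for every $v \in C$, a total of $k$ vertex segments. For each graph edge $i=(v,w)$, since $C$ is a vertex cover, at least one of $v,w$ lies in $C$; pick an endpoint $u \in \{v,w\} \cap C$ and include $s^i_{\bar u}$ where $\bar u$ is the other endpoint (if both are in $C$, pick either). This contributes $m$ edge segments. Verification is straightforward: every handle edge $e_i$ is covered by the chosen $s^i_{\bar u}$; for a bristle edge $f^v_j$ with $v \in C$, the vertex segment $t_v$ covers it; for a bristle edge $f^v_j$ with $v \notin C$, the corresponding graph edge $i=(v,w)$ has $w \in C$, so by our rule we chose $s^i_v$, which covers $f^v_j$.

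For the lower bound, let $E'$ and $V'$ denote the sets of edge and vertex segments in an optimum PTC solution, and let $C' := \{v : t_v \in V'\}$. By the observation above, for each $i \in [m]$ at least one of $s^i_v, s^i_w$ belongs to $E'$, so writing $\alpha_i := |\{s^i_v,s^i_w\} \cap E'| \in \{1,2\}$ we get $|E'| = \sum_i \alpha_i = m + |E_2|$ where $E_2 := \{i : \alpha_i = 2\}$. The crucial claim is that $C'$ together with one endpoint per edge in $E_2$ forms a vertex cover of $G$. Indeed, consider any graph edge $i=(v,w)$ with $i \notin E_2$: exactly one of $s^i_v, s^i_w$ is picked, say $s^i_v$; then $s^i_w \notin E'$, so the bristle edge $f^w_{j'}$ has no edge segment covering it, forcing $t_w \in V'$, i.e., $w \in C'$. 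This yields a vertex cover of size at most $|C'| + |E_2| = |V'| + (|E'|-m)$, which equals $\mathrm{cost} - m$. Hence $k \leq \opt_{\mathrm{PTC}} - m$, completing the proof.

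The main obstacle is verifying the coverage claims carefully from the slightly intricate indexing of demands/supplies; but once one notes that supply $i$ restricts $s^i_v$ to exactly the two tree edges $e_i$ and $f^v_j$ whose demand equals $i$, all the bookkeeping falls into place and no single step is technically deep.
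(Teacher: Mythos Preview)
Your proof is correct. The key structural observations you identify --- that only $s^i_v$ and $s^i_w$ cover handle edge $e_i$, and only $t_v$ and $s^i_v$ cover the bristle edge $f^v_j$ corresponding to graph edge $i$ --- are exactly what drives the argument, and your two directions go through cleanly.

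The paper's proof rests on the same observations but packages the lower bound differently. It first argues that without loss of generality an optimum solution contains \emph{exactly one} of $s^i_v,s^i_w$ for each graph edge $i$ (if both are present, swap one out for the corresponding vertex segment at no cost increase). The resulting $m$ edge segments then define an \emph{orientation} of $G$ (orient $i=(v,w)$ toward $v$ if $s^i_v$ is chosen), and the vertex segments needed are precisely those for the non-sinks of this orientation; since non-sinks form a vertex cover and any vertex cover can be realized as the non-sink set of some orientation, the optimum equals $m+k$. Your argument skips the normalization step and instead accounts directly for edges $i$ with $\alpha_i=2$ via the set $E_2$, producing a vertex cover of size at most $|V'|+|E_2|=\mathrm{cost}-m$. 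This is slightly more direct; the paper's orientation viewpoint is a bit more pictorial. Both approaches are equivalent in substance.
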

\begin{proof}
  Firstly note that we may assume that in any optimal PTC, for any
  edge $i=(v,w)$, we will have exactly one of $s^i_v$ or $s^i_w$ in
  the solution. We need to have one since these are the only two
  segments that cover edge $e_i$ in the tree. Instead of picking
  both, we can remove one, say $s^i_w$, from the solution and pick the
  corresponding vertex segment $t_w$ instead, at no increase of
  cost. Therefore, there are exactly $m$ edge segments picked in any
  optimal PTC solution.

  Now note that these $m$ edge segments uniquely correspond to an
  orientation of the edges in $G$; if for edge $i=(v,w)$, $s^i_v$ is
  chosen in the solution, the edge $(v,w)$ is oriented from $w$ to
  $v$.  In this orientation, if there is a {\em sink} (a vertex with
  all edges incident to it) $v$, then note that all the edges in the
  bristle corresponding to $v$ have also been covered. Thus, the
  number of vertex segments required to cover the remaining edges of
  the tree, is precisely the number of {\em non-sinks} in this
  orientation. In particular, the optimal PTC corresponds to the
  orientation that minimizes the number of non-sinks.

  The proof is complete by noting that non-sinks form a vertex cover;
  this is because each edge is oriented away from some non-sink, and
  is thus incident to it. Furthermore, given a vertex cover, there
  exists an orientation with precisely these vertices as
  non-sinks. Orient the edges towards the complement of the vertex
  cover (the independent set) - the complement is precisely the set of
  sinks, and thus the vertex cover is precisely the set of non-sinks.
\end{proof}

\begin{proof}[Proof of Theorem \ref{thm:ptc-hard}]
  Suppose the degrees of $G$ are all $B$, a constant.  Note that the
  vertex cover of this graph is at least $m/B = n/2$.  The
  APX-hardness implies that it is NP-hard to distinguish between the
  case when the vertex cover is $c_1n$ or $c_2n$ where $c_2 > c_1 \ge
  1/2$ are certain constants.

  The above lemma therefore implies it is NP-hard to distinguish
  between the cases when the optimum of a PTC is $m+c_1n = (c_1 +
  B/2)n$ and when the optimum is $m+c_2n = (c_2 + B/2)n$. Since
  $B,c_1,c_2$ are constants, we get the APX-hardness.

  (For the interested reader: the APX-hardness of vertex cover of
  bounded degree graphs by Berman and Karpinski \cite{BK98} gives
  $B=4$, $c_1 = 78/152$ and $c_2 = 79/152$, showing it is NP-hard to
  approximate to a factor better than $1.002 $.)
\end{proof}
}

\subsection{An approximation algorithm for PTC}
\ifconf{
We use the exact algorithm for PLC to get the factor $2$ algorithm for PTC.
The crucial idea is the following.  Given an optimum
solution $S^*\subseteq \ess$, we can partition the edge-set $E$ of $T$
into disjoint sets  $E_1, \ldots, E_p$, and partition two copies of $S^*$ into $S_1,\ldots,S_p$,
such that $E_i$ is a  path in $T$ for each $i$,
and $S_i$ is a priority line cover for the path $E_i$. 
Using this, we describe the $2$-approximation algorithm which proves Theorem~\ref{thm:ptc-apx}. \\

\ni
{\em Proof of Theorem \ref{thm:ptc-apx}:}
For any two vertices $t$ (top) and $b$ (bottom) of the tree $T$, such that $t$ is an ancestor of $b$, let $P_{tb}$ be the unique path from $b$ to $t$. Note that $P_{tb}$, together with the restrictions of the segments in \ess  to $P_{tb}$, defines an instance of PLC. Therefore, for each pair $t$ and $b$, we can compute the optimal solution to the corresponding PLC instance using the exact algorithm; let the cost of this solution be $c'_{tb}$. Create an instance of the 0,1-tree cover problem with $T$ and segments 
$\ess' := \{(t,b): t \mbox{ is an ancestor of } b\}$ with costs $c'_{tb}$.
Solve the 0,1-tree cover instance exactly (recall we are in the rooted version) and for the segments $(t,b)$ in $\ess'$ returned, return the solution of the corresponding PLC instance of cost $c'_{tb}$.

One now uses the decomposition above to obtain a solution to the 0,1-tree cover problem $(T,\ess')$ of cost at most $2$ times the cost of $S^*$. This proves the theorem. The segments in $\ess'$ picked are precisely the segments 
corresponding to paths $E_i$, $i=1,\ldots,p$ and each $S_i$ is a solution to the PLC instance. Since we find the optimum PLC, there is a solution to $(T,\ess')$ with costs $c'$ of cost less than total cost of segments in $S_1\cup \cdots \cup S_p$. But that cost is at most twice the cost of $S^*$ since each segment of $S^*$ is in at most two $S_i$'s.
}
\iffull{
The crucial idea is the following.  Given an optimum solution
$S^*\subseteq \ess$, we can partition the edge-set $E$ of $T$ into
disjoint sets $E_1, \ldots, E_p$, and partition two copies of $S^*$
into $S_1,\ldots,S_p$, such that $E_i$ is a path in $T$ for each $i$,
and $S_i$ is a priority line cover for the path $E_i$. Once again, we assume
without loss of generality that the instance is segment-complete.

In particular, we prove the following lemma. Let $\hat{E}_{S^*,j}$ be
the set of edges $e$ such that $j$ is the segment with the highest
supply, among all segments in $S^*$ that cover $e$.  Note that the
union of all $\hat{E}_{S^*,j}$, over all $j\in S^*$, partitions
$E$. Also note that for each edge $e$, there is a unique segment $j$
such that $e\in \hat{E}_{S^*,j}$. 
If there were two, we could replace one of the segments by a sub-segment and still stay feasible.
We call the segment $j$ {\em
  responsible} for $e$.

\begin{lemma}\label{lem:ptc-apx}
  Given an optimal solution $S^* \subseteq \ess$ to a PTC
  instance with tree $T=(V,E)$, there is a partition
  $$ E_1 \cup \ldots \cup E_p = E, $$
  where each $E_i$ is the edge set of a path in $T$ such that for all $j \in S^*$,
  $\hat{E}_{S^*,j} \cap E_i \neq \varnothing$
  for at most two $i \in \{1, \ldots, p\}$.
\end{lemma}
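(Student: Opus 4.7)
The plan is to construct the partition by specifying a \emph{pairing} at each internal vertex $v \in T$---a matching among $v$'s incident edges where paired edges belong to the same partition path passing through $v$, while unpaired edges terminate their respective paths at $v$. First, root $T$ at an arbitrary vertex and identify for each edge $e$ its responsible segment $r(e) \in S^*$; recall that $\hat{E}_{S^*,j}$ is contained in $j$'s descending path for each $j$.

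The key structural observation I would exploit is the following: for every segment $j \in S^*$ passing through $v$ (i.e., with $v$ interior to $j$'s path), $j$ contributes a natural \emph{pair request} at $v$ consisting of the edge from $v$ to its parent and the edge from $v$ to the child of $v$ on $j$'s path. Since in the rooted tree all pair requests at $v$ share the common parent edge of $v$, at most one such request can be honored. My plan is to choose which requests to honor using a carefully designed priority rule that is sensitive to the sizes and locations of the responsibility regions $\hat{E}_{S^*, j}$, and then pair any remaining edges arbitrarily to complete the matching at $v$.

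The analysis fixes $j \in S^*$ with descending path $v_0, \ldots, v_k$ and examines the \emph{cuts} along $j$'s path---interior vertices $v_i$ where $j$'s pair $\{e_i, e_{i+1}\}$ is not honored. Each cut is caused by a higher-priority passing segment $j^{**}$ whose path diverges from $j$'s at $v_i$. The key claim to establish is that with the right priority rule, $\hat{E}_{S^*, j}$ is distributed across at most two paths of the partition. A natural sufficient condition is that within the \emph{span} of $\hat{E}_{S^*, j}$ (from its top-most to bottom-most edge on $j$'s path), at most one cut occurs; then the span is split into at most two chunks by that cut, and every edge of $\hat{E}_{S^*, j}$ lies in one of these chunks.

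The main obstacle is designing the priority rule and proving the key claim. Simple local rules---such as processing segments in monotonic order of supply, or by decreasing $|\hat{E}_{S^*, j}|$---can fail on adversarial configurations where a low-supply segment $j$ covers a long path that is pierced by many short higher-supply segments at different branching vertices, each stealing one of $j$'s pairs and fragmenting $\hat{E}_{S^*, j}$ into three or more paths. I expect the correct rule to require a more global viewpoint: perhaps an exchange argument starting from a naïve partition and iteratively swapping pairings at branching vertices to reduce the maximum number of partition paths containing any single $\hat{E}_{S^*, j}$, or a recursive construction that first ``reserves'' long connecting paths through $j$'s gap edges (using them to stitch together $j$'s fragments) at the cost of splitting the shorter responsibility regions of the segments owning those gap edges---exploiting the fact that each such owner has very few edges to spread.
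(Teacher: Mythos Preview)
Your proposal has a genuine gap: you never specify the priority rule, and you yourself note that the natural local rules (by supply, by $|\hat{E}_{S^*,j}|$) fail. The suggestions at the end---an exchange argument or a recursive ``reservation'' scheme---are only sketches of directions, not an argument, and it is unclear that either can be made to work without essentially rediscovering the paper's idea.

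The paper bypasses the vertex-pairing framework entirely. Instead of choosing, at each branching vertex, which segment's pair request to honor, it builds the partition top-down and \emph{lets the segments themselves be the paths}. Concretely: take an edge $e$ incident to the root, let $j_1 \in S^*$ be the highest-supply segment covering $e$, and set $E_1$ to be the full path of $j_1$; then recurse on the subtrees obtained by deleting $E_1$. Each $E_i$ is thus the path of some segment $j_i$, chosen as the highest-supply segment covering the top edge of its subtree. The analysis is then a two-line argument: if $\hat{E}_{S^*,j}$ meets $E_i$ and later $E_q$ (with $q>i$ minimal), pick $f \in \hat{E}_{S^*,j} \cap E_q$. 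The top edge $g$ of $E_q$ lies on $j$'s path, so $j$ contains $g$, hence $s_{j_q} \ge s_j$; but then $j_q$ also covers $f$ (it contains $f$ and has supply $\ge s_j \ge \pi_f$), and since $j$ is the unique highest-supply segment covering $f$, we get $j_q = j$. Thus $\hat{E}_{S^*,j_q}$ cannot meet any later $E_{q'}$, and $j$ touches only $E_i$ and $E_q$.

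The missing idea, then, is that the ``priority rule'' you were searching for is simply \emph{highest supply covering the current root edge}, applied globally and recursively rather than locally at each vertex; the paths are whole segment paths, not stitched from local pairings.
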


Using this, we describe the $2$-approximation algorithm which proves Theorem~\ref{thm:ptc-apx}. \\

\begin{proof}[Proof of Theorem \ref{thm:ptc-apx}]
  For any two vertices $t$ (top) and $b$ (bottom) of the tree $T$,
  such that $t$ is an ancestor of $b$, let $P_{tb}$ be the unique path
  from $b$ to $t$. Note that $P_{tb}$, together with the restrictions
  of the segments in \ess to $P_{tb}$, defines an instance of
  PLC. Therefore, for each pair $t$ and $b$, we can compute the
  optimal solution to the corresponding PLC instance; let the cost of
  this solution be $c'_{tb}$. Create an instance of the 0,1-tree cover
  problem with $T$ and segments $\ess' := \{(t,b): t \mbox{ is an
    ancestor of } b\}$ with costs $c'_{tb}$.  Solve the 0,1-tree cover
  instance exactly (recall we are in the rooted version) and for the
  segments $(t,b)$ in $\ess'$ returned, return the solution of the
  corresponding PLC instance of cost $c'_{tb}$.
  We now use Lemma \ref{lem:ptc-apx} to obtain a solution to the
  0,1-tree cover problem $(T,\ess')$ of cost at most $2$ times the
  cost of $S^*$. This will prove the theorem. 

For each $E_i$, let $t_i$ and $b_i$ be the end points of $E_i$ with 
$t_i$ being the ancestor of $b_i$. Since $E_i$'s partition the edges,
the segments $(t_i,b_i):i=1,\ldots,p$ is a feasible 0,1-tree cover for $(T,\ess')$.
Define $S_i := \{j\in S^*: e\in E_i \cap \hat{E}_{S^*,j}\}$ to be
  the set of segments responsible for the edges in $E_i$. By
  definition, $S_i$ is a PLC for $E_i$. Thus, the cost of the segments in 
$S_i$ is at least $c'_{t_ib_i}$. Furthermore, Lemma \ref{lem:ptc-apx}
implies that the total cost of the segments in $S_i$ is at most twice
the cost of segments in $S^*$. Therefore, the cost of the feasible
solution to the cover problem in $(T,\ess')$ is at most twice the cost 
of segments in $S^*$.

\end{proof}

\begin{proof}[Proof of Lemma \ref{lem:ptc-apx}]
  We give an algorithm to compute the decomposition. Let $e$ be 
  any of the edges incident to the root of $T$, and let $j_1 \in S^*$
  be the highest-supply segment covering $e$. We then let $E_1$
  be the edges of the path in $T$ corresponding to $j_1$. Removing
  $E_1$ from $T$ yields sub-trees $T_1, \ldots, T_q$. For each
  tree $T_i$ we repeat the above steps, and let 
  \begin{equation}\label{eq:part}
    E_1, \ldots, E_p 
  \end{equation}
  be the final partition; let $j_i \in S^*$ be the segment 
  corresponding to edge-set $E_i$. Note that for $q < q'$,
  $\hat{E}_{S^*,j_q} \cap E_{q'}$ is empty. This is because $E_{q'}$
is a subset of edges which are not in $j_{q'-1},\ldots,j_1$.

  Consider a segment $j \in S$, and let $1 \leq i \leq p$ be 
  smallest such that $\hat{E}_{S^*,j} \cap E_i \neq \varnothing$, and
  assume that $\hat{E}_{S^*,j} \cap E_q \neq \varnothing$ for some
  $i < q \leq p$; choose $q$ smallest with this property. We
  claim that $j_q=j$, 
 and hence for all
  $q < q' \leq p$ we have $\hat{E}_{S^*,j} \cap E_{q'} = \emptyset$.
Thus, $\hat{E}_{S^*,j}$ has non-empty intersection only with 
$E_i$ and $E_q$.

  Let $e \in E_{S^*,j} \cap E_i$, and let $f \in E_{S^*,j} \cap E_q$ be
  two edges in different parts of the partition such that $j$ is responsible for both. 
  As both $e$ and $f$ are edges on $j$, and since 
  $i < q$, it follows that $f$ is a descendant of $e$ in tree $T$.
  Let $g$ be the topmost edge of $E_q$; clearly, $g$ is on the
  $e,f$-path in $T$. By the decomposition algorithm, segment $j_q$ is
  the highest-supply
  segment covering edge $g$. As $j$ contains $g$, this means 
  that the supply of $j_q$ is at least that of $j$. Finally, since $f$ is 
  on $j_q$, $j_q$ covers $f$ as well. But this means that 
  $j_q=j$ as $j$ is responsible for $f$.
\end{proof}

\subsection{Canonical LP relaxation of PTC: Integrality Gap}

In this section, we prove Theorem \ref{thm:ptc-gap}, by showing that 
the canonical LP relaxation of unweighted PTC is at most $6$. 
Recall the PTC LP.
\begin{align}\label{lp:primal-ptc}
  \min ~~ \left\{\sum_{s\in \ess} c_sx_s: ~~ \forall e\in E: \sum_{s:
      s~covers~ e} x_s \ge 1; ~~ x_s\ge 0, \forall s\in \ess\right\}
\end{align}

\begin{proof}[Proof of Theorem \ref{thm:ptc-gap}]
  The idea of the proof is the following: as in the factor
  $2$-approximation for PTC, we decompose the edge set of the tree
  into disjoint sets $E_1,\cdots,E_p$, such that each $E_i$ induces a
  path.  We will abuse notation and refer to the $E_i$'s as paths.
  Furthermore, we take any feasible solution $x$ of
  \eqref{lp:primal-ptc} and obtain $p$ fractional solutions
  $x^{(1)},\ldots,x^{(p)}$ such that $x^{(i)}$ is 
  a feasible
  fractional solution to \eqref{lp:primal} for the PLC instance on the
  path $E_i$. We will guarantee that
  $$\sum_{i=1}^p\sum_{j\in\ess} x^{(i)}_j\le 3\sum_{j\in \ess} x_j. $$
  The theorem then follows from Theorem \ref{thm:plc-gap}.
  
  \piccaptioninside
  \piccaption{\label{fig:frag} The figure shows a fragment $E_i$, its
   parent $E_r$, and two children $E_s$ and $E_t$. The segments
    $j_1, j_2$, and $j_3$ are local for $E_i$, and segment $j_4$ is
    global. In particular, $j_4$ is an $i,t$-global segment.}

  \parpic(7cm,8cm)[fr]{
   \includegraphics[scale=.85]{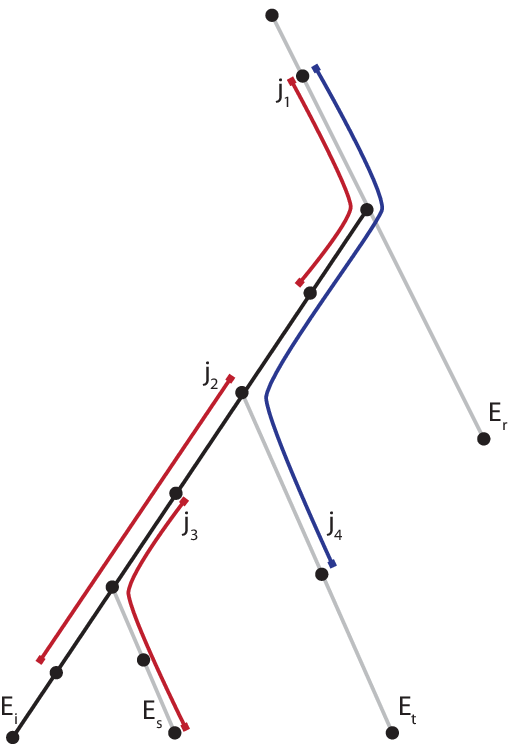}}

  Unlike in the argument used in the previous section where the
  decomposition into paths depended on $S^*$, the decomposition into
  disjoint paths that we use here is universal.  Each path $E_i$ will
  end at a unique leaf, and $p$ in \eqref{eq:part} will now be the
  number of leaves of $T$. Let $E_1$ be {\em any} path from the root
  to a leaf. Delete $E_1$ from the tree to get a series of sub-trees.
  Recursively, obtain $E_2$ to $E_p$. We call a path $E_i$ a {\em
    child} of $E_q$, if the starting point of $E_i$ lies on $E_q$.

  Let $x$ be any feasible fractional solution of \eqref{lp:primal-ptc}
  and let $S^*$ be the support of $x$, that is, $S^* = \{j \,:\, x_j >
  0\}$.  Fix a path $E_i$ and say that a segment $j\in S^*$ {\em
    intersects} $E_i$ if $j$ covers an edge in $E_i$ 
  A segment $j$ that intersects $E_i$ is called {\em local} for $E_i$
  if either the first or the last edge covered by $j$ lies in $E_i$.
  A segment $j$ that intersects $E_i$ is called {\em global} for
  $E_i$, otherwise. Figure \ref{fig:frag} illustrates this.

  Let $j$ be a global segment for $E_i$, and let $e$ be the
  first edge contained in $j$ {\em after} $E_i$.  If $e\in E_q$, we
  call $s$ an {\em $iq$-global} segment. Observe that $E_q$ is a child
  of $E_i$.  Thus an $iq$-global segment {\em enters} $E_i$ and {\em
    exits} via $E_q$.  Note that $iq$-global segments, over all $q$ such that $E_q$ is a child of $E_i$,
  partition all global segments for $E_i$. Also note that an
  $iq$-global segment could also be a $i'q'$-global segment for some
  other $i',q'$.

  Now we are ready to define the fractional solution $x^{(i)}$ that
  will be feasible for \eqref{lp:primal} for the PLC instance on
  $E_i$.  Firstly for all segments $j$ that are local for $E_i$, let
  $x^{(i)}_j = x_j$. Next, we take care of segments that are global
  for $E_i$.  For each child $E_q$ of $E_i$, order all the $iq$-global
  segments in non-increasing order of supply: $\{j_1,\ldots,j_r\}$.
  Let $l$ be such that
  $$x_{j_1} + \cdots + x_{j_l} \le 1 \mbox{ and } x_{j_1} + 
      \cdots + x_{j_l} + x_{j_{l+1}} >  1 $$
  If no such $l$ exists, then $l=r$.  Define $x^{(i)}_{j_k} = x_{j_k}$
  for $1\le k\le l$. If $l< r$, then let $x^{(i)}_{j_{l+1}} = 1 -
  \sum_{k=1}^l x^{(i)}_{j_k}$.

\begin{claim}
  $x^{(i)}$ is feasible for \eqref{lp:primal} for the PLC instance on $E_i$.
\end{claim}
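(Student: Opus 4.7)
The plan is to fix an arbitrary edge $e \in E_i$ and show that $\sum_{j\,:\,j~\text{covers}~e} x^{(i)}_j \ge 1$, starting from the PTC feasibility inequality $\sum_{j\,:\,j~\text{covers}~e} x_j \ge 1$. The key idea is to track, edge by edge, the ``mass'' that the construction of $x^{(i)}$ discards relative to $x$, and argue that no edge loses enough mass for its coverage to drop below $1$.

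First, I would split the segments covering $e$ in the PTC instance into local and global segments for $E_i$. By construction, every local segment $j$ has $x^{(i)}_j = x_j$, so its contribution to the coverage of $e$ is preserved. Any global segment covering $e$ must be an $iq$-global segment for some child $E_q$ of $E_i$ whose branching point from $E_i$ lies strictly below $e$: such a segment enters $E_i$ from above, runs down through $e$, and exits $E_i$ into a descendant subtree at the starting vertex of $E_q$.

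Next, for each such child $q$, I would analyze $G_q(e)$, the set of $iq$-global segments covering $e$. Since the $iq$-global segments $j_1,j_2,\ldots,j_r$ are ordered by non-increasing supply and since $j$ covers $e$ precisely when $j$ contains $e$ and $s_j \ge \pi_e$, the set $G_q(e)$ is a prefix $\{j_1,\ldots,j_m\}$ of this list. Comparing $m$ with the truncation index $l=l(q)$ used in the construction of $x^{(i)}$ yields two sub-cases. If $m \le l$, or if no truncation occurred for child $q$ at all (Case~1 in the construction), then $x^{(i)}_{j_k}=x_{j_k}$ for every $k \le m$, and no coverage of $e$ is lost on that side. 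If instead $m \ge l+1$, then $\{j_1,\ldots,j_m\}$ already contains the entire support of $x^{(i)}$ among $iq$-global segments, so $\sum_{k\le m} x^{(i)}_{j_k}$ equals exactly $1$, meaning that $e$ is already fully covered by $G_q(e)$ alone.

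Combining these across all children $q$ below $e$ gives a dichotomy: either every $q$ satisfies the ``no loss'' sub-case, in which case $\sum_{j~\text{covers}~e} x^{(i)}_j = \sum_{j~\text{covers}~e} x_j \ge 1$ directly from PTC feasibility; or some $q$ falls into the second sub-case, so $G_q(e)$ by itself contributes $1$ to the coverage of $e$. In either case, $x^{(i)}$ is feasible for the PLC instance on $E_i$. I expect the main subtlety to be verifying the claim that $G_q(e)$ forms a prefix in the supply-sorted ordering: this needs the ordering to coincide with the covering relation, which uses the fact that all $iq$-global segments contain the same portion of $E_i$ down to the branching point of $E_q$, so for an edge $e$ above that branching point coverage is determined purely by the priority supply.
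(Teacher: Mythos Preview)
Your proof is correct and follows essentially the same approach as the paper: split the segments covering $e$ into local and global, note that local contributions are preserved verbatim, and for each child $E_q$ establish the inequality $\sum_{j\in G_q(e)} x^{(i)}_j \ge \min\{1,\sum_{j\in G_q(e)} x_j\}$, from which feasibility follows. The paper simply asserts this last inequality, whereas you supply the justification via the prefix argument (that all $iq$-global segments contain the same portion of $E_i$ down to the branching point, so coverage of $e$ among them is determined solely by supply); this is exactly the detail the paper leaves implicit.
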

\begin{proof}
  Pick any edge $e\in E_i$. Look at all segments $j\in S^*$ that
  cover $e$. These segments are either local for $e$ or global for
  $e$. If $j$ is local, there is a corresponding segment in the
  support $x^{(i)}$ of the same value. Furthermore for any $q$,
  $$\sum_{j: j \mbox{~is $iq$-global}, j \mbox{ covers } e} x^{(i)}_j 
    \ge \min\{1, \sum_{j: j \mbox{~is $iq$-global}, j \mbox{ covers } e} x_j\} $$
  In any case, $e$ is covered by $x^{(i)}$ at least to the extent it
  is covered by $x$, which implies $x^{(i)}$ is feasible.
\end{proof}

\begin{lemma}
  $\sum_{i=1}^p \sum_{j\in\ess} x^{(i)}_j \le 3\sum_{j\in \ess} x_j $
\end{lemma}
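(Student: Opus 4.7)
My plan is to decompose the double sum $\sum_{i=1}^p\sum_{j\in\ess} x^{(i)}_j$ by treating each segment $j$'s contribution to $x^{(i)}$ separately according to whether $j$ is local or global for $E_i$, and to bound each of these two aggregate contributions in isolation.

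For the local part, I would observe that the topmost and bottommost edges covered by $j$ each lie in a unique path of the decomposition, so $j$ is local for at most two values of $i$, and the construction sets $x^{(i)}_j = x_j$ whenever $j$ is local for $E_i$. This immediately bounds the local contribution by $2\sum_j x_j$. For the global part, I would use the truncation rule built into the construction: for each parent-child pair $(E_i, E_q)$, the total $x^{(i)}$-mass placed on $iq$-global segments is at most $1$ (exactly $1$ in the case $l<r$, and at most $1$ in the case $l=r$). Summing over the $p-1$ parent-child pairs, this shows that the global contribution is at most $p-1$.

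The step I expect to require the most care is relating $p$ to $\sum_j x_j$; this is where I would invoke LP feasibility of $x$. By construction each $E_i$ ends at a unique leaf of $T$, so $p$ equals the number of leaves. The key structural observation is that each segment, being a monotone upward path from a descendant to an ancestor, can contain at most one leaf edge (namely the edge at its bottom endpoint, and only when that endpoint is a leaf). Hence the sets $\{j : j\text{ covers } f_\ell\}$ (where $f_\ell$ is the leaf edge at leaf $\ell$) are pairwise disjoint across leaves $\ell$. Summing the feasibility inequality $\sum_{j: j \text{ covers } f_\ell} x_j \ge 1$ over all $p$ leaves yields $p \le \sum_j x_j$. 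Combining the three bounds gives
\[
\sum_{i=1}^p \sum_{j \in \ess} x^{(i)}_j \;\le\; 2\sum_j x_j + (p-1) \;\le\; 3\sum_j x_j,
\]
as desired.
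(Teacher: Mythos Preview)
Your proposal is correct and follows essentially the same approach as the paper: bound the local contribution by $2\sum_j x_j$ via the observation that each segment is local for at most two paths, bound the global contribution by the number of parent--child pairs (each contributing at most $1$ due to the truncation), and then use LP feasibility at the leaf edges to show this count is at most $\sum_j x_j$. Your write-up is in fact a bit more explicit than the paper's in two places: you use the sharper count $p-1$ for parent--child pairs (the paper just says ``at most the number of leaves''), and you spell out why each segment, being an ancestor--descendant path, can contain at most one leaf edge, which is the reason the leaf-covering constraints are over disjoint segment sets.
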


\begin{proof}
  Each segment $j\in S^*$ is local for at most two paths $E_i$ and
  $E_q$. Thus the contribution to the LHS by local segments for some
  path $E_i$ is exactly $2\sum_{j\in \ess} x_j $.

  Furthermore, for every parent-child pair $E_i$ and $E_q$ that
  induces an $iq$-global segment for $E_i$, we increase the LHS by at
  most $1$. The number of such pairs is at most the number of leaves
  in $T$. 
   The proof is complete by noting that $\sum_{j\in \ess}
  x_j$ is at least the number of leaves in $T$.
\end{proof}

To complete the proof of the theorem, note that from Theorem
\ref{thm:plc-gap} we know there exists for each $E_i$, a set of segments
$S_i$ such that $|S_i| \le 2\sum_{j\in \ess} x^{(i)}_j$. The union of
all such $S_i$ forms a valid PTC of cardinality at most $6\sum_{j\in
  \ess} x_j $.
\end{proof}
}
\subsection{Priority Tree Cover and Geometric Covering Problems}
\ifconf{
We sketch  how the PTC problem can be encoded as a rectangle cover problem.
To do so, an auxiliary problem is defined, which we call $2$-PLC. \\

\vspace{-2mm}
\ni
{\bf $2$-Priority Line Cover (2-PLC)} The input is similar to PLC, except each segment and edge has now 
an ordered pair of priorities, and a segment covers an edge it contains iff each of the priorities of the segment
exceeds the corresponding priority of the edge. The goal, as in PLC, is to find a minimum cost cover.

It is not too hard to show 2-PLC is a special case of rectangle cover. The edges correspond to points
in 3 dimension and segments correspond to rectangles in $3$-dimension;
dimensions encoded by the linear coordinates on the line, and the two priority values.
In general, $p$-PLC can be shown to be a special case of $(p+1)$-dimensional rectangle cover. 

What is more involved is to show PTC is a special case of 2-PLC. To do so, we run two DFS orderings on the 
tree, where the order in which children of a node are visited is completely opposite in the two DFS orderings.
The first ordering gives the order in which these edges must be placed on a line. The second gives one of the 
priorities for the edges. The second priority of the edges comes from the original priority in PTC. It can be shown
that the segments priorities can be so set that the feasible solutions are precisely the same in both the instances
proving Theorem \ref{thm:ptc-geom}.
}
\iffull{
In this section, we show that the PTC problem is a special case of covering a set 
of points in $3$-dimension by axis-parallel rectangles (cuboids). In particular we prove 
Theorem \ref{thm:ptc-geom}. We go in two steps. We first define a problem, that we 
call $2$-Priority Line Cover and show that the PTC problem is a special case of $2$-PLC. Subsequently,
we show $2$-PLC is a special case of $3$-dimensional rectangle cover.
We start with a definition of $2$-PLC. \\

\noindent
{\bf $2$-Priority Line Cover (2-PLC).} The input is  a line $T=(V,E)$, and a collection of segments 
$\ess \subseteq V\times V$ with costs $c_j$ for each $j\in \ess$. Furthermore, each segment $j$
has a priority supply vector in {\em two} dimensions, denoted as $(s^1_j,s^2_j)$, and each edge $e$ 
has a priority demand vector in two dimensions, denoted as $(\pi^1_e,\pi^2_e)$. A segment $j$ 
covers $e$ iff $j$ contains $e$ and $s^i_j \ge \pi^i_e$ for both $i=1,2$. The goal is to find the minimum
cost collection of segments that cover every edge. \\

\noindent
It is easy to see that PLC is a special case of $2$-PLC. Somewhat surprisingly, PTC is a special case 
of $2$-PLC as well.

\begin{lemma}
Any instance of PTC can be encoded as an instance of 2-PLC with the same solution set.
\end{lemma}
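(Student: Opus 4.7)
The plan is to construct the 2-PLC instance from a given PTC instance by running two depth-first searches, $\mathrm{DFS}_1$ and $\mathrm{DFS}_2$, on the rooted tree $T=(V,E)$, arranged so that at every vertex the order in which $\mathrm{DFS}_2$ visits the children is the reverse of the order used by $\mathrm{DFS}_1$. Writing $\sigma_1(v)$ and $\sigma_2(v)$ for the two discovery times of a vertex $v$, and letting $e_v$ denote the edge from $v$ to its parent, I place $e_v$ at line position $\sigma_1(v)$ and assign it priority demands $\pi^1_{e_v}:=\pi_{e_v}$ (inherited unchanged from PTC) and $\pi^2_{e_v}:=\sigma_2(v)$. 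For a PTC segment $j$ corresponding to the tree path from a descendant $u$ up to an ancestor $a$, letting $u'$ denote the child of $a$ on this path, the corresponding 2-PLC segment spans the line interval $[\sigma_1(u'),\sigma_1(u)]$, inherits the cost $c_j$, and carries priority supplies $s^1_j:=s_j$ and $s^2_j:=\sigma_2(u)$.

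Correctness reduces to showing, for every such PTC segment $j$ and every tree edge $e_v$, that the PTC coverage relation ``$j$ covers $e_v$'' is equivalent to 2-PLC coverage of the corresponding point by the corresponding segment. The first-priority test $s_j\ge\pi_{e_v}$ is preserved verbatim, so everything hinges on whether line-containment together with the second-priority test captures tree-containment of $e_v$ in the tree path. The easy direction is the forward implication: if $v$ lies on the ancestor chain from $u$ up to $u'$, then standard DFS properties applied along this chain give both $\sigma_1(u')\le\sigma_1(v)\le\sigma_1(u)$ and $\sigma_2(v)\le\sigma_2(u)$.

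The main obstacle is the reverse direction: showing that any extraneous edge $e_v$ with $\sigma_1(v)\in[\sigma_1(u'),\sigma_1(u)]$ but whose vertex $v$ is not on the $u$-to-$a$ path must fail the second-priority test. Since the line interval is contained in the $\sigma_1$-range of the subtree of $u'$, such a $v$ is a descendant of $u'$ not on the ancestor chain from $u$ to $u'$, and so lies in a side subtree rooted at some child $c$ of a path vertex $a_i$ with $c\ne a_{i-1}$, where $a_{i-1}$ is the child of $a_i$ that continues the path. The condition $\sigma_1(v)\le\sigma_1(u)$, combined with the disjointness of the subtrees of $c$ and $a_{i-1}$ in $\mathrm{DFS}_1$, forces $\mathrm{DFS}_1$ to have entered $c$ before $a_{i-1}$ at vertex $a_i$; by the reversed child order, $\mathrm{DFS}_2$ then visits $a_{i-1}$ first and completes its entire subtree before entering $c$. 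Hence $\sigma_2(c)$ strictly exceeds every $\sigma_2$ value in $a_{i-1}$'s subtree, in particular $\sigma_2(u)$, and so $\sigma_2(v)\ge\sigma_2(c)>\sigma_2(u)=s^2_j$, which is precisely the required failure of the second-priority test. Once this reverse-order DFS argument is in place, the construction defines a cost-preserving bijection between the PTC segments and the 2-PLC segments that preserves the set of feasible solutions, proving the lemma.
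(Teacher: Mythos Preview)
Your proposal is correct and takes essentially the same approach as the paper: two depth-first traversals with reversed child orderings, one giving the linear layout and the other giving the auxiliary priority coordinate, with the original PTC priority carried over as the remaining coordinate. The only differences are cosmetic---the paper indexes the DFS orderings by edges rather than by the discovery times of their lower endpoints, and swaps the two priority coordinates---and the case analysis for the reverse direction (a side-subtree vertex must have been visited earlier in the first DFS and hence later in the reversed DFS) is identical.
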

\begin{proof}
Given a rooted tree $T=(V,E)$, we perform two different depth first traversals
to get two different orderings on the edges $E$. One such ordering will define the line
of the 2-PLC instance, the other will define the first coordinates of the priority demand vectors 
of the edges.

In a depth first traversal of a tree, at every step we move from a vertex to one of its children, if any.
Our two different traversals will be defined by two different choices of moving to a child-vertex.
For every vertex $v$ of the tree, consider a total order $\sigma_v$ on its children. One such order that is convenient
to keep in mind is the following; given a drawing of the tree, the total order of the children is from left to right.
Let $\sigma^R_v$ be the {\em opposite} total order. The two depth first traversals are obtained by running 
with $\sigma_v$'s and $\sigma^R_v$'s, respectively. Figure \ref{fig:illus-dfs} illustrates the two orders 
with the ordering $\sigma_v$ at every vertex $v$ being from left-to-right, and $\sigma^R_v$ being from
right-to-left.

\begin{figure}[h]
    \begin{center}
      \includegraphics[scale=0.55]{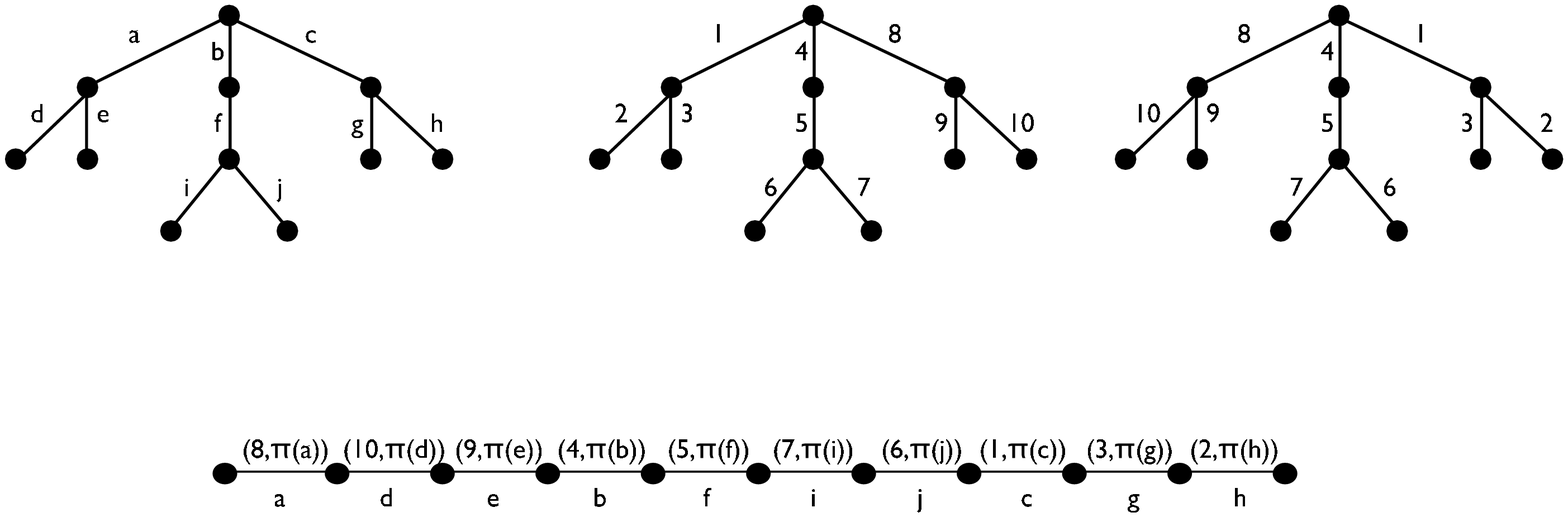}
      \caption{The left most tree is the original tree, the second and third are the two depth first traversals. The line below
      shows the line in the 2-PLC instance.}
      \label{fig:illus-dfs}
    \end{center}
\end{figure}

  Let the two traversals return orderings $\mu$ and $\mu^R$ on the
  edges of the tree. The crucial observation is the following: for any
  vertex $v$, let $(v_1,\ldots,v_k)$ be the children in the $\sigma_v$
  order; then $\mu(v,v_1) < \mu(v,v_2) < \cdots < \mu(v,v_k)$, and
  thus, $\mu^R(v,v_1) > \cdots > \mu^R(v,v_k)$.

  Now we are ready to describe the 2-PLC instance. The line is defined
  by the edges of the tree ordered w.r.t. $\mu$.  That is, the order
  of the edges is $(e_1,\ldots,e_m)$ such that $\mu(e_1) < \mu(e_2) <
  \cdots < \mu(e_m)$.  The priority demand vector of an edge $e$ of
  the tree is $(\mu^R(e),\pi_e)$. Consider a segment $j = (u,v)$ such
  that $u$ is a descendant of $v$ in the PTC instance. We identify two
  specific tree edges contained in $j$: the parent-edge
  $(u,u')$ of $u$, and the edge $(v,v')$ between node $v$ and its 
  unique child $v'$ that is on the $u,v$-path in $T$.
  By the depth-first property, we get $\mu(v,v') \leq \mu(u,u')$.
  The corresponding segment in the 2-PLC instance, also denoted as
  $j$, contains all the edges from $\mu(v,v')$ to $\mu(u,u')$. The
  priority supply vector of $j$ is $(\mu^R(u,u'),s_j)$.

\begin{claim}
  For any segment $j$, the set of edges covered by $j$ in the 2-PLC
  instance is precisely the set of edges covered in the PTC instance.
\end{claim}
\begin{proof}
  Let $e$ be an edge covered by $j$ in the PTC instance. Since $e$ is
  contained in the path from $u$ to $v$ in the tree, by property of
  depth first traversals we get, $\mu(v,v') \leq \mu(e) \leq
  \mu(u,u')$ and $\mu^R(e) \leq \mu^R(u,u')$.  The first pair of
  inequalities implies $e$ lies in the segment $j$ in the 2-PLC
  instance, the second implies that $\pi^1_e \leq s^1_j$. Since $e$ is
  covered by $j$ in the PTC, we also get $\pi^2_e = \pi_e \le s_j =
  s^2_j$. Thus, $e$ is covered by $j$ in the 2-PLC instance.

  Let $e$ be an edge covered by $j$ in the 2-PLC instance. Since $e$
  lies in $j$, we conclude $\mu(v,v') \leq\mu(e) \leq \mu(u,u')$. This
  implies either (a) $e$ lies on the path from $u$ to $v$ in the tree,
  or, (b) there is a node $w$ on the $u,v$-path in the tree, and a
  child $z$ of $w$ that is not on this path such that $e$ is contained
  in the subtree defined by edge $(w,z)$. 
  
  Note, that in case (b) the depth-first traversal for order $\sigma$
  visits edge $(z,w)$ {\em before} edge $(u,u')$. This implies that 
  the second dfs traversal for order $\sigma^R$
  visits $(z,w)$ {\em after} $(u,u')$. Since $(z,w)$ is visited before
  $e$ in both traversals, we must therefore have
  $\mu^R(e) > \mu^R(u,u')$, and this implies $s^1_j <
  \pi^1(e)$ which is impossible since $j$ covers $e$. Thus, case (b)
  is not possible, and $e$ lies on the path fro $u$ to $v$ on the
  tree.  Furthermore, we have $s_j = s^2_j \ge \pi^2_e = \pi_e$, and so
  $j$ covers $e$ in the PTC instance as well. 
\end{proof}  
\end{proof}

Now we show that $2$-PLC is a special case of $3$-dimensional
rectangle cover. This is not to hard to see.  We assume the edges of
the line are numbered $(1,2,\ldots,m)$. For edge $e$ numbered $e_i$,
we associate a point in $3$ dimensions with coordinates
$(i,\pi^1_e,\pi^2_e)$. For each segment $j = (a,b)$, we have a
rectangle associated.  In fact, these rectangles have are unbounded in
the negative $y$ and $z$ coordinates. The other $4$ bounding
half-spaces are $x \ge a$, $x \le b$, $y\le s^1(j)$ and $z \le
s^2(j)$. It is not too hard to see a rectangle corresponding to a
segment $j$ contains a point corresponding to an edge $e$ iff $j$
covers $e$ in the 2-PLC instance. This completes the proof of Theorem \ref{thm:ptc-geom}.
}
\section{Concluding Remarks}

In this paper we studied column restricted covering integer
programs. In particular, we studied the relationship between CCIPs and
the underlying 0,1-CIPs. We conjecture that the approximability of a
CCIP should be asymptotically within a constant factor of the
integrality gap of the original 0,1-CIP. We couldn't show this;
however, if the integrality gap of a PCIP is shown to be within a
constant of the integrality gap of the 0,1-CIP, then we will be
done. At this point, we don't even know how to prove that PCIPs of
special 0,1-CIPS, those whose constraint matrices are totally
unimodular, have constant integrality gap. Resolving the case of PTC
is an important step in this direction, and hopefully in resolving our
conjecture regarding CCIPs.

\bibliography{cover}
\bibliographystyle{plain}

\end{document}